\newtheorem{thm}{Theorem}
\newtheorem{lem}{Lemma}
\newtheorem{assumption}{Assumption}
\newtheorem{definition}{Definition}
 \renewcommand{\algorithmiccomment}[1]{#1}
 \title{Efficient and Modular Consensus-Free Reconfiguration for Fault-Tolerant 
 Storage}
 \author{Eduardo Alchieri, Alysson Bessani, Fabiola Greve, Joni Fraga
 \\
 \small{University of Brasilia (Brazil), University of Lisbon (Portugal),}\\
 \small{Federal University of Bahia (Brazil), Federal University of Santa Catarina (Brazil).}
 \IEEEcompsocitemizethanks{
% \IEEEcompsocthanksitem  E. Alchieri is with Department of Computer Science, University of Bras\'ilia, Brazil.
% \IEEEcompsocthanksitem  A. Bessani is with LaSIGE, Faculdade de Ci\^encias, Universidade de Lisboa, Portugal.
% \IEEEcompsocthanksitem F. Greve is with Computer Science Department (DCC), Federal University of Bahia, Brazil.
% \IEEEcompsocthanksitem J. Fraga are with Department of Automation and Systems (DAS), Federal University of Santa Catarina, Brazil.
\IEEEcompsocthanksitem This work is supported by CNPq (Brazil) through project FREESTORE (CNPq 457272/2014-7).}
}
\begin{document}

\maketitle

\begin{abstract}
Quorum systems are useful tools for implementing consistent and available storage in the presence of failures.
These systems usually comprise a static set of servers that provide a fault-tolerant read/write register accessed by a set of clients.
We consider a dynamic variant of these systems and propose \textsc{FreeStore}, a set of fault-tolerant protocols that emulates a register in dynamic asynchronous systems in which processes are able to join/leave the servers set during the execution. 
These protocols use a new abstraction called \emph{view generators}, that captures the agreement requirements of reconfiguration and can be implemented in different system models with different properties.
% Particularly interesting, we present a reconfiguration protocol that is \emph{modular, efficient, consensus-free} and \emph{loosely coupled} with read/write protocols, making it easy to adapt other static fault-tolerant register implementation to dynamic environments.
Particularly interesting, we present a reconfiguration protocol that is \emph{modular, efficient, consensus-free} and \emph{loosely coupled} with read/write protocols, 
improving the overall system performance.
\end{abstract}

%\begin{IEEEkeywords}
%Dynamic Distributed Systems; Reconfiguration; Quorum Systems; Fault-Tolerant Storage
%\end{IEEEkeywords}

\section{Introduction}

Quorum systems \cite{Gif79} are a fundamental abstraction to ensure consistency and availability of data stored in replicated servers. 
Apart from their use as building blocks of synchronization protocols (e.g., consensus \cite{Cha96,Lam98}), quorum-based protocols for read/write (r/w) register implementation are appealing due to their promising scalability and fault tolerance: the r/w operations do not need to be executed in all servers, but only in a subset (quorum) of them.
The consistency of the stored data is ensured by the intersection between any two quorums of the system.

Quorum systems were initially studied in static environments, where servers are not allowed to join or leave the system during execution~\cite{Gif79,Att95}.
This approach is not adequate for long lived systems since, given a sufficient amount of time, there might be more faulty servers than the threshold tolerated, affecting the system correctness.
Beyond that, this approach does not allow a system administrator to deploy new machines (to deal with increasing workloads) or replace old ones at runtime.
Moreover, these protocols can not be used in many systems where, by their very nature, 
the set of processes that compose the system may change during its execution (e.g., MANETs and P2P overlays). 

Reconfiguration is the process of changing the set of nodes that comprise the system. 
Previous works proposed solutions for reconfigurable storage by implementing dynamic quorum systems \cite{Gil10,Mar04}.
Those proposals rely on consensus for reconfigurations, in a way that processes agree on the set of servers (view) supporting the storage.
Although adequate, since the problem of changing views resembles an agreement problem, this approach is not the most efficient or appropriate.
Besides the costs of running the protocol, consensus is known to not be solvable in asynchronous environments~\cite{Fis85}. 
Moreover, atomic shared memory emulation can be implemented in static asynchronous systems without requiring consensus~\cite{Att95}.

Until recently, the distributed systems community did not know whether it would be possible to implement reconfigurations without agreement.
Aguilera et al~\cite{Agu11} answered this question affirmatively with \emph{DynaStore}, a protocol that implements dynamic atomic storage without relying on consensus.
DynaStore reconfigurations may occur at any time, generating a graph of views from which it is possible to identify a sequence of views in which clients need to execute their r/w operations (see Figure~\ref{fig:comparison}, left).
Unfortunately, it presents two serious drawbacks: its reconfiguration and r/w protocols are strongly tied and its performance is significantly worse than consensus-based solutions in synchronous executions, which are the norm.
The first issue is particularly important since it means that DynaStore r/w protocols (as well as other consensus-based works like RAMBO~\cite{Gil10}) explicitly deal with reconfigurations and are quite different from static r/w protocols (e.g., the ABD protocol \cite{Att95}).

% Recently, the SmartMerge~\cite{Jen15} and SpSn~\cite{Gaf15} protocols improved DynaStore by separating the reconfiguration and r/w protocols. These approaches do not fully decouple r/w and reconfiguration protocols since before execute each 
% r/w operation it is necessary to access an abstraction implemented by a set of static registers to check for updates in the system.
% 

Recently, SmartMerge~\cite{Jen15} and SpSn~\cite{Gaf15} protocols improved DynaStore by separating the reconfiguration and r/w protocols. 
However, as in DynaStore, these approaches do not fully decouple de execution of r/w and reconfiguration protocols since before execute each 
r/w it is necessary to access an abstraction implemented by a set of static single-writer multi-reader SWMR registers to check for updates in the system. 
This design decision drops significantly the system performance, even in periods without reconfigurations.

In this paper we present \textsc{FreeStore}, a set of algorithms for implementing fault-tolerant atomic~\cite{Lam86} and wait-free~\cite{Her91} storage that allows the servers' set reconfiguration at runtime. 
\textsc{FreeStore} is composed by two types of protocols: (1) r/w protocols and (2) the reconfiguration protocol.
Read/write protocols can be adapted from static register implementations (e.g., the classical ABD~\cite{Att95}, as used in this paper).
The reconfiguration protocol -- our main contribution -- is used to change the set of replicas supporting the storage.
The key innovation of the \textsc{FreeStore} reconfiguration protocol is the use of \textit{view generators}, a new abstraction that captures the agreement requirements of reconfiguration protocols.
We provide two implementations of view generators, one based on consensus and other consensus-free, and compare how efficiently they can be used to solve reconfiguration.

%%%%%%%%%%%%%%%%%%%%%%%%%%%%%%%%%%%%%%%
\begin{figure}[!ht]
\begin{center}
%\subfigure[\footnotesize{Dynastore.}]{\label{fig:dyn}{
  \includegraphics[width=0.48\columnwidth]{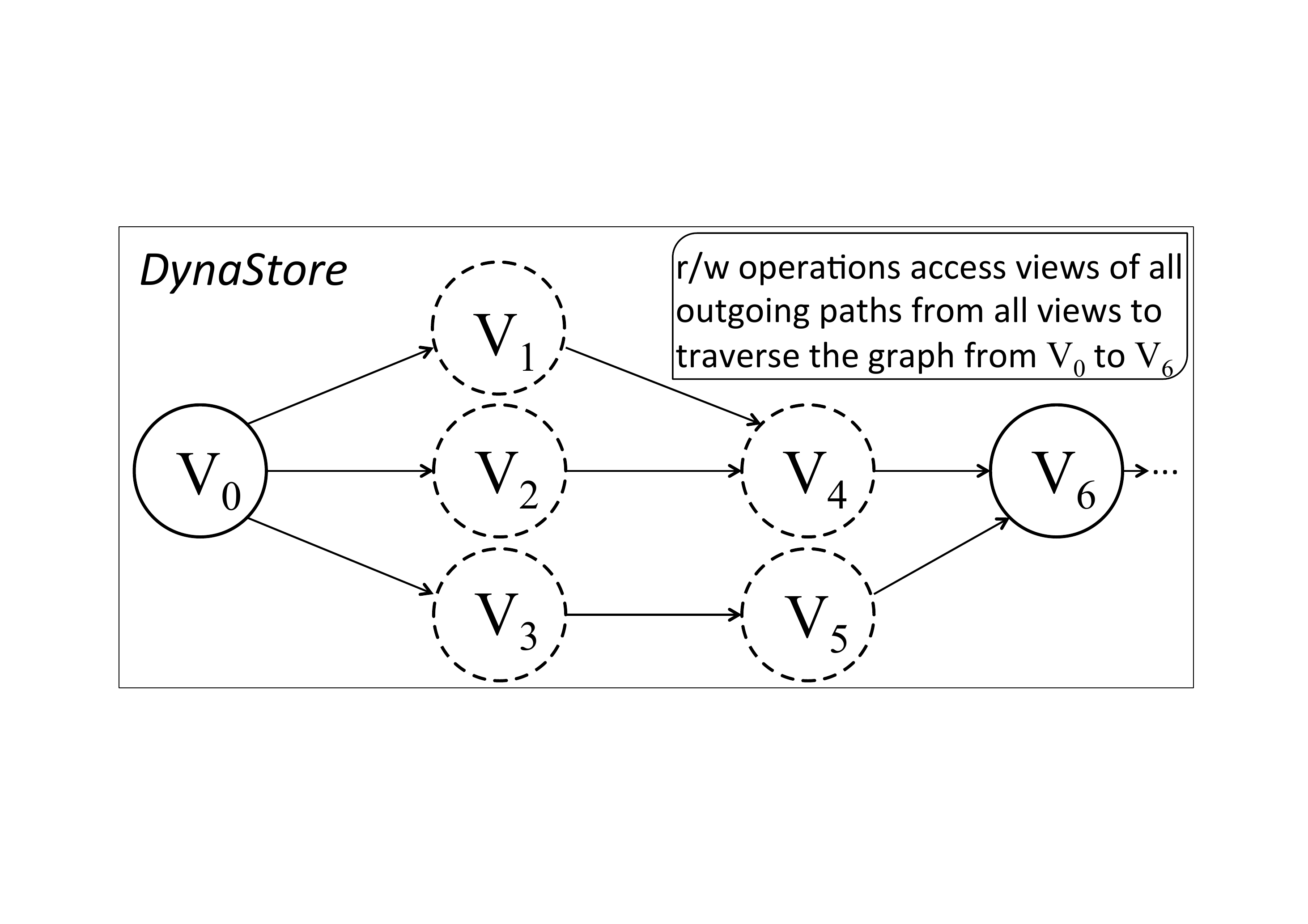}
%}}
%  \hspace{5mm}
%\subfigure[\footnotesize{\textsc{FreeStore}.}]{\label{fig:free}{
  \includegraphics[width=0.48\columnwidth]{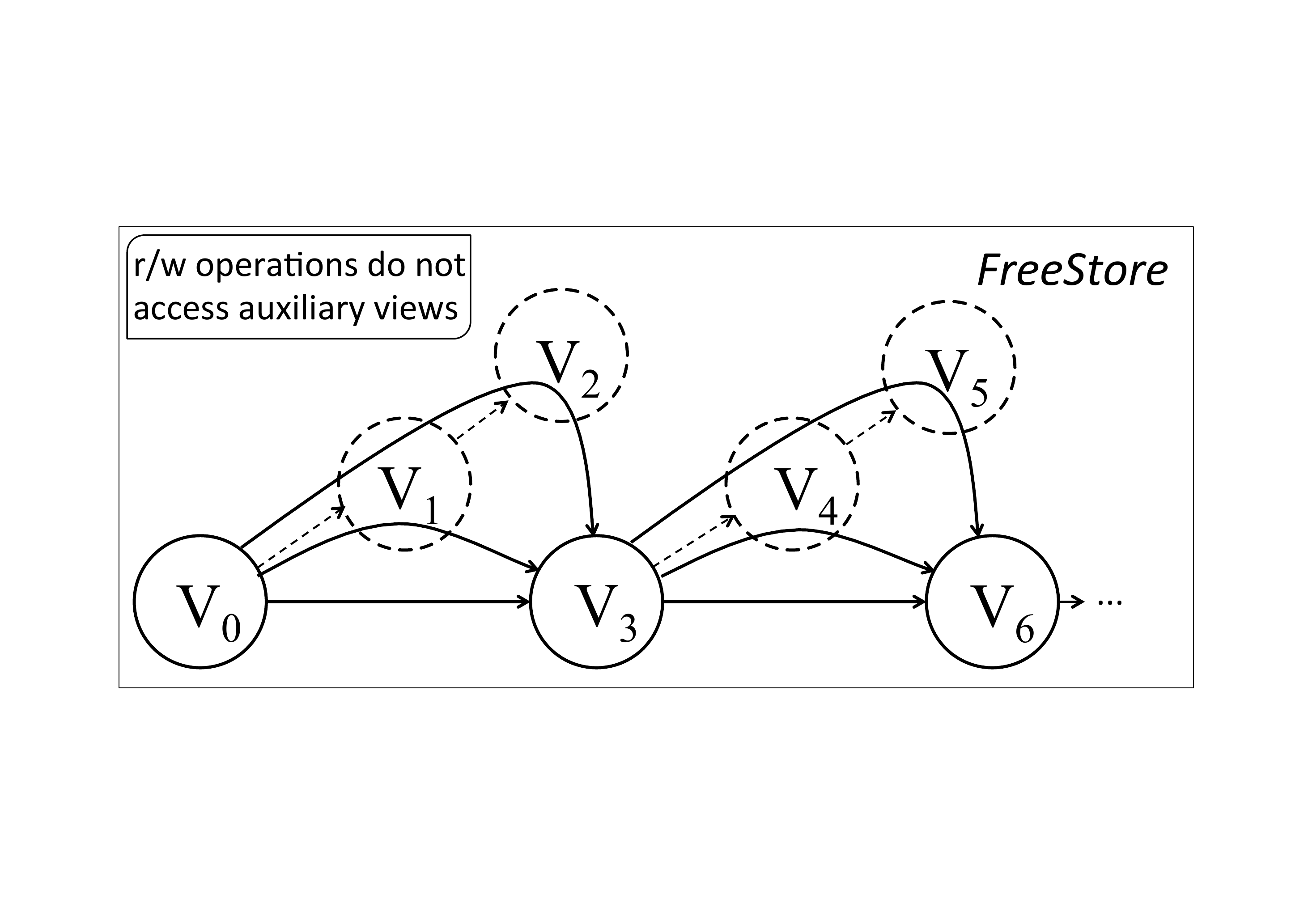}
%}}
%  \includegraphics[width=0.23\columnwidth]{figs/legend}
\end{center}
% \vspace{-6mm}
\caption{\small View convergence strategies of DynaStore~\cite{Agu11} (left) and \textsc{FreeStore} (right).
Dotted circles represent the auxiliary/non-established views that the system may experience during a reconfiguration. Solid circles represent the established/installed views process must converge to.}
\label{fig:comparison}
%  \vspace{-1mm}
\end{figure}
%%%%%%%%%%%%%%%%%%%%%%%%%%%%%%%%%%%%%%%

\textsc{FreeStore} improves the state of the art in at least three aspects: \emph{modularity}, \emph{efficiency} and \emph{simplicity}.
The modularity of the proposed protocols is twofold: it separates the reconfiguration from the r/w protocols, allowing other static storage protocols (e.g., \cite{Dut10,Fan03}) to be adapted 
to our dynamic model, and introduces the notion of view generators, capturing the agreement requirements of a reconfiguration protocol.
In terms of performance, both the r/w and reconfiguration protocols of \textsc{FreeStore} require less communication steps than their counterparts from previous works, either consensus-based~\cite{Gil10} or 
consensus-free~\cite{Agu11,Jen15,Gaf15}.
In particular, \textsc{FreeStore}' consensus-free reconfiguration requires less communication steps than other consensus-based reconfiguration protocols in the best case, matching the intuition that a 
consensus-free approach should be faster than another that rely on consensus.
% Finally, the consensus-free variant of \textsc{FreeStore} introduces a novel reconfiguration strategy that reduces the number of intermediary views to reach an installed view with all requested
Finally, the consensus-free variant of \textsc{FreeStore} introduces a novel reconfiguration strategy that reduces the number of intermediary views to reach an installed view with all requested 
updates (see Figure~\ref{fig:comparison}, right).
This strategy is arguably easier to understand than the one used in the previous works, shedding more light on the study of consensus-free reconfiguration protocols.

%\textsc{FreeStore} aims to bring consensus-free storage reconfiguration from a \emph{groundbreaking theoretical result} (which was perhaps too complex and inefficient to be used in practice) to a practical research topic that can originate novel fault-tolerant systems in the near future.
 
In summary, the main contributions of this paper are:

\begin{enumerate}

% \item  It introduces the notion of \emph{view generators}, an abstraction that captures the agreement requirements of a storage reconfiguration protocol and provides two implementations for such abstraction: the consensus-based \emph{perfect view generator} and the consensus-free \emph{live view generator};
\item  It introduces the notion of \emph{view generators}, an abstraction that captures the agreement requirements of a storage reconfiguration protocol and provides two implementations: the consensus-based \emph{perfect view generator} and the consensus-free \emph{live view generator};

\item It shows that \textit{safe} and \textit{live} dynamic fault-tolerant atomic storage can be implemented using the proposed view generators and discusses the tradeoffs between them;
% 
% \item It presents \textsc{FreeStore}, the first consensus-free dynamic atomic storage system in which the reconfiguration protocol is fully decoupled from the r/w protocols, increasing the system performance and 
%  making it easy to adapt other static fault-tolerant register implementation to dynamic environments;

\item It presents \textsc{FreeStore}, the first dynamic atomic storage system in which the reconfiguration protocol \textit{(a)}~can be configured to be consensus-free or consensus-based and 
 \textit{(b)}~is fully decoupled from the r/w protocols, increasing the system performance and 
 making it easy to adapt other static fault-tolerant register implementation to dynamic environments;

\item It presents a detailed comparison of \textsc{FreeStore} with previous protocols~\cite{Gil10,Agu11,Jen15,Gaf15}, showing that \textsc{FreeStore} is faster (requires less communication steps) 
than these previous systems.

\end{enumerate}

% \subsubsection{Paper organization}
% Section \ref{def-prelim} presents some preliminary definitions.
% Section \ref{geradores} introduces the view generators abstraction.
% Section \ref{secao_reconfiguracao} describes the \textsc{FreeStore} reconfiguration protocols.
% Section \ref{rw_prt} describes how classical r/w protocols can be adapted to work with our reconfiguration protocols.
% Section \ref{freestore_proofs} presents the correctness proofs for \textsc{FreeStore}.
% Finally, section \ref{discussao} discusses other interesting aspects of \textsc{FreeStore}.

% \vspace{-2mm}
\section{Preliminary Definitions}
\label{def-prelim} 

% \vspace{-1mm}
\subsection{System Model} 
\label{q_modelo}

We consider a fully connected distributed system composed by a universe of processes $U$, that can be divided in two non-overlapping subsets: an infinite set of servers $\Pi = \{1,2,...\}$; and an infinite set of clients $C = \{c_1,c_2,...\}$.
Clients access the storage system provided by a subset of the servers (a \emph{view}) by executing read and write operations.
Each process (client or server) of the system has a unique identifier.
Servers and clients are prone to \emph{crash failures}.
Crashed processes are said to be \emph{faulty}.
A process that is not faulty is said to be \emph{correct}.
%Process arrivals follow the infinite arrival model with bounded (and unknown) concurrency \cite{Agu04}.
Moreover, there are \emph{reliable channels} connecting all pairs of processes.

We assume an \emph{asynchronous distributed system} in which there are no bounds on message transmission delays and processing times.
However, each server of the system has access to a local clock used to trigger reconfigurations. 
These clocks are not synchronized and do not have any bounds on their drifts, being nothing more than counters that keep increasing.
Besides that, there is a real-time clock not accessed by the processes, used in definitions and proofs.

% \vspace{-2mm}
\subsection{Dynamic Storage Properties}
\label{dyn_def}

During a dynamic system execution, a sequence of views is installed to account for replicas' joins and leaves.
In the following we describe some preliminary definitions and the properties satisfied by \textsc{FreeStore}.

% \vspace{-3mm}
\paragraph*{Updates} We define $\mathit{update} = \{+,-\}$ $\times$ $\Pi$, where the tuple $\langle +,i \rangle$ (resp. $\langle -,i \rangle$) indicates that server $i$ asked to join (resp. leave) the system.
A \textit{reconfiguration} procedure takes into account updates to define a new system's configuration, which is represented by a \emph{view}.

%  \vspace{-3mm}
\paragraph*{Views} 
A view $v$ is composed by a set of \textit{updates} (represented by $\mathit{v.entries}$) and its associated membership (represented by $\mathit{v.members}$).
Consequently, $\mathit{v.members} = \{i \in \Pi \colon \langle +,i \rangle \in \mathit{v.entries} \wedge \langle -,i \rangle \not\in \mathit{v.entries}\}$.
To simplify the notation, we sometimes use $i \in v$ and $|v|$ meaning $i \in \mathit{v.members}$ and $|\mathit{v.members}|$, respectively.
Notice that a server $i$ can join and leave the system only once, but this condition can be relaxed in practice if we add an epoch number on each reconfiguration request.

We say a view $v$ is \emph{installed} in the system if some correct server $i \in v$ considers $v$ as its \emph{current view} and answers client r/w operations on this view. 
On the other hand, we say that the previous view, installed before $v$, was \emph{uninstalled} from the system.
At any time $t$, we define $V(t)$ to be the \emph{most up-to-date view} (see definition below) installed in the system.
We consider that $V(t)$ remains \textit{active} from the time it is installed in the system until \emph{all correct servers} of another most up-to-date view $V(t'), t' > t$, installs $V(t')$.

%  \vspace{-3mm}
\paragraph*{Comparing views}
We compare two views $v_1$ and $v_2$ by comparing their $entries$.
We use the notation $v_1 \subset v_2$ and $v_1 = v_2$ as an abbreviation for $v_1.entries \subset v_2.entries$ and $v_1.entries = v_2.entries$, respectively.
If $v_1 \subset v_2$, then $v_2$ is \textit{more up-to-date} than $v_1$.

%  \vspace{-3mm}
\paragraph*{Bootstrapping}  
We assume a non-empty initial view $V(0)$ which is known to all processes. 
At system startup, each server $i \in V(0)$ receives an initial view $v_0 = \{\langle +,j\rangle \colon j \in V(0)\}$.
% At system startup, each server $i \in V(0)$ receives a view $v_0 = \{\langle +,j\rangle \colon j \in V(0)\}$.

% \vspace{-4mm}
\paragraph*{Views vs. r/w operations}  
At any time $t$, r/w operations are executed only in $V(t)$.
When a server $i$ asks to \emph{join} the system, these operations are disabled on it until an \emph{enable operations} event occurs.
After that, $i$ remains able to process r/w operations until it asks to leave the system, which will happen when a \emph{disable operations} 
event occurs.%\footnote{A server that leaves the system can still receive messages, but it will ignore them since its r/w operations are disabled.}

% At any time $t$, a correct server $i$ will only execute r/w operations issued by clients if $i \in V(t)$. 
% Moreover, $V(t)$ is the only view in the system where clients execute read/write operations at time $t$.
% When a server $i$ asks to \emph{join} the system, these operations are disabled on it until an \emph{enable operations} event occurs.
% After that, $i$ remains able to answer r/w operations until it asks to leave the system, which will happen when a \emph{disable operations} event occurs.\footnote{A server that leaves the system can still receive messages, but it will ignore them since its operations are disabled.}

% \paragraph*{FreeStore Properties}  

%   \vspace{-2mm}
\begin{definition}[\textsc{FreeStore} properties]
\label{def_freestore}
\textsc{FreeStore} satisfies the following properties:

%\textsc{Storage properties}
\begin{itemize}
%  \vspace{-2mm}
\item \textbf{Storage Safety}: The r/w protocols satisfy the safety properties of an atomic r/w register~\cite{Lam86}.

%  \vspace{-2mm}
\item \textbf{Storage Liveness}: Every r/w operation executed by a correct client eventually completes.

%  \vspace{-2mm}
%OLDER: \item \textbf{Reconfiguration -- Join Safety}: If a correct server $i \not\in V(t)$ invokes the $\mathit{join}$ operation at time $t$, there will be some $t' > t$ such that $i \in V(t')$.
\item \textbf{Reconfiguration -- Join Safety}:
%(NOT SAFETY)If a correct server $i$ invokes the $\mathit{join}$ operation, then processes will install a view $v$ such that $i \in v$.
If a server $j$ installs a view $v$ such that $i \in v$, then server $i$ has invoked the $\mathit{join}$ operation or $i$ is member of the initial view.

%  \vspace{-2mm}
%OLDER: \item \textbf{Reconfiguration -- Leave Safety}: If a correct server $i \in V(t)$ invokes the $\mathit{leave}$ operation at time $t$, there will be some $t' > t$ such that $i \not\in V(t')$.
\item \textbf{Reconfiguration -- Leave Safety}:
%(NOT SAFETY) If a correct server $i$ invokes the $\mathit{leave}$ operation, then processes will install a view $v$ such that $i \not\in v$.
If a server $j$ installs a view $v$ such that $i \not\in v \wedge (\exists v' : i \in v' \wedge v' \subset v)$, then server $i$ has invoked the $\mathit{leave}$ 
operation.\footnote{We can relax this property and adapt our protocol (Section \ref{secao_reconfiguracao}) to allow that other process issues the leave operation on behalf of a crashed process.}

%  \vspace{-2mm}
\item \textbf{Reconfiguration -- Join Liveness}: Eventually, the \emph{enable operations} event occurs at every correct server that has invoked a $\mathit{join}$ operation.

%  \vspace{-2mm}
\item \textbf{Reconfiguration -- Leave Liveness}: Eventually, the \emph{disable operations} event occurs at every correct server that has invoked a $\mathit{leave}$ operation.

\end{itemize}
\end{definition}

% \vspace{-3mm}
\subsection{Additional Assumptions for Dynamic Storage}
\label{dyn_ass}
% 
% Dynamic fault-tolerant storage protocols require some additional assumptions~\cite{Gil10,Mar04,Agu11,Jen15,Gaf15}.
% In the following we state these assumptions.

Dynamic fault-tolerant storage protocols~\cite{Gil10,Mar04,Agu11,Jen15,Gaf15} require the following additional assumptions.

% \vspace{-1mm}
\begin{assumption}[Fault threshold]
\label{f_lim}
For each view $v$, we denote $v.f$ as the number of faults tolerated in $v$ and assume that $v.f \leq \lfloor \frac{|v.members|-1}{2}\rfloor$.
\end{assumption}
%\vspace{-2mm}

% \vspace{-4mm}
\begin{assumption}[Quorum size]
\label{q_size}
For each view $v$, we assume quorums of size $v.q = \lceil \frac{|v.members|+1}{2}\rceil$.
\end{assumption}
% \vspace{-2mm}

These two previous assumptions are a direct adaptation of the optimal resilience for fault-tolerant 
quorum systems~\cite{Att95} to account for multiple views and only need to hold for views present in the generated view sequences (see Section~\ref{geradores}).
% and need to hold while $v$ is active.

% \vspace{-2mm}
\begin{assumption}[Gentle leaves]
\label{g_leaves}
A correct server $i \in V(t)$ that asks to leave the system at time $t$ remains in the system until it knows that a more up-to-date view $V(t'), t'>t, i \not\in V(t')$ is installed in the system.  
\end{assumption}
% \vspace{-2mm}

The \emph{gentle leaves} assumption ensures that a \textit{correct} leaving server will participate in the reconfiguration protocol that installs the new view without itself. 
Thus, it cannot leave the system before a new view accounting for its removal is installed.
%If the server is faulty, nothing can be ensured.
% This is required for any dynamic system: departing replicas need to stay available to transfer their state to arriving replicas~\cite{Gil10,Mar04,Agu11,Jen15,Gaf15}. 
Other dynamic systems require similar assumption: departing replicas need to stay available to transfer their state to arriving replicas. 
Notice however that the fault threshold accounts for faults while the view is being reconfigured. 

% \vspace{-2mm}
\begin{assumption}[Finite reconfigurations]
\label{finite_rec}
The number of updates requested in an execution is finite.
\end{assumption}
% \vspace{-2mm}

This assumption is fundamental to guarantee that r/w operations always terminate.
As in other dynamic storage systems, it ensures that a client will restart phases of an r/w due to reconfigurations a finite number of times, and thus, eventually such operation will complete.
In practice, the updates proposed could be infinite, as long as each r/w is concurrent with a finite number of reconfigurations.

%\vspace{-1mm}
\section{View Generators}
\label{geradores}

\emph{View generators} are distributed oracles used by servers to generate sequences of new views for system reconfiguration.
This module aims to capture the agreement requirements of reconfiguration algorithms.
In order to be general enough to be used for implementing consensus-free algorithms, such requirements are reflected in the sequence of generated views, and not directly on the views.
This happens because, as described in previous works~\cite{Gil10,Agu11,Jen15,Gaf15,Agu10}, the key issue with reconfiguration protocols is to ensure that the sequence of (possibly conflicting) views generated during a reconfiguration procedure will converge to a single view with all requested view updates.

For each view $v$, each server $i \in v$ associates a view generator $\mathcal{G}^v_i$ with $v$ in order to generate a succeeding sequence of views (possibly with a single view).
Server $i$ interacts with a view generator through two primitives:
(1) $\mathcal{G}^v_i.\mathit{gen\_view(seq)}$, called by $i$ to propose a new view sequence $\mathit{seq}$ to update $v$; and
(2) $\mathcal{G}^v_i.\mathit{new\_view(seq')}$, a callback invoked by the view generator to inform $i$ that a new view sequence $\mathit{seq'}$ was generated for succeeding $v$.
An important remark about this interface is that there is no one-to-one relationship between $\mathit{gen\_view}$ and $\mathit{new\_view}$: a server $i$ may not call the first but receive 
several upcalls on the latter for updating the \emph{same view} (e.g., due to reconfigurations started by other servers).
However, if $i$ calls $\mathcal{G}^v_i.\mathit{gen\_view(seq)}$, it will eventually receive at least one upcall to $\mathcal{G}^v_i.\mathit{new\_view(seq')}$.% (see the Termination property bellow).

Similarly to other distributed oracles (e.g., failure detectors~\cite{Cha96}), view generators can implement these operations in different ways, according to the different environments they are designed to operate (e.g., synchronous or asynchronous systems).
However, in this paper we consider view generators satisfying the following properties.

\begin{definition}[\textsc{View Generators}] 
\label{def_view_gen}

A generator $\mathcal{G}^v_i$ (associated with $v$ in server $i$) satisfy the following properties: 
% \vspace{-5mm}

\begin{itemize}

\item  \textbf{Accuracy}: we consider two variants of this property:
% \vspace{-2mm}

\begin{itemize}

% \item \textbf{Strong Accuracy}: for all $i \in v$, there is a view sequence $\mathit{seq}$ such that any upcall $\mathcal{G}^v_i.\mathit{new\_view(seq_i)}$ generates a view sequence $\mathit{seq}_i = \mathit{seq}$.

%\item \textbf{Strong Accuracy}: if $i$ received the upcall $\mathcal{G}^v_i.\mathit{new\_view(seq_i)}$, then all servers $j \in v$ receive upcalls $\mathcal{G}^v_j.\mathit{new\_view(seq_j)}$ such that $\mathit{seq}_i = \mathit{seq_j}$.
% \vspace{-2mm}
\item \textbf{Strong Accuracy}: for any $i,j \in v$, if $i$ receives an upcall $\mathcal{G}^v_i.\mathit{new\_view(seq_i)}$ and $j$ receives an upcall $\mathcal{G}^v_j.\mathit{new\_view(seq_j)}$, then $\mathit{seq_i} = \mathit{seq_j}$.

% \item \textbf{Weak Accuracy}: for any two sequences $seq$ and $seq'$ produced by some $\mathcal{G}^v$, either $seq \subseteq seq'$ or $seq' \subset seq$.

%\item \textbf{Weak Accuracy}: if $i$ received the upcall $\mathcal{G}^v_i.\mathit{new\_view(seq_i)}$, then all servers $j \in v$ receive upcalls $\mathcal{G}^v_j.\mathit{new\_view(seq_j)}$ such that either $seq_i \subseteq seq_j$ or $seq_j \subseteq seq_i$.

\item \textbf{Weak Accuracy}: for any $i,j \in v$, if $i$ receives an upcall $\mathcal{G}^v_i.\mathit{new\_view(seq_i)}$ and $j$ receives an upcall $\mathcal{G}^v_j.\mathit{new\_view(seq_j)}$, then either $seq_i \subseteq seq_j$ or $seq_j \subset seq_i$.

\end{itemize}

% \vspace{-2mm}
\item  \textbf{Non-triviality}: for any upcall $\mathcal{G}^v_i.\mathit{new\_view(seq_i)}$, $\forall w \in \mathit{seq_i}$, $v \subset w$.

\item \textbf{Termination}: if a correct server $i \in v$ calls $\mathcal{G}^v_i.\mathit{gen\_view(seq)}$, then 
eventually it will receive an upcall $\mathcal{G}^v_i.\mathit{new\_view(seq_i)}$.

\end{itemize}
\end{definition}

Accuracy and Non-triviality are safety properties while Termination is related with the liveness of view generation. 
Furthermore, the Non-triviality property ensures that generated sequences contains only updated views.
% 
%The two types of Accuracy plus the satisfaction of Termination lead to the definition of four classes of view generators, as shown in Figure \ref{classes_geradores}.
%
%%%%%%%%%%%%%%%%%%%%%%%%%%%%%%%%%%%%%%%
%\begin{figure}[!ht]
% \centering
% \begin{footnotesize}
% \begin{tabular}{c|c|c}
%  \hline                & \multicolumn{2}{c}{\textsf{\textbf{Accuracy}}} \\ \cline{2-3}
%  \textsf{\textbf{Termination}} & \textsf{Strong}          & \textsf{Weak}   \\ \hline
%  \hline \textsf{Yes}        & \textit{Perfect}       & \textit{Live}   \\
%                                     & $\mathcal{P}$           & $\mathcal{L}$    \\
%  \hline \textsf{No}         & \textit{Strong} & \textit{Weak}         \\                               
%                                     & $\mathcal{S}$           & $\mathcal{W}$     \\  \hline
% \end{tabular}
% \end{footnotesize}
% \caption{View generators classes.}
% \label{classes_geradores}
%\end{figure}
%%%%%%%%%%%%%%%%%%%%%%%%%%%%%%%%%%%%%%%%
%
%Generators of classes $\mathcal{P}$ and $\mathcal{S}$ are considered safe because they always generate a single sequence of views.
%Generators of classes $\mathcal{P}$ and $\mathcal{L}$ are considered live because they always generate a sequence of view.
%The generators of class $\mathcal{W}$ are weak because they neither generate a single sequence of views nor ensure termination.
% 
Using the two variants of accuracy we can define two types of view generators: $\mathcal{P}$, the \emph{perfect view generator}, that satisfies \emph{Strong Accuracy} and $\mathcal{L}$, the \emph{live view generator}, that only satisfies \emph{Weak Accuracy}.
Our implementation of $\mathcal{P}$ requires consensus, while $\mathcal{L}$ can be implemented without such strong synchronization primitive.

\subsection{Perfect View Generators -- $\mathcal{P}$} 
\label{ger_perfect}

\textit{Perfect view generators} ensure that a single sequence of views will be eventually generated at all servers of $v$ for updating such view.
Our implementation for $\mathcal{P}$ (Algorithm~\ref{alg_perfect}) 
uses a deterministic Paxos-like consensus protocol~\cite{Lam98} that assumes a partially synchronous system model~\cite{Dwo88}.
The algorithm is straightforward:
when $\mathcal{G}^v_i.\mathit{gen\_view(seq)}$ is called by server $i$, it first verifies if $\forall w \in seq: v \subset w$ and, if it is the case, propose $seq$ as its value for 
the consensus reconfiguring $v$.
This verification aims to ensure that Non-triviality is respected, i.e., only updated views will be proposed. 
When a value $seq'$ is decided in the consensus associated with view $v$, an upcall to $\mathcal{G}^v_i.\mathit{new\_view(seq')}$ is executed.
The Termination and Strong Accuracy properties comes directly from the Agreement and Termination properties of the underlying consensus algorithm~\cite{Cha96,Lam98}.

%Under the Paxos framework, any server of $v$ can be a proposer (calling $\mathit{Paxos}^v.\mathit{propose}(seq)$), but all servers of $v$ are acceptors and learners 
%(they receive an upcall $\mathit{Paxos}^v.\mathit{learn}(seq')$, even if they did not propose anything).
%We also assume that once a value is learned for a Paxos instance associated with $v$, this value is locked and no other value will be learned in this algorithm.
%Notice that a server only proposes a sequence if the new views are strict extensions of $v$.
%
%%%%%%%%%%%%%%%%%%%%%%%%%%%%%%%%%%%%%%%%
\renewcommand{\algorithmiccomment}[1]{#1}
 \algsetup{
    linenosize=\scriptsize,
    linenodelimiter=)
 }

%  \vspace{-2mm}
\begin{algorithm}[ht!]
\caption{$\mathcal{P}$ associated with $v$ - server $i \in v$.}
\label{alg_perfect} 
% \begin{multicols}{2}
% \begin{spacing}{0.8}
%
\footnotesize{
%\scriptsize{
\begin{algorithmic}[1]
\renewcommand{\algorithmicrequire}{\textbf{procedure}}
%\vspace{1mm}
% \renewcommand{\algorithmicrequire}{\textbf{upon}}
\REQUIRE $\mathcal{G}^v_i.\mathit{gen\_view(seq)}$ 
%\IF[\hfill // only updated views are proposed]{$\mathit{\forall w \in seq: v \subset w}$}
\STATE \textbf{if} $\mathit{\forall w \in seq: v \subset w}$ \textbf{then} $\mathit{Paxos}^v.\mathit{propose}(seq)$ %\COMMENT{\hfill //starts a consensus in $v$}
%\ENDIF
%\vspace{1mm}
\renewcommand{\algorithmicrequire}{\textbf{upon}}
\REQUIRE $\mathit{Paxos}^v.\mathit{learn}(seq')$
%\COMMENT{\hfill // when decides by $seq'$ ...}
\STATE $\mathcal{G}^v_i.\mathit{new\_view(seq')}$ %\COMMENT{\hfill // ... inform this to process.}
\end{algorithmic}
}
\end{algorithm}
%%%%%%%%%%%%%%%%%%%%%%%%%%%%%%%%%%%%%%%%

\paragraph*{Correctness} The following theorem proves that Algorithm~\ref{alg_perfect} satisfies the properties of $\mathcal{P}$.

% \vspace{-4.5mm}
\begin{thm} \label{thm_perfect}
Algorithm \ref{alg_perfect} implements a perfect view generator in a partially synchronous system model.
\end{thm}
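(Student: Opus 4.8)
The plan is to establish the three defining properties of a view generator from Definition~\ref{def_view_gen} --- Non-triviality, Strong Accuracy, and Termination --- one at a time, reducing each to a corresponding guarantee (Validity/Integrity, Agreement, Termination) of the Paxos instance $\mathit{Paxos}^v$ executed among the members of $v$. The only code paths to consider are line~1 (proposing to $\mathit{Paxos}^v$ after the guard $\forall w \in seq \colon v \subset w$) and line~2 (issuing $\mathcal{G}^v_i.\mathit{new\_view}(\cdot)$ upon learning a value), so each reduction should be short.

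For Non-triviality I would use the Validity property of consensus: any value $seq'$ delivered via $\mathit{Paxos}^v.\mathit{learn}$ was previously proposed by some server $j \in v$ through $\mathit{Paxos}^v.\mathit{propose}$; by the guard in line~1, $j$ proposes $seq'$ only after checking $\forall w \in seq' \colon v \subset w$, and since an upcall $\mathcal{G}^v_i.\mathit{new\_view}(seq_i)$ fires only in response to $\mathit{Paxos}^v.\mathit{learn}(seq_i)$, every such $seq_i$ satisfies $\forall w \in seq_i \colon v \subset w$. Strong Accuracy is then immediate from Agreement: a single value is decided in $\mathit{Paxos}^v$, so any two servers $i,j \in v$ that receive upcalls $\mathcal{G}^v_i.\mathit{new\_view}(seq_i)$ and $\mathcal{G}^v_j.\mathit{new\_view}(seq_j)$ must both have learned that value, whence $seq_i = seq_j$.

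For Termination, suppose a correct server $i \in v$ calls $\mathcal{G}^v_i.\mathit{gen\_view}(seq)$. I would first observe that the $seq$ supplied by the reconfiguration protocol (Section~\ref{secao_reconfiguracao}) always extends $v$ with fresh updates, so the guard of line~1 holds and $i$ invokes $\mathit{Paxos}^v.\mathit{propose}(seq)$; a call failing the guard is vacuous and need not be considered. By Assumption~\ref{f_lim}, a majority of the members of $v$ are correct, which in the partially synchronous model suffices for the Paxos-like protocol to satisfy Termination, so some value $seq'$ is eventually decided; as channels are reliable, $\mathit{Paxos}^v.\mathit{learn}(seq')$ eventually fires at $i$ (indeed at every correct member of $v$), which then executes line~2 and delivers the required upcall.

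The routine parts are Non-triviality and Strong Accuracy, which are one-line reductions. The delicate point is Termination: one must reconcile the conditional proposal in line~1 with the obligation that a caller always receives an upcall, and must justify that consensus among the (crash-prone) members of $v$ actually terminates --- precisely where Assumption~\ref{f_lim} (majority of $v$ correct) and the partially synchronous model enter. A secondary subtlety worth stating explicitly is that a server may receive $\mathit{new\_view}$ upcalls without ever calling $\mathit{gen\_view}$ (reconfigurations initiated by others); this is consistent with Definition~\ref{def_view_gen}, whose Accuracy and Non-triviality clauses constrain only servers that do receive upcalls, while Termination constrains only those that do call $\mathit{gen\_view}$.
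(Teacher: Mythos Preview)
Your proposal is correct and follows essentially the same approach as the paper: reduce Non-triviality to consensus Validity plus the guard in line~1, Strong Accuracy to Agreement, and Termination to consensus Termination under the partially synchronous model. Your treatment is in fact more careful than the paper's three-line proof, in that you explicitly discuss why the guard in line~1 does not block Termination and where Assumption~\ref{f_lim} enters; the paper simply cites the three consensus properties without unpacking these points.
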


% \vspace{-1mm}
\begin{proof}
\textit{Termination} and \textit{Strong Accuracy} are ensured by classical consensus properties, namely,  Termination and Agreement, respectively.
\textit{Non-Triviality} is ensured by the verification of line 1 and the consensus Validity property (the decided value is one of the proposals).
\end{proof}

\subsection{Live View Generators -- $\mathcal{L}$}
\label{ger_live}

Algorithm~\ref{alg_live} presents an implementation for \textit{Live view generators} ($\mathcal{L}$).
Our algorithm does not require a consensus building block, being thus implementable in asynchronous systems.
On the other hand, it can generate different sequences in different servers for updating the same view.
We bound such divergence by exploiting Assumptions \ref{f_lim} and \ref{q_size}, which ensure that any quorum of the system will intersect in at least one correct server, making any generated sequence  for updating $v$ be contained in any other posterior sequence generated for $v$ (Weak Accuracy). 
Furthermore, the servers keep updating their proposals until a quorum of them converges to a sequence containing all proposed views (or combinations of them), possibly generating some intermediate sequences before this convergence.

To generate a new sequence of views, a server $i \in v$ uses an auxiliary function $\mathit{most\_updated}$ to get the most up-to-date view in a sequence of views (i.e., the view that is not contained in any other view of the sequence). 
Moreover, each server keeps two local variables: $\mathtt{SEQ}^v$ -- the last view sequence proposed by the server -- and $\mathtt{LCSEQ}^v$ -- the last sequence this server converged.

When server $i \in v$ starts its view generator, it first verifies (1) if it already made a proposal for updating $v$ and (2) if the sequence being proposed contains only updated views (line~1).
If these two conditions are met, it sends its proposal to the servers in $v$ (lines 2-3).

\vspace{-2.5mm}
%%%%%%%%%%%%%%%%%%%%%%%%%%%%%%%%%%%%%%%
\begin{algorithm}[ht!]

\caption{$\mathcal{L}$ associated with $v$ - server $i \in v$.}
\label{alg_live} 
% \begin{multicols}{2}
% \begin{spacing}{0.8}
\footnotesize{
%\scriptsize{
\begin{algorithmic}[1]

\renewcommand{\algorithmicrequire}{\textbf{functions:}}
\REQUIRE \COMMENT{Auxiliary functions}\\
$\mathit{most\_updated}(seq) \equiv w \colon (w \in seq) \wedge (\nexists w' \in seq \colon w \subset w')$  

\vspace{0.5mm}
\renewcommand{\algorithmicrequire}{\textbf{variables:}}
\REQUIRE \COMMENT{Sets used in the protocol}\\
$\mathtt{SEQ}^v \leftarrow \emptyset$ \hfill // proposed view sequence\\ 
$\mathtt{LCSEQ}^v \leftarrow \emptyset$ \hfill // last converged view sequence known\\

\renewcommand{\algorithmicrequire}{\textbf{procedure}}
\vspace{0.5mm}
\REQUIRE $\mathcal{G}^v_i.\mathit{gen\_view(seq)}$
\IF{$\mathtt{SEQ}^v = \emptyset$ $\wedge$  $\mathit{ \forall w \in seq: v \subset w}$}
\STATE $\mathit{\mathtt{SEQ}^v \leftarrow seq}$
\STATE $\forall j \in \mathit{v}, send \langle\mbox{SEQ-VIEW}, \mathtt{SEQ}^v \rangle$ to $j$
\ENDIF

\vspace{0.5mm}
\renewcommand{\algorithmicrequire}{\textbf{upon receipt of}}
\REQUIRE $\mathit{\langle\mbox{SEQ-VIEW},seq \rangle}$ from $j$
\IF{$\mathit{\exists w \in seq \colon w \not\in  \mathtt{SEQ}^v}$}  
%  \IF[\hfill //there are \textit{conflicting views} in the sequences]{$\mathit{\exists w, w' \colon w \in seq \wedge w' \in \mathtt{SEQ}^v \wedge w \not\subset w' \wedge w' \not\subset w}$}
\IF{$\mathit{\exists w, w' \colon w \in seq \wedge w' \in \mathtt{SEQ}^v \wedge w \not\subset w' \wedge w' \not\subset w}$}
 %[\hfill // $lcseq$ is more up-to-date than $\mathtt{LCSEQ}^v$]
%   \IF{$\mathit{most\_updated}(\mathtt{LCSEQ}) \subset \mathit{most\_updated}(lcseq)$}
%    \STATE $\mathtt{LCSEQ} \leftarrow \mathit{lcseq}$
%   \ENDIF
   \STATE $w \leftarrow \mathit{most\_updated}(\mathtt{SEQ}^v)$
   \STATE $w' \leftarrow \mathit{most\_updated}(seq)$
   \STATE $\mathit{\mathtt{SEQ}^v \leftarrow \mathtt{LCSEQ}^v \cup \{w.entries \cup w'.entries \}}$ 
%  \ELSE[\hfill // the sequences can be composed in a new sequence]
 \ELSE
  \STATE $\mathit{\mathtt{SEQ}^v \leftarrow \mathtt{SEQ}^v \cup seq}$
 \ENDIF
 \STATE $\forall k \in \mathit{v}, send \langle\mbox{SEQ-VIEW}, \mathtt{SEQ}^v\rangle$ to $k$
\ENDIF

\vspace{0.5mm}
\renewcommand{\algorithmicrequire}{\textbf{upon receipt of}}
\REQUIRE $\mathit{\langle\mbox{SEQ-VIEW},\mathtt{SEQ}^v\rangle}$ from $v.q$ servers in $v$
\STATE $\mathtt{LCSEQ}^v  \leftarrow \mathtt{SEQ}^v$
\STATE $\forall k \in \mathit{v}, send \langle\mbox{SEQ-CONV}, \mathtt{SEQ}^v\rangle$ to $k$

\vspace{0.5mm}
\renewcommand{\algorithmicrequire}{\textbf{upon the receipt of}}
\REQUIRE $\mathit{\langle\mbox{SEQ-CONV},seq' \rangle}$ from $v.q$ servers in $v$
\STATE $\mathcal{G}^v_i.\mathit{new\_view(seq')}$
\end{algorithmic}
}
 \vspace{-1mm}

\end{algorithm}
%%%%%%%%%%%%%%%%%%%%%%%%%%%%%%%%%%%%%%%
\vspace{-2mm}

Different servers of $v$ may propose sequences containing different views and therefore these views need to be organized in a sequence.
When a server $i \in v$ receives a proposal for a view sequence from $j \in v$, it verifies if this proposal contains some view it did not know yet (line 4).\footnote{Notice that it might happen that a server receives this message even if its view generator was not yet initialized, i.e., $\mathtt{SEQ}^v = \emptyset$.}
If this is the case, $i$ updates its proposal ($\mathtt{SEQ}^v$) according to two mutually exclusive cases:

\begin{itemize}

\item
\textsc{Case 1} [There are \textit{conflicting views} in the sequence proposed by $i$ and the sequence received from $j$ (lines 5-8)]:
In this case $i$ creates a new sequence containing the last converged sequence ($\mathtt{LCSEQ}^v$) and a new view with the union of the two most up-to-date conflicting views. 
This maintains the containment relationship between any two generated view sequences.

\item
\textsc{Case 2} [The sequence proposed by $i$ and the received sequence \emph{can be composed in a new sequence} (lines 9-10)]:
In this case the new sequence is the union of the two known sequences.

\end{itemize}

In both cases, a new proposal containing the new sequence is disseminated (line 11).
When $i$ receives the same proposal from a quorum of servers in $v$, it \emph{converges} 
to $\mathtt{SEQ}^v$ and stores it in $\mathtt{LCSEQ}^v$, informing other servers of $v$ about it (lines 12-13). 
When $i$ knows that a quorum of servers of $v$ converged to some sequence $seq'$, $i$ \emph{generates} $seq'$ (line~14).

\paragraph*{Correctness}
The algorithm ensures that if a quorum of servers converged to a sequence $seq'$ (lines 12-13), then
\textit{(1)} such sequence will be generated (line 14) and
\textit{(2)} any posterior sequence generated will contain $seq'$ (lines 8 and 10), ensuring Weak Accuracy.
This holds due to the quorum intersection: at least one correct server needs to participate in the generation of 
both sequences and this server applies the rules in Cases 1 and 2 to ensure that sequences satisfy the containment relationship.
The servers in $v$ generate at most $|v|-v.q+1$ sequences of views, such that  $seq_1 \subset ... \subset seq_{|v|-v.q+1}$.
The sequence $seq_1$ is obtained by merging the proposals of a quorum of servers; the other 
sequences may be built using $seq_1$ plus different proposals from each of the $|v|-v.q$ servers that did not participate in the generation of $seq_1$.
The Termination property is ensured by the fact that \textit{(1)} each server makes at most one initial proposal (lines 1-3); 
\textit{(2)} servers keep updating their proposals until a quorum agree on some proposal; and \textit{(3)} there is always a quorum of correct servers in $v$.

The following lemmata and theorem prove that Algorithm~\ref{alg_live} satisfies the properties of $\mathcal{L}$.

%%%%%%%%%%%%%%%% PROVA %%%%%%%%%%%%%%%%%%%
\begin{lem} \label{lemma_vivo_weakacc}\vspace{-0mm}
(Weak Accuracy) 
Consider the view generators $\mathcal{G}^{v}$ implemented by Algorithm \ref{alg_live} associated with view $v$. 
For any $i,j \in v$, if $i$ receives an upcall $\mathcal{G}^v_i.\mathit{new\_view(seq_i)}$ and 
$j$ receives an upcall $\mathcal{G}^v_j.\mathit{new\_view(seq_j)}$, then either $seq_i \subseteq seq_j$ or $seq_j \subset seq_i$.
\end{lem}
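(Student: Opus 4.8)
The plan is to trace how any generated sequence is produced and show that the set of all sequences ever generated (via line~14) for view $v$ is totally ordered by $\subseteq$. The core observation is that a sequence $seq'$ is generated at some server only if a quorum of $v.q$ servers sent $\langle\mbox{SEQ-CONV},seq'\rangle$, which in turn means each of those servers had $\mathtt{SEQ}^v = seq'$ when it received $\langle\mbox{SEQ-VIEW},seq'\rangle$ from a quorum of $v.q$ servers (lines 12--13). So I would first establish the following monotonicity invariant: at every server $k \in v$, the value of $\mathtt{SEQ}^v$ only grows in the $\subseteq$ sense over time, since every update (lines~8 and~10) replaces $\mathtt{SEQ}^v$ by something containing the old $\mathtt{SEQ}^v$ --- in Case~2 this is immediate, and in Case~1 it holds because $\mathtt{LCSEQ}^v \subseteq \mathtt{SEQ}^v$ (itself an invariant, since $\mathtt{LCSEQ}^v$ is only ever set to a past value of $\mathtt{SEQ}^v$) and the new view added is the union of the most up-to-date views, which dominates the old most up-to-date view of $\mathtt{SEQ}^v$.

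Next I would prove the key containment lemma: if $seq'$ is a sequence for which some quorum $Q'$ of servers reached ``converged'' state (i.e. had $\mathtt{SEQ}^v = seq'$ and broadcast \mbox{SEQ-CONV}), and later some server reaches ``converged'' state with value $seq''$ via quorum $Q''$, then $seq' \subseteq seq''$ or $seq'' \subset seq'$. By Assumptions~\ref{f_lim} and~\ref{q_size}, $Q' \cap Q''$ contains at least one correct server $k$. That server, having converged to $seq'$, had $\mathtt{SEQ}^v \supseteq seq'$ at that moment; by the monotonicity invariant its $\mathtt{SEQ}^v$ never shrinks afterward, so when it participates in the convergence on $seq''$ it has $\mathtt{SEQ}^v = seq''$, forcing $seq' \subseteq seq''$. (The asymmetry in the statement --- $\subseteq$ versus $\subset$ --- is a harmless bookkeeping artifact of which convergence happened first; the symmetric ``$seq_i \subseteq seq_j \vee seq_j \subseteq seq_i$'' is what's really shown, and it implies the stated disjunction.) Since an upcall $\mathcal{G}^v_i.\mathit{new\_view(seq_i)}$ occurs only when $i$ receives $\langle\mbox{SEQ-CONV},seq_i\rangle$ from a quorum (line~14), each of $seq_i$ and $seq_j$ was converged-upon by a quorum, and the lemma applies directly.

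The step I expect to be the main obstacle is the monotonicity invariant in Case~1, specifically verifying that the newly formed sequence $\mathtt{LCSEQ}^v \cup \{w.entries \cup w'.entries\}$ really contains the old $\mathtt{SEQ}^v$ in the sequence-containment sense used by the paper. This requires being careful about what ``$\subseteq$ between sequences'' means here (set inclusion on the set of views, with views compared by $entries$), arguing that every view of the old $\mathtt{SEQ}^v$ is $\subseteq$ some view of the new sequence: views in $\mathtt{LCSEQ}^v$ reappear verbatim, and the remaining views of $\mathtt{SEQ}^v$ are all $\subseteq w = \mathit{most\_updated}(\mathtt{SEQ}^v) \subseteq w.entries \cup w'.entries$. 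I would also need the auxiliary fact that $\mathtt{LCSEQ}^v$ is always a prefix-like sub-sequence of the current $\mathtt{SEQ}^v$, which follows because $\mathtt{LCSEQ}^v$ is set only at line~12 to the then-current $\mathtt{SEQ}^v$ and $\mathtt{SEQ}^v$ only grows thereafter. Once these invariants are in place, the quorum-intersection argument closes the proof cleanly.
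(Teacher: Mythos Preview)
Your overall skeleton --- quorum intersection plus a monotonicity invariant at the intersecting server --- is the right shape and matches how the paper reasons. But the specific invariant you state, that $\mathtt{SEQ}^v$ only grows under set inclusion, is false, and your attempted justification in the last paragraph does not rescue it. In Case~1 (line~8) the new $\mathtt{SEQ}^v$ is $\mathtt{LCSEQ}^v \cup \{w.entries \cup w'.entries\}$; a view that was added to the old $\mathtt{SEQ}^v$ after the last convergence (so not in $\mathtt{LCSEQ}^v$) and is not equal to $w \cup w'$ is simply dropped. Concretely: starting from $\mathtt{LCSEQ}^v=\mathtt{SEQ}^v=\{v_1\}$, a non-conflicting update can take $\mathtt{SEQ}^v$ to $\{v_1,v_2\}$, and then a conflicting $\{v_3\}$ yields $\mathtt{SEQ}^v=\{v_1,\,v_2\cup v_3\}$, which does \emph{not} contain $v_2$. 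Your sentence ``every view of the old $\mathtt{SEQ}^v$ is $\subseteq$ some view of the new sequence'' establishes only a domination relation, not set inclusion of sequences, and Weak Accuracy is stated for set inclusion.

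The fix is exactly the auxiliary fact you already isolate but then use the wrong way around: $\mathtt{LCSEQ}^v \subseteq \mathtt{SEQ}^v$ holds at all times (line~8 enforces it directly, line~10 preserves it, line~12 equates them), and therefore $\mathtt{LCSEQ}^v$ is monotone in set inclusion, since each reassignment at line~12 sets it to the current $\mathtt{SEQ}^v \supseteq$ old $\mathtt{LCSEQ}^v$. Hence once server $k$ converges to $seq_i$ and records $\mathtt{LCSEQ}^v = seq_i$, forever after $\mathtt{LCSEQ}^v \supseteq seq_i$ and thus $\mathtt{SEQ}^v \supseteq seq_i$; so at $k$'s later convergence to $seq_j$ we have $seq_j = \mathtt{SEQ}^v \supseteq seq_i$. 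This is precisely the role the paper gives to $\mathtt{LCSEQ}^v$ in its case analysis (it repeatedly invokes ``at least a quorum of servers store $seq_i$ in their $\mathtt{LCSEQ}^v$'' to conclude that any subsequent generated sequence must contain $seq_i$). So your plan is salvageable with a one-line change: track monotonicity of $\mathtt{LCSEQ}^v$, not of $\mathtt{SEQ}^v$.
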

 
 \vspace{-1mm}
 \begin{proof}
%Consider the view generators $\mathcal{G}^v$ implemented by Algorithm \ref{alg_live} associated with view $v$.
Consider that $\mathcal{G}^v$ generates sequences $seq_i$ and $seq_j$ at servers $i, j \in v$ by calling $\mathcal{G}^v_i.\mathit{new\_view(seq_i)}$ and $\mathcal{G}^v_j.\mathit{new\_view(seq_j)}$, respectively. 
If $\mathcal{G}^v_i$ generated $seq_i$, there is a quorum of servers in $v$ that sent $\mbox{SEQ-VIEW}$ messages with $seq_i$ (lines 12-13). 
The same can be said about the generation of $seq_j$ by $\mathcal{G}^v_j$: a quorum of servers in $v$ 
proposed $seq_j$ in $\mbox{SEQ-VIEW}$ messages.
Due to the intersection between quorums, there will be at least one correct server $k$ that sent $\mbox{SEQ-VIEW}$ messages both for $seq_i$ and $seq_j$.
Consider that (1) server $k$ first sent $seq_i$, (2) $i$ receives a quorum of these proposals and converges to $seq_i$ and (3) $j$ made an initial proposal $seq^{p}_j$.
Then, we have three cases:

\begin{enumerate}

\item  Case $seq_i = seq^{p}_j$, we have two cases:
\textit{(i)} $j$ receives the same messages that made $i$ converge to $seq_i$ and $\mathcal{G}^v_j$ generates $seq_j = seq_i = seq^{p}_j$ (lines 12-14); or
\textit{(ii)} $j$ receives a proposal containing other views that are not in $seq^{p}_j$ and updates its proposal before the generation of $seq_j$ (lines 8 and 10).
In \textit{(ii)}, since $seq_i$ was generated, at least a quorum of servers in $v$ store $seq_i$ in their $\mathtt{LCSEQ}^v$ (line 12), which means $seq_j$ must contain $seq_i$.

\item Case $seq_i \neq seq^{p}_j$ and \emph{conflicting}, the reception of $\mbox{SEQ-VIEW}$ from $k$ makes $j$ proposes a new sequence containing its $\mathtt{LCSEQ}^v$ ($lcseq_j$) 
plus a new view $w$, representing the union of all update requests (line 8): $lcseq_j \rightarrow w$.
Once $seq_i$ was generated, at least a quorum of servers in $v$ store $seq_i$ in their $\mathtt{LCSEQ}^v$ (line 12).
Since $\mathtt{LCSEQ}^v$ is updated only if $\mbox{SEQ-VIEW}$ messages with the same sequence are received from a quorum of servers (line 12), it will not contain conflicting sequences in different servers.
Consequently, the generated sequence $seq_j$ must contain $seq_i$.

\item  Case $seq_i \neq seq^{p}_j$ and \emph{not conflicting}, $j$ updates its proposal to a sequence $seq_j$ representing the union of $seq_i$ and $seq^{p}_j$ (line 10).
After that, $j$ sends a $\mbox{SEQ-VIEW}$ message with $seq_j$ to other servers that eventually update their proposals to this sequence.
Consequently, $\mathcal{G}^v_j$ generated a $seq_j$ such that $seq_i \subset seq_j$.
\end{enumerate}

Therefore, in the three possible cases $seq_i \subseteq seq_j$. 
Using similar arguments it is possible to prove the symmetrical execution: if server $k$ sent a $\mbox{SEQ-VIEW}$ message 
first for $seq_j$ and then for $seq_i$, we have that $seq_j \subseteq seq_i$.
Consequently, we have that either $seq_i \subseteq seq_j$ or $seq_j \subset seq_i$.
% $\strut\hfill \Box$
\end{proof}

\begin{lem} \label{lemma_vivo_termination}\vspace{-0mm}
(Termination)
Consider the view generators $\mathcal{G}^{v}$ implemented by Algorithm \ref{alg_live} associated with view $v$. 
If a correct server $i \in v$ calls $\mathcal{G}^v_i.\mathit{gen\_view(seq)}$, then eventually it will receive an upcall $\mathcal{G}^v_i.\mathit{new\_view(seq_i)}$.
\end{lem}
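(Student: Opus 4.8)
The plan is to follow the message chain that $i$'s invocation sets in motion --- $\mbox{SEQ-VIEW}$ proposals, then $\mbox{SEQ-CONV}$ confirmations, then the upcall --- and to argue it cannot stall, using the fact that $v$ always contains a quorum of correct servers together with the finiteness of reconfigurations.

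First I would record two facts. (a) By Assumptions~\ref{f_lim} and~\ref{q_size}, $v$ has at least $v.q$ correct members, and by Assumption~\ref{g_leaves} a correct member that is leaving still runs the protocol for $v$; hence there is a fixed quorum $Q\subseteq v$ of servers that never crash and keep executing Algorithm~\ref{alg_live}. (b) $i$'s call $\mathcal{G}^v_i.\mathit{gen\_view}(seq)$, with $seq$ a valid proposal for $v$ (every $w\in seq$ satisfies $v\subset w$, as the reconfiguration protocol guarantees when it invokes the generator), puts at least one $\mbox{SEQ-VIEW}$ message in transit: either $\mathtt{SEQ}^v_i=\emptyset$ and $i$ itself sends one on lines~2--3, or $\mathtt{SEQ}^v_i\neq\emptyset$, which can only have happened because $i$ earlier received and re-sent (line~11) such a message, which in turn traces back to some server's initial valid proposal. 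Since the initial proposer sends to all of $v$, every $p\in Q$ eventually adopts a non-empty proposal (line~4 holds the first time, as $\mathtt{SEQ}^v_p$ starts empty) and re-sends it.

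The heart of the argument --- and the step I expect to be the main obstacle --- is to show that the receive/merge/re-send loop quiesces: there is a time after which no correct server ever changes $\mathtt{SEQ}^v$. Here I would lean on Assumption~\ref{finite_rec}: finitely many updates are ever requested, so only finitely many views can occur in any proposal; moreover every proposal is a \emph{chain} under $\subset$ (initial proposals are valid sequences; Case~2 unions two non-conflicting, hence pairwise comparable, chains; Case~1 replaces the proposal by $\mathtt{LCSEQ}^v$ extended with $w.entries\cup w'.entries$, again a chain). Along the execution, the entry-set of a correct server's most up-to-date view is non-decreasing and bounded by the finite set of all updates, so it stabilizes; once it has stabilized at every correct server, any further revision would be a Case~1 move that merely rearranges the converged prefix $\mathtt{LCSEQ}^v$, and by the quorum-intersection argument already used for Lemma~\ref{lemma_vivo_weakacc} these converged prefixes are mutually comparable across servers, so such revisions cannot recur indefinitely. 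Thus every correct server's proposal eventually freezes, and --- because channels are reliable --- each frozen $p$ has delivered every other frozen $q$'s final $\mbox{SEQ-VIEW}$ without retriggering line~4, forcing the frozen values to contain one another; hence all correct servers freeze at one common chain $SEQ^\ast$.

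Finally I would close the pipeline. Each $p\in Q$, upon reaching $SEQ^\ast$, sent $\langle\mbox{SEQ-VIEW},SEQ^\ast\rangle$ to every server of $v$ (line~3 or~11); hence every correct server eventually holds those $v.q$ messages while its own $\mathtt{SEQ}^v$ already equals $SEQ^\ast$, which fires lines~12--13 --- it sets $\mathtt{LCSEQ}^v\leftarrow SEQ^\ast$ and sends $\langle\mbox{SEQ-CONV},SEQ^\ast\rangle$. In particular every $p\in Q$ does this, so $i$ eventually receives $\langle\mbox{SEQ-CONV},SEQ^\ast\rangle$ from $v.q$ servers and executes the upcall $\mathcal{G}^v_i.\mathit{new\_view}(SEQ^\ast)$, which is what had to be shown.
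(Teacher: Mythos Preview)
Your proof is correct and follows the same overall pipeline as the paper's argument---initial $\mbox{SEQ-VIEW}$ messages, convergence of the merge loop, then $\mbox{SEQ-CONV}$ messages triggering the upcall---but you are considerably more explicit about why the merge loop quiesces, and you ground that step differently. The paper's proof is a short paragraph whose key sentence is that ``each server makes a single initial proposal (line~1),'' and it leaves the convergence of the $\mbox{SEQ-VIEW}$ exchange essentially implicit from that observation. You instead invoke Assumption~\ref{finite_rec} (finitely many updates are ever requested) to bound the set of views that can appear, and then argue monotonicity of the most-updated view plus comparability of the $\mathtt{LCSEQ}^v$ prefixes.

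The difference worth noting is that your appeal to Assumption~\ref{finite_rec} is stronger than necessary here: for a \emph{fixed} view $v$, the set of entries that can ever surface in any server's $\mathtt{SEQ}^v$ is already bounded by the union of the at most $|v|$ initial proposals (line~1 guards each server to one initial proposal, and lines~8 and~10 never introduce entries not present in some received or local proposal). So finiteness of the state space---and hence termination of the merge loop---follows from $|v|<\infty$ alone, without the global bound on reconfigurations. The paper's terser argument is implicitly using this local bound. Your route is not wrong, and it buys you a cleaner monotonicity story (the top view's entry set is non-decreasing and bounded), but it obscures that Termination of $\mathcal{L}$ for one view does not actually depend on Assumption~\ref{finite_rec}; that assumption is needed only later, to bound how many times the reconfiguration protocol restarts a generator (Lemma~\ref{lemma_rec_terminacao}).
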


\vspace{-1mm}
\begin{proof}
When $\mathcal{G}^v_i.\mathit{gen\_view(seq)}$ is called, server $i$ sends a message with its proposal $seq$ (line 3). 
Since the channels are reliable, messages sent by correct processes are received by correct processes (lines 4-11). 
For $\mathcal{G}^v_i$ to generate a new sequence of views $seq_i$ (line~14), at least $v.q$ servers should converge to $seq_i$ (lines 12-13). 
During this convergence (lines 4-11), $i$ changes its proposed sequence ($\mathtt{SEQ}^v$) with proposals received from other servers (line 11).
Since there are at least $v.q$ correct servers in $v$ and each server makes a single initial proposal (line 1), at least $v.q$ servers are going to converge to the same sequence. 
Thus, $i$ will eventually generate $seq_i$ by executing $\mathcal{G}^v_i.\mathit{new\_view(seq_i)}$ (line 14).
% $\strut\hfill \Box$
\end{proof}

\begin{lem} \label{lemma_num_seq}\vspace{-0mm}
(Bounded Sequences for $\mathcal{L}$)
Consider the view generators $\mathcal{G}^{v}$ implemented by Algorithm \ref{alg_live} associated with view $v$. 
The number of different view sequences generated by $\mathcal{G}^{v}$ is bounded.
\end{lem}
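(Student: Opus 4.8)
The plan is to bound the number of generated sequences in two ways: a crude argument that already settles the lemma, and a sharper one that recovers the $|v|-v.q+1$ estimate stated above. For the crude bound I would first invoke Assumption~\ref{finite_rec}: since only finitely many updates are ever requested in the execution, and since the only operations the algorithm performs on view entries are unions of entry sets already in circulation (lines~8 and~10), the entries of every view ever handled by $\mathcal{G}^v$ are drawn from a fixed finite set $E$ of updates. Hence the set $\mathcal{V}$ of views that can possibly occur is finite, and since a generated view sequence is just a subset of $\mathcal{V}$, there are at most $2^{|\mathcal{V}|}$ possible sequences; in particular only finitely many can be generated. This already proves the lemma.

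For the sharper bound, which is what the correctness argument of Algorithm~\ref{alg_live} really relies on, I would argue as follows. By Weak Accuracy (Lemma~\ref{lemma_vivo_weakacc}) any two generated sequences are comparable under $\subseteq$, so the generated sequences form a chain $seq_1 \subseteq seq_2 \subseteq \cdots$, and it suffices to bound its length. The first element $seq_1$ is generated only after some quorum $Q$ of $v.q$ servers have stored $seq_1$ in their $\mathtt{LCSEQ}^v$ (lines~12--14). I would then show that every later, strictly larger generated sequence $seq_k$ arises by taking $seq_1$ (which it must contain, again by Lemma~\ref{lemma_vivo_weakacc}) and extending it with the contribution of one of the at most $|v|-v.q$ servers outside $Q$: a conflicting proposal (line~8) collapses the current $\mathtt{SEQ}^v$ down to $\mathtt{LCSEQ}^v$ plus a single merged view, and a non-conflicting proposal (line~10) is merely absorbed, so a server outside $Q$ can be responsible for advancing the chain past $seq_1$ at most once. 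Counting $seq_1$ together with these $|v|-v.q$ possible one-step extensions gives at most $|v|-v.q+1$ generated sequences.

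The delicate step — and the one I expect to be the main obstacle — is the last claim: proving rigorously that the $|v|-v.q$ servers outside $Q$ each account for at most one additional link in the chain. This needs a careful invariant relating $\mathtt{SEQ}^v$, $\mathtt{LCSEQ}^v$ and the convergence condition: once a quorum has converged to some $seq_\ell$, every later proposal not already $\subseteq seq_\ell$ either conflicts — in which case Case~1 merges all divergent material into one view appended to the common $\mathtt{LCSEQ}^v$ and so cannot keep inflating the count — or is a union that supplies the still-missing single initial proposal of a not-yet-heard-from server (lines~1--3 guarantee at most one such proposal per server). Making precise that $\mathtt{LCSEQ}^v$ never carries conflicting content across different servers (the fact used in Cases~1 and~3 of the proof of Lemma~\ref{lemma_vivo_weakacc}), and handling servers that only relay \mbox{SEQ-VIEW}/\mbox{SEQ-CONV} messages without ever calling $\mathit{gen\_view}$, are the bookkeeping details that make this argument go through. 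If one only wants the bare statement of the lemma, the finiteness argument of the first paragraph suffices and avoids all of this.
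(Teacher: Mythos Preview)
Your sharper argument is essentially the paper's own proof: the paper, too, observes that generating any sequence requires a quorum of matching \mbox{SEQ-VIEW} proposals, that each server contributes a single initial proposal, and that after the first generated sequence at most $|v|-v.q$ outstanding proposals remain, each able to spawn one further sequence, yielding the $|v|-v.q+1$ bound. The paper's version is terser and does not spell out the invariant on $\mathtt{LCSEQ}^v$ or the chain structure from Weak Accuracy that you (rightly) flag as the delicate bookkeeping; it simply asserts the worst-case count.

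Your crude argument via Assumption~\ref{finite_rec} is a genuinely different route the paper does not take. It delivers the bare statement of the lemma with almost no effort, at the cost of a much weaker (doubly exponential) bound and a dependence on a global assumption about the whole execution. The paper's approach instead extracts the quantitative $|v|-v.q+1$ bound purely from the local structure of Algorithm~\ref{alg_live}, and that specific bound is what the later discussion (e.g., the convergence-strategy comparison in Section~\ref{comparacao}) actually relies on. So your crude argument suffices for the lemma as stated, but only the sharper one gives what the paper needs downstream.
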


\vspace{-1mm}
\begin{proof}
$\mathcal{G}^{v}$ must collect a quorum of proposals for a sequence $seq$ to generate it (lines 12-14). 
Each server $i \in v$ makes a single initial proposal (line 1) and updates it only by processing proposals from other server of $v$ (lines 8 and 10). 
Consequently, $\mathcal{G}^{v}$ must process the proposals from at least a quorum of servers to generate $seq$, being  $|v| - v.q$ the number of servers whose proposals have not been processed to generate $seq$.
In the worst case, each of these servers make a different proposal and such proposal is combined with the last generated sequence to generate a new sequence.
Consequently, $\mathcal{G}^{v}$ generates at most $|v|-v.q+1$ sequences.
% $\strut\hfill \Box$
\end{proof}

\begin{thm} \label{thm_vivo_prop}\vspace{-0mm}
Algorithm \ref{alg_live} implements a live view generator.
\end{thm}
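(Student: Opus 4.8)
The plan is to assemble Theorem~\ref{thm_vivo_prop} directly from the three lemmata just proved, checking each of the four defining properties of $\mathcal{L}$ (Weak Accuracy, Non-triviality, Termination) from Definition~\ref{def_view_gen}, with Lemma~\ref{lemma_num_seq} supplying the quantitative bound that underlies the later reconfiguration analysis. Two of these are immediate: Weak Accuracy is exactly Lemma~\ref{lemma_vivo_weakacc}, and Termination is exactly Lemma~\ref{lemma_vivo_termination}. So the only genuinely new work in the theorem's proof is Non-triviality, plus a sentence tying in Lemma~\ref{lemma_num_seq}.

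For Non-triviality I would argue as follows. Non-triviality requires that for any upcall $\mathcal{G}^v_i.\mathit{new\_view(seq_i)}$ and any $w \in seq_i$, we have $v \subset w$. An upcall at line~14 generates a sequence $seq'$ only after a quorum of servers sent $\langle\mbox{SEQ-CONV}, seq'\rangle$, which in turn requires (line~12) that a quorum sent $\langle\mbox{SEQ-VIEW}, seq'\rangle$. So it suffices to show that every sequence ever stored in some server's $\mathtt{SEQ}^v$ contains only views $w$ with $v \subset w$; this is a straightforward induction on the sequence of assignments to $\mathtt{SEQ}^v$. The base case is the initial proposal at lines~1--2, which is guarded by the explicit test $\forall w \in seq : v \subset w$. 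The inductive step considers the two update rules: in Case~2 (line~10) $\mathtt{SEQ}^v$ becomes $\mathtt{SEQ}^v \cup seq$ where $seq$ arrived in a $\mbox{SEQ-VIEW}$ message, hence was itself a stored proposal at the sender and by induction contains only views more up-to-date than $v$; in Case~1 (line~8) $\mathtt{SEQ}^v$ becomes $\mathtt{LCSEQ}^v \cup \{w.\mathit{entries} \cup w'.\mathit{entries}\}$, where $\mathtt{LCSEQ}^v$ was copied from a previously converged $\mathtt{SEQ}^v$ (line~12) and $w,w'$ are the most up-to-date views of two valid proposals, so $v \subset w$ and $v \subset w'$ give $v \subset (w.\mathit{entries}\cup w'.\mathit{entries})$ as well. (One should also note the footnote case where a $\mbox{SEQ-VIEW}$ message arrives at a server with $\mathtt{SEQ}^v = \emptyset$: the test at line~5 still forces every received view into the new proposal, but each such view came from a valid sender proposal, so the invariant is preserved.) Hence every generated sequence satisfies Non-triviality.

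Finally I would remark that Lemma~\ref{lemma_num_seq} shows the generated sequences are finite in number and, combined with the containment from Weak Accuracy, form a chain $seq_1 \subset \dots \subset seq_{|v|-v.q+1}$ with a unique maximal element; this is the property relied upon by the reconfiguration protocol but is not part of Definition~\ref{def_view_gen} itself, so it is stated as supporting context rather than as one of the three properties to verify. The proof then concludes that all properties of $\mathcal{L}$ hold, so Algorithm~\ref{alg_live} implements a live view generator.

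I do not expect a serious obstacle here: the theorem is essentially a bookkeeping step that collects the lemmata. The one place that requires care is making the Non-triviality induction precise about \emph{every} write to $\mathtt{SEQ}^v$ — in particular handling the $\mathtt{SEQ}^v = \emptyset$ reception case flagged in the paper's footnote, and confirming that $\mathtt{LCSEQ}^v$ never contains a non-updated view (which follows because it is only ever assigned the current value of $\mathtt{SEQ}^v$ at line~12). If one wanted to be fully rigorous, the invariant ``every view stored in $\mathtt{SEQ}^v$ or $\mathtt{LCSEQ}^v$ at any server $i \in v$ satisfies $v \subset w$'' should be stated once and then maintained across all the relevant lines simultaneously.
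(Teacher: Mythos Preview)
Your proposal is correct and follows essentially the same approach as the paper: Weak Accuracy and Termination are pulled directly from Lemmata~\ref{lemma_vivo_weakacc} and~\ref{lemma_vivo_termination}, and Non-triviality is argued from the guard at line~1 together with the update rules at lines~8 and~10. Your version is more detailed---you spell out the induction over assignments to $\mathtt{SEQ}^v$ and $\mathtt{LCSEQ}^v$ and you add a contextual remark about Lemma~\ref{lemma_num_seq}---whereas the paper compresses the Non-triviality argument to a single sentence and does not mention Lemma~\ref{lemma_num_seq} here, but the substance is the same.
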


\vspace{-1mm}
\begin{proof}
\textit{Weak Accuracy} and \textit{Termination} follows directly from Lemmata \ref{lemma_vivo_weakacc} and \ref{lemma_vivo_termination}, respectively. 
The \textit{Non-Triviality} property is ensured by the verification of line 1 and the way further proposals are defined (lines 8 and 10).  
% $\strut\hfill \Box$
\end{proof}

% \vspace{-2mm}
\section{\textsc{FreeStore} Reconfiguration}
\label{secao_reconfiguracao}

A server running the \textsc{FreeStore} reconfiguration algorithm uses a view generator associated with its current view~$cv$ to process one or more reconfiguration requests (joins and leaves) that will lead the system from $cv$ to a new view $w$.
%All pending reconfiguration requests are processed periodically, in batches.
%In consequence, the reconfiguration algorithms can process more than one update in a single run and r/w operations can execute in parallel with reconfigurations. 
% 
Algorithm \ref{alg_recon} describes how a server $i$ executes reconfigurations. 
In the following sections we first describe how view generators are started and then we proceed to discuss the behavior of this algorithm when started with 
either $\mathcal{L}$ (Section \ref{rec_ger_vivo}) or $\mathcal{P}$ (Section \ref{rec_ger_perfeito}).
% In the following sections we first describe how view generators are started and then we proceed to discuss the behavior of this algorithm when started 
% with $\mathcal{L}$ (reconfigurations with $\mathcal{P}$ are very simple since processes agree in the sequence of views generated).

% \vspace{-2mm}
\subsection{View Generator Initialization}
\label{vg_init}

Algorithm \ref{alg_recon} (lines 1-7) describes how a server $i$ processes reconfiguration requests and 
starts a view generator associated with its current view $cv$. 
A server $j$ that wants to join the system needs first to find the current view $cv$ and then to execute the $\mathit{join}$ operation (lines 1-2), sending a 
tuple $\langle +,j \rangle$ to the members of $cv$.
Servers leaving the system do something similar, through the $\mathit{leave}$ operation (lines 3-4).
When $i$ receives a reconfiguration request from $j$, it first verifies if the requesting server is 
using the same view as itself; if this is not the case, $i$ replies its current view to $j$ (omitted from the algorithm for brevity).
If they are using the same view and $i$ did not execute the requested reconfiguration before, it stores the reconfiguration tuple in its 
set of pending updates $\mathtt{RECV}$ and sends an acknowledgment to $j$ (lines 5-6).

For the sake of performance, a local \textit{timer} has been defined in order to periodically process the updates requested in a view, that is, the next system reconfiguration.
A server $i \in cv$ starts a reconfiguration for $cv$ when its timer expires and $i$ has some pending reconfiguration requests (otherwise, the timer is renewed).
The view generator is started with a sequence containing a single view representing the current view plus the pending updates (line~7).

%The following lemma states this property.
%
%%%%%%%%%%%%%%%%% PROVA %%%%%%%%%%%%%%%%%%%
%\begin{lem} \label{lemma_iniciando}
%A server with pending updates will eventually start the view generator associated with its current view.
%\end{lem}
%
%\begin{proof}
%Assume that $v$ is a view installed in the system and that there are pending \textit{updates}, then $\mathtt{RECV} \neq \emptyset$ for some process $i \in v$.
%Additionally, the \textit{timeout} associated to $v$ is going to be expired. 
%Then, $i$ is going to activate the view generator associated with $v$ (line 7). 
%\end{proof}
%%%%%%%%%%%%%%%%% PROVA %%%%%%%%%%%%%%%%%%%

% \vspace{-2mm}
\subsection{Reconfiguration using $\mathcal{L}$}
\label{rec_ger_vivo}

This section presents the reconfiguration processing when Algorithm \ref{alg_recon} is instantiated with \textit{live view generators} that may generate different views at different servers.
In this case, the algorithm ensures that the system \emph{installs} an unique sequence of views. 
%The key idea of the algorithm is to install only some of the views in the sequences generated by $\mathcal{L}$ to define an unique \textit{sequence of views} until a single final view containing the requested updates is installed.

% \vspace{-2mm}
% \subsubsection{Overview}
\paragraph*{Overview}
% 
% After the generation of a view sequence by $\mathcal{L}$, views of this sequence are started, one after another, until the most up-to-date view is installed.
% Recall that $\mathcal{L}$ satisfies only Weak Accuracy, consequently, different sequences may be generated for updating the same view, but they will always be compatible, i.e., contained one in another.
% 
Given a sequence $\mathit{seq}: v_1  \rightarrow \ldots  \rightarrow v_k \rightarrow w$ generated for updating a view $v$, only the last view $w$ will be installed in the system.
The other $k$ \emph{auxiliary views} are used only as intermediate steps for installing $w$.
%If a server knows this whole sequence, these views will not be installed since a server installs only the most up-to-date view in a sequence.
The use of auxiliary views is fundamental to ensure that no write operation executed in any of the views of $\mathit{seq}$ (in case they are installed in some server) ``is lost'' by the reconfiguration processing.
This is done through the execution of a \emph{``chain of reads''} in which the servers of $v$ transfer their state to the servers of $v_1$, which transfer their state to the servers of $v_2$ and so on until the servers of $w$ have the most up-to-date state.
To avoid consistency problems, r/w operations are disabled during each of these state transfers.
%EDUARDO: tirei este footnote pq precisa mesmo atualizar o estado de todos, pois pode ser que um servidor que estÃ¡ no sistema esteja com o estado desatualizado!
% \footnote{In practice it is necessary to transfer the state only to the servers joining the system.}

It is important to remark that since we do not use consensus (nor solve an equivalent problem),  a subsequence $\mathit{seq'}: v_1  \rightarrow \ldots  \rightarrow v_j, j \leq k$, of $\mathit{seq}$ may lead to the installation of $v_j$ in some servers that did not know $seq$ and that these servers may execute r/w operations in this view.
However, the algorithm ensures these servers eventually will reconfigure from $v_j$ to the most up-to-date view $w$.
%This is only possible due to the Weak Accuracy of $\mathcal{L}$.

%%%%%%%%%%%%%%%%%%%%%%%%%%%%%%%%%%%%%%%
\begin{algorithm*}[!ht]
 
\caption{\textsc{FreeStore} reconfiguration - server \textit{i}.}
\label{alg_recon}
% \begin{multicols}{2}
% \begin{spacing}{0.8}
\footnotesize{
\begin{algorithmic}[1]
%\begin{multicols}{2}
  
 \renewcommand{\algorithmicrequire}{\textbf{functions:}}
\REQUIRE \COMMENT{Auxiliary functions}\\
$\mathit{least\_updated}(seq) \equiv w \colon (w \in seq) \wedge (\nexists w' \in seq \colon w' \subset w)$

\renewcommand{\algorithmicrequire}{\textbf{variables:}}
\REQUIRE \COMMENT{Sets used in the protocol}\\
 $cv \leftarrow v_0$ \hfill // the system current view known by $i$ \\
 $\mathtt{RECV} \leftarrow \emptyset$ \hfill // set of received updates
 %$\mathcal{G}^{cv}_i \leftarrow$ the view generator associated with $cv$

\vspace{0.5mm}
\renewcommand{\algorithmicrequire}{\textbf{procedure}}
\REQUIRE $\mathit{join()}$
\STATE $\forall j \in cv, \mathit{send \langle\mbox{RECONFIG}, \langle +,i\rangle, cv\rangle}$ to $j$
\STATE \textbf{wait} for $\mathit{\langle\mbox{REC-CONFIRM}\rangle}$ replies from $\mathit{cv.q}$ servers in $cv$

\vspace{0.5mm}
\REQUIRE $\mathit{leave()}$
\STATE $\forall j \in \mathit{cv}, \mathit{send \langle\mbox{RECONFIG}, \langle -,i\rangle, \mathit{cv}\rangle}$ to $j$
\STATE \textbf{wait} for $\mathit{\langle\mbox{REC-CONFIRM}\rangle}$ replies from $\mathit{\mathit{cv}.q}$ servers in $\mathit{cv}$

\vspace{0.5mm}
\renewcommand{\algorithmicrequire}{\textbf{upon receipt of}}
\REQUIRE $\langle\mbox{RECONFIG}, \langle *,j\rangle, \mathit{cv}\rangle$ from $j$ and $\langle *,j\rangle \not\in cv$
\STATE $\mathtt{RECV} \leftarrow \mathtt{RECV} \cup \{ \langle *,j\rangle \}$
\STATE $\mathit{send \langle\mbox{REC-CONFIRM}\rangle}$ to $j$

\vspace{0.5mm}
\renewcommand{\algorithmicrequire}{\textbf{upon}}
\REQUIRE ($\mathit{timeout}$ for $\mathit{cv}$) $\wedge (\mathtt{RECV} \neq \emptyset$) 
\STATE $\mathcal{G}^{cv}_i.\mathit{gen\_view}(\{cv \cup \mathtt{RECV}\})$

\vspace{0.5mm}
\REQUIRE $\mathcal{G}^{ov}_i.\mathit{new\_view(seq)}$ \COMMENT{\hfill //$\mathcal{G}$ generates a new sequence of views to update $ov$ (usually $ov = cv$)}
\STATE $w \leftarrow \mathit{least\_updated}(\mathit{seq})$ \COMMENT{\hfill //the next view in the sequence $seq$}
% \STATE \emph{R-multicast}($ov \cup w$,$\mathit{\langle\mbox{INSTALL-SEQ}, w, seq, ov\rangle}$)
\STATE \emph{R-multicast}($\{j \colon j \in ov \vee  j \in w\}$,$\mathit{\langle\mbox{INSTALL-SEQ}, w, seq, ov\rangle}$) 
%\STATE $\forall k \in \mathit{ov} \cup w, send \langle\mbox{INSTALL-SEQ}, w, \mathit{seq}, ov\rangle$ to $k$
%
%\renewcommand{\algorithmicrequire}{\textbf{upon receipt of}}
%\vspace{0.5mm}
%\REQUIRE $\mathit{\langle\mbox{INSTALL-SEQ}, w, seq, ov \rangle}$ from $j$ and $\mathit{i \in ov}$
%\STATE $\forall k \in \mathit{ov} \cup w, send \langle\mbox{INSTALL-SEQ}, w, seq, ov\rangle$ to $k$ 
%\COMMENT{\hfill //re-send the message to ensure all processes in $\mathit{ov} \cup w$ will receive the message sent by $j$}

\vspace{0.5mm}
\REQUIRE \emph{R-delivery}($\{j \colon j \in ov \vee  j \in w\}$,$\mathit{\langle\mbox{INSTALL-SEQ}, w, seq, ov\rangle}$) 
% \vspace{1mm}
% \renewcommand{\algorithmicrequire}{\textbf{procedure}}
% \REQUIRE $\mathit{\mathtt{VIEW\_GEN}^{ov}.new\_view\_seq(seq^{ov})}$
% \STATE $w \leftarrow least\_updated(seq^{ov})$
\IF[\hfill //$i$ is member of the previous view in the sequence]{$\mathit{i \in ov}$}
\STATE \textbf{if} $cv \subset w$ \textbf{then} \textit{stop the execution of r/w operations} \COMMENT{\hfill //if $w$ is more up-to-date than $cv$, enqueue r/w operations while the state transfer is done}
\STATE $\forall j \in w, \mathit{send \langle\mbox{STATE-UPDATE}, \langle val, ts\rangle, \mathtt{RECV} \rangle}$ to $j$  
\COMMENT{\hfill //$i$ sends its state to servers in the next view} 
\ENDIF
\IF[\hfill //$w$ is more up-to-date than $cv$ and the system will be reconfigured from $cv$ to $w$]{$cv \subset w$}
% \STATE \textit{stop to execute client r/w operations} \COMMENT{\hfill //enqueue these operations to be executed when the last view in the sequence is installed -- line 20}
\IF[\hfill //$i$ is in the new view]{$\mathit{i \in w}$} 
\STATE  \textbf{wait} for $\mathit{\langle\mbox{STATE-UPDATE},*,*\rangle}$ messages from $ov.q$ servers in $ov$
% \STATE $\langle \mathit{val, ts} \rangle \leftarrow$ read from the quorum of $\mbox{STATE-UPDATE}$ messages received \COMMENT{\hfill //$i$ updates its state...} %in line 8%\COMMENT{update $\langle \mathit{val, ts} \rangle$ if necessary}
\STATE $\langle \mathit{val, ts} \rangle \leftarrow \langle \mathit{val_h, ts_h} \rangle$, pair with highest timestamp among the ones received \COMMENT{\hfill //$i$ updates its state...}
\STATE $\mathtt{RECV} \leftarrow \mathtt{RECV} \cup \{$update requests from $\mbox{STATE-UPDATE}$ messages$\} \setminus \mathit{w.entries}$ 
%\COMMENT{\hfill //... and collects updates requests that were not processed yet...}
\STATE $cv \leftarrow w$ \COMMENT{\hfill //... and updates its current view to $w$}
\STATE \textbf{if} $\mathit{i \not\in ov}$ \textbf{then} \textit{enable operations} \COMMENT{\hfill //$i$ joining the system}
\STATE $\forall j \in ov \setminus cv, \mathit{send \langle\mbox{VIEW-UPDATED}, cv\rangle}$ to $j$ \COMMENT{\hfill //inform servers in $ov \setminus cv$ that they can leave}
% \IF[\hfill //no update proposal for $cv = w$ was made yet]{$\mathtt{SEQ}^{cv} = \emptyset$}
\IF[\hfill //there are views more up-to-date than $cv$ in $\mathit{seq}$...]{$(\exists w' \in \mathit{seq} \colon cv \subset w')$}
 \STATE $\mathit{seq'} \leftarrow \{w' \in seq \colon cv \subset w'\}$ 
  \COMMENT{\hfill //... gather these views...}
%  \STATE $\forall k \in \mathit{cv}, send \langle\mbox{SEQ-VIEW}, \mathtt{SEQ}^{cv}, \emptyset \rangle$ to $k$
%   \COMMENT{\hfill //... and propose them (going back to Algorithm \ref{alg_recon_vivo})}
\STATE $\mathcal{G}^{cv}_i.\mathit{gen\_view}(seq')$ 
\COMMENT{\hfill //... and propose them (going back to Algorithm \ref{alg_live})}
\ELSE
 \STATE \emph{resume the execution of r/w operations} in $cv = w$ and start a timer for $cv$ \COMMENT{\hfill //\textbf{$w$ is installed}}
% \STATE start timer for $\mathit{cv}$ \COMMENT{\hfill //for triggering the next reconfiguration}
\ENDIF
\ELSE[\hfill //$i$ is leaving the system]
\STATE \textit{disable operations}
\STATE \textbf{wait} for $\langle\mbox{VIEW-UPDATED}, w\rangle$ messages from $w.q$ servers in $w$ and then \textbf{halt}
% \COMMENT{\hfill //\textbf{$w$ is installed}}
\ENDIF
% \ENDIF
\ENDIF

\end{algorithmic}
}
  \vspace{-1mm}
\end{algorithm*}
%   \vspace{-2mm}
%%%%%%%%%%%%%%%%%%%%%%%%%%%%%%%%%%%%%%%

\paragraph*{The protocol}

Algorithm \ref{alg_recon} (lines 8-28) presents the core of the \textsc{FreeStore} reconfiguration protocol.
This algorithm uses an auxiliary function $\mathit{least\_updated}$ to obtain the least updated view in a sequence of views (i.e., the one that is contained in all other views of the sequence) and two local variables: the aforementioned $\mathtt{RECV}$ -- used to store pending reconfiguration requests -- and $cv$ -- the current view of the server (initially $v_0$).

When the view generator associated with some view $ov$ reports\footnote{We use $ov$ instead of $cv$ because view generators associated with \emph{old views} 
($ov \subseteq cv$) still active can generate new sequences.} the generation of a sequence of views $seq$, the server obtains the least 
updated view $w$ of $seq$ and proposes this sequence for updating $ov$ through an $\mbox{INSTALL-SEQ}$ message sent to the servers of both, $ov$ and $w$.
To ensure all correct servers in this group will process the reconfiguration, we employ a \emph{reliable multicast} primitive~\cite{Had94} (lines 8-9), 
that can be efficiently implemented in asynchronous systems with a message retransmission at the receivers before its delivery (i.e., in one communication step)~\cite{Cha96,Had94}.

The current view is updated through the execution of lines 10-28.
First, if the server is a member of the view being updated $ov$, it must send its state (usually, the register's value and timestamp) to the servers in the new view to be installed (lines 10-12).
However, if the server will be updating its current view (i.e., if $w$ is more up-to-date than $cv$) it first needs to stop executing client r/w operations (line 11) and enqueue these operations to be executed when the most up-to-date view in the sequence is installed (line 25, as discussed bellow).
%Besides the storage state of a server, a $\mbox{STATE-UPDATE}$ message also carry the set of pending updates ($\mathtt{RECV}$) of this server to ensure that, as with writes, no update request is lost during reconfigurations.
% 
% 
% 
% 
% %%%%%%%%%%%%%%%%%%%%%%%%%%%%%%%%%%%%%%%
% \begin{figure*}[!ht]
% \begin{center}
% \includegraphics[width=0.8\textwidth]{figs/freestore_rec_new}
% \end{center}
%  \vspace{-0.5cm}
% \caption{Processes always converge to a single view, even in the presence of conflicting proposals.}
% \label{fig:exec}
%  \vspace{-0.3cm}
% \end{figure*}
% %%%%%%%%%%%%%%%%%%%%%%%%%%%%%%%%%%%%%%%
% 
A server will update its current view only if the least updated view $w$ of the proposed sequence is more up-to-date than its current view $cv$ (line 13).
If this is the case, either (1) server $i$ will be in the next view (lines 14-25) or (2) not (line 26-28).

\begin{itemize}

\item \textsc{Case 1} [$i$ will be in the next view (it may be joining the system)]:
If the server will be in the next view $w$, it first waits for the state from a quorum of servers from the previous view $ov$ and then defines the current value 
and timestamp of the register (lines 15-16), similarly to what is done in the 1st phase of a read operation (see Section~\ref{rw_prt}).
% and update its pending reconfigurations set $\mathtt{RECV}$ (lines 15-17).
After ensuring that its state is updated, the server updates $cv$ to $w$ and, if it is joining the system, it enables the processing of r/w operations (which will be queued until line 25 is executed).
Furthermore, the server informs leaving servers that its current view was updated (line 20).
The final step of the reconfiguration procedure is the verification if the new view will be installed or not (i.e., an auxiliary view).
If $cv=w$ is not the most up-to-date view of the generated sequence $seq$, a new sequence with the views more up-to-date than $cv$ will be proposed for updating it (lines 21-23).
Otherwise, $cv$ is installed and server $i$ resumes processing r/w operations (lines 24-25).

\item \textsc{Case 2} [$i$ is leaving the system]: A server that is leaving the system only halts after ensuring that the view $w$ to which it sent its state was started in a quorum of servers (lines 27-28).
%This is required for ensuring the reconfiguration liveness.

\end{itemize}

Although the algorithm (in some sense) restarts itself in line 23, the fact that the number of reconfiguration requests is finite (Assumption~\ref{finite_rec}) makes the reconfiguration procedure eventually terminate.
Furthermore, since all sequences generated by $\mathcal{L}$ are compatible, when the reconfiguration terminates in all correct servers, they will have installed the same view with all requested updates. 
% Appendix~\ref{ap:exec} presents an example of execution of the protocol.

\paragraph*{Optimization}
It is possible to reduce one communication step in the \textsc{FreeStore} reconfiguration protocol (Algorithm~\ref{alg_recon}) when using $\mathcal{L}$ (Algorithm \ref{alg_live}). 
The idea is to use the $\mbox{SEQ\_CONV}$ messages of $\mathcal{L}$ to inform the servers about a new sequence to be installed, making thus the $\mbox{INSTALL\_SEQ}$ message unnecessary.

\paragraph*{An Execution Example}

Figure \ref{fig:exec} illustrates an execution of \textsc{FreeStore} reconfiguration with $\mathcal{L}$ that makes a set of 
  servers converge to a single view, even in the presence of conflicting proposals.
  Let $v_0=\{1,2,3\}$ be the initial view of a system using our algorithm with the $\mathcal{L}$ view generator.
  In this example, \ding{192} servers $1,2,3$ receive the $join$ request from process $4$ before they start their view generators, while the join request from server $5$ was received only by server $1$.
  Consequently, the view generators associated with $v_0$ ($L(v_0)$) are started at servers $2,3$ with a sequence containing a view $v_1=\{1,2,3,4\}$ and at server $1$ with a sequence containing a view $v_2=\{1,2,3,4,5\}$.
  The Weak Accuracy of $L(v_0)$ allows \ding{193} servers $2,3$ to generate a sequence $seq_1: v_1$  and later servers $1,2,3$ may generate the sequence $seq_2: v_1 \rightarrow v_2$ also for updating $v_0$.%, i.e. with $seq_1 \subset seq_2$.

   %%%%%%%%%%%%%%%%%%%%%%%%%%%%%%%%%%%%%%%%
\begin{figure*}[!htb]
\begin{center}
\includegraphics[width=0.65\textwidth]{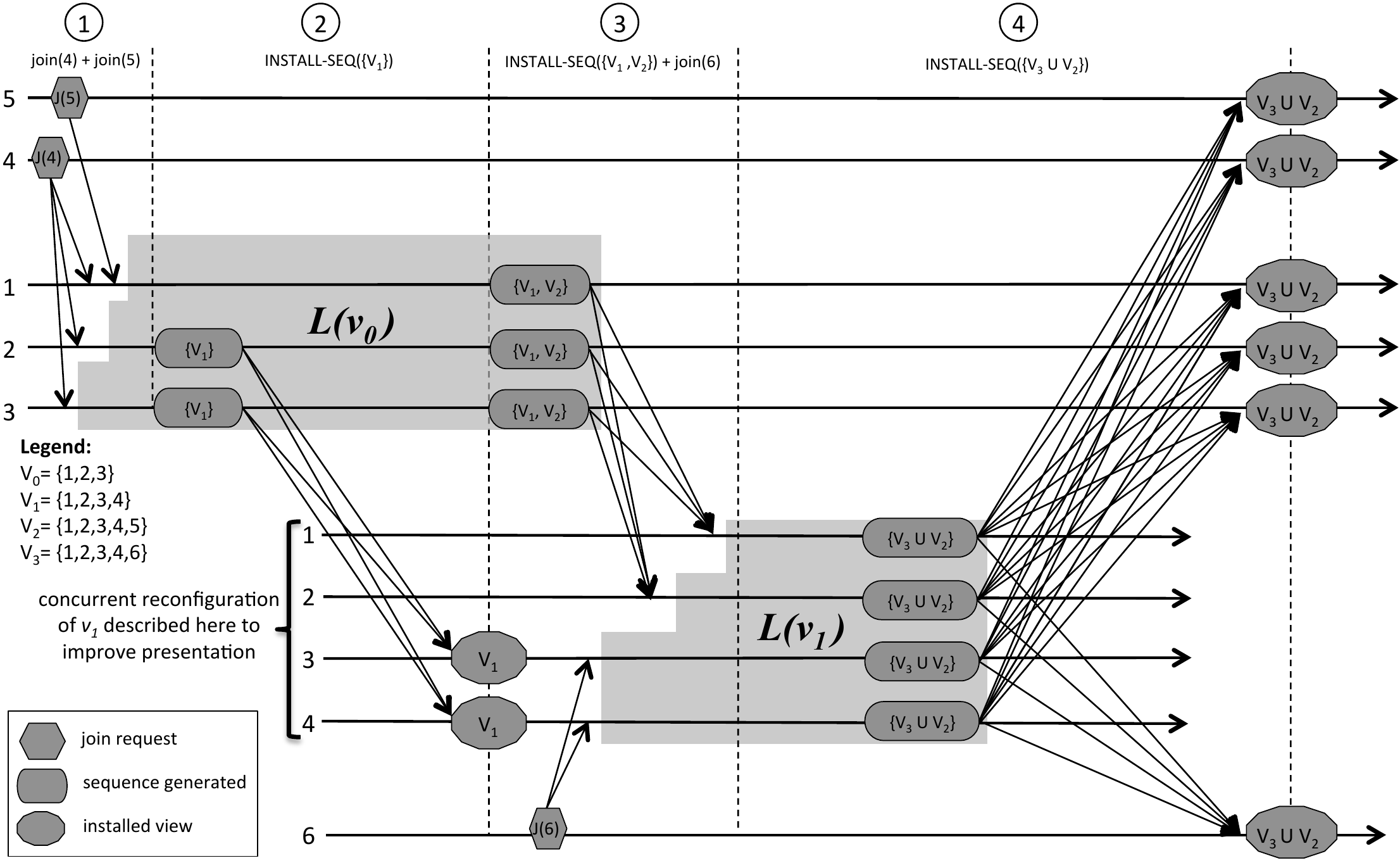}
\end{center}
% \vspace{-0.5cm}
\caption{Processes always converge to a single view, even in the presence of conflicting proposals.}
\label{fig:exec}
\end{figure*}
%%%%%%%%%%%%%%%%%%%%%%%%%%%%%%%%%%%%%%%%

In this execution, \ding{193} servers $3,4$ receive $\mbox{INSTALL-SEQ}$ messages for $seq_1$ and \ding{194} install $v_1$ before the reception of $\mbox{INSTALL-SEQ}$ messages for $seq_2$.
  Additionally, \ding{194} servers $3,4$ start their view generators $L(v_1)$ for updating $v_1$ to add server $6$, which sends a join request to them.
  More specifically, they propose a sequence $seq_3: v_3$ with $v_3 = \{1,2,3,4,6\}$ (line 7). 
  Concurrently, servers $1,2$ receive the $\mbox{INSTALL-SEQ}$ messages for $seq_2$ before receiving these messages for $seq_1$.
  Consequently, \ding{194} $v_1$ will be an auxiliary view at servers $1,2$ and they will start $L(v_1)$ with a sequence containing $seq_4: v_2$ (line 23).
  
  In this scenario, servers $1,2,3,4$ (from $v_1$) have to handle conflicting sequence proposals for updating $v_1$, since neither $v_3 \subset v_2$ nor $v_2 \subset v_3$.
  Depending on the schedule of messages received during the convergence of $L(v_1)$ (i.e., the composition of the first quorum of $\mbox{SEQ-VIEW}$ messages received by processes in $v_1$), the new sequence could be either (1)~$v_2$, (2) $v_3$, (3)~$v_2 \rightarrow (v_3 \cup v_2)$, (4)~$v_3 \rightarrow (v_3 \cup v_2)$ or (5)~$v_3 \cup v_2$.
  However, the Weak Accuracy property of $L(v_1)$ ensures that servers will not install both $v_3$ and $v_2$ (the conflicted views).
  Since neither $\{1,2\}$ nor $\{3,4\}$ form a quorum of $v_1$ (which requires three servers), they must deal with the conflicting proposals before converging to some sequence (lines 4-8 of Algorithm~\ref{alg_live}).  
  Consequently, the only allowed sequence is (5).
  Thus, \ding{195} $L(v_1)$ generates a sequence $seq_5: v_3 \cup v_2$ and all servers install the view $v_3 \cup v_2 = \{1,2,3,4,5,6\}$.

\paragraph*{Correctness (full proof in Section \ref{sec:basic})}
Algorithm \ref{alg_recon} ensures that \textit{an unique sequence of views is installed in the system} by the following: 
(1) if a view $w$ is installed, any previously installed view $w' \subset w$ is uninstalled and will not be installed anymore (lines 11, 18 and 25); (2) consequently, no view 
more up-to-date than $w$ is installed and, thus, the installed views form an unique sequence. Moreover, by Assumption~\ref{finite_rec} the reconfiguration procedure always terminate by installing a final view $v_{final}$. 
The \emph{Storage Safety} and \emph{Storage Liveness} properties are discussed in Section \ref{rw_prt}.
The remaining properties of Definition \ref{def_freestore} are ensured as follows. 
\textit{Reconfiguration Join/Leave Safety} are trivially ensured by the fact that only a server $i$ sends the reconfiguration request $\langle +,i \rangle$/$\langle -,i \rangle$ (lines 1-4).
\textit{Reconfiguration Join/Leave Liveness} are ensured by the fact that if an update request from a server $i$ is stored in $\texttt{RECV}$ of a quorum, then it will be processed in the next reconfiguration.
This happens because a quorum with the same proposal is required to generate some view sequence (Algorithm \ref{alg_live}), and this quorum intersects with the servers that acknowledged request of $i$ in at least one correct server. 
%if $i$ does not receive the acknowledgment from a quorum, then $i$ receives an up-to-date view and issues its update request again in this view.
Moreover, update requests received during a reconfiguration are sent to the next view (lines 12-17).

\subsection{Reconfiguration using $\mathcal{P}$}
\label{rec_ger_perfeito}

If $\mathcal{P}$ is used with the \textsc{FreeStore} reconfiguration protocol (Algorithm \ref{alg_recon}), all generators will generate the same sequence of views (Strong Accuracy) with a single view $w$ (obtained in line 8).
This will lead the system directly from its current view $cv$ to $w$ (lines 22-23 will never be executed).

\section{Read and Write Protocols}
\label{rw_prt}

This section discusses how a static storage protocol can be adapted to dynamic systems by using \textsc{FreeStore} reconfigurations. 
Since reconfigurations are decoupled from r/w protocols, they are very similar to their static versions.
In a nutshell, there are two main requirements for using our reconfiguration protocol.
First, each process (client or server) needs to handle a current view variable $cv$ that stores the most up-to-date view it knows.
All r/w protocol messages need to carry $cv$ and clients update it as soon as they discover that there is a more recent view installed in the system. 
The servers, on the other hand, reject any operation issued to an old view, and reply their current view to the issuing client, which updates its $cv$.
% The client then updates its $cv$ and restarts the phase of the operation it is executing. % (see below).
The client restarts the phase of the operation it is executing if it receives a view different from the one that it handles. % (see below).

The second requirement is that, before accessing the system, a client must obtain the system's current view.
This can be done by making servers put the current view in a directory service~\cite{Agu10} or making the client flood the network asking for it.
Notice this is an intrinsic problem for any dynamic system and similar assumptions are required in previous reconfiguration protocols~\cite{Gil10,Mar04,Agu11,Jen15,Gaf15}.
\begin{algorithm*}
% \vspace{-2mm}
\floatname{algorithm}{Algorithm} \caption{Algorithm executed at client \textit{c}}
\label{alg-client}
\begin{multicols}{2}
%  \begin{spacing}{0.7}
\scriptsize{
\begin{algorithmic}[1]
\renewcommand{\algorithmicrequire}{\textbf{procedure}}

\REQUIRE $\mathit{write(value)}$\\
\COMMENT{\textbf{Phase 1}} \\

\COMMENT{ABD Phase 1} \\

\fbox{\parbox{6.5cm}{
\STATE $\mathit{send \langle \mbox{READ\_TS} \rangle}$  $to$ $each$ $i \in cv$
\REPEAT
\STATE $\textbf{wait}$ $reply$ $\mathit{\langle\mbox{READ\_TS\_REP},ts_i, v_i\rangle}$ $from$ $i$ $\in cv$

\STATE $\mathit{\mathtt{RTS}_i \leftarrow \langle\mbox{READ\_TS\_REP},ts_i, v_i\rangle}$
\UNTIL{$\mathit{|\mathtt{RTS}| \geq cv.q	}$}
\STATE $\mathit{ts \leftarrow succ(max(\mathtt{RTS}),c)}$
}}

\vspace{1mm}
\COMMENT{FreeStore Phase 1 (updated view verification)} \\

\fbox{\parbox{6.5cm}{
\IF{$\mathit{\exists \langle\mbox{READ\_TS\_REP},ts_j, v_j\rangle \in \mathtt{RTS} \colon cv \neq v_j}$}
\STATE \textbf{if} $\mathit{cv \subset v_j}$ \textbf{then} $\mathit{cv \leftarrow v_j}$
\STATE $restart$ $phase$ $1$
\ENDIF
}}

% \STATE $\mathit{ts_{max} \leftarrow}$ 

\vspace{4mm} 
\COMMENT{\textbf{Phase 2}} \\
\COMMENT{ABD Phase 2} \\

\fbox{\parbox{6.5cm}{
\STATE $\mathit{send \langle\mbox{WRITE},value, ts\rangle}$  $to$ $each$ $i \in cv$
\REPEAT
\STATE $\textbf{wait}$ $reply$ $\mathit{\langle\mbox{WRITE\_REP}, v_i\rangle}$ $from$ $i \in cv$
\STATE $\mathit{\mathtt{WROTE}_i \leftarrow \langle\mbox{WRITE\_REP}, v_i\rangle}$
\UNTIL{$\mathit{|\mathtt{WROTE}| \geq cv.q}$}
}}

\vspace{1mm}
\COMMENT{FreeStore Phase 2 (updated view verification)} \\

\fbox{\parbox{6.5cm}{
\IF{$\mathit{\exists \langle\mbox{WRITE\_REP}, v_j\rangle \in \mathtt{WROTE} \colon cv \neq v_j}$}
% \IF{$\mathit{cv \subset v_i}$}
% \STATE $\mathit{cv \leftarrow v_i}$
% \ENDIF
\STATE \textbf{if} $\mathit{cv \subset v_j}$ \textbf{then} $\mathit{cv \leftarrow v_j}$
\STATE $restart$ $phase$ $2$
\ENDIF
}}

\vspace{6mm}

\REQUIRE $\mathit{read()}$\\
\COMMENT{\textbf{Phase 1}} \\
\COMMENT{ABD Phase 1} \\
\fbox{\parbox{7cm}{
\STATE $\mathit{send \langle \mbox{READ} \rangle}$ $to$ $each$ $i \in cv$
\REPEAT
\STATE {$\textbf{wait}$ $reply$ $\mathit{\langle\mbox{READ\_REP}, \langle val, ts\rangle_i , cv_i\rangle}$ $from$ $i \in cv$}
\STATE $\mathit{\mathtt{READ}_i \leftarrow \langle\mbox{READ\_REP}, \langle val, ts\rangle_i , cv_i\rangle}$
\UNTIL {$\mathit{|\mathtt{READ}| \geq cv.q}$}
\STATE $\mathit{\langle val_h, ts_h\rangle \leftarrow maxTS(\mathtt{READ})}$ 
}}

\vspace{1mm}
\COMMENT{FreeStore Phase 1 (updated view verification)} \\
\fbox{\parbox{7cm}{

\IF{$\mathit{\exists \langle\mbox{READ\_REP}, \langle val, ts\rangle_j , cv_j\rangle \in \mathtt{READ} \colon cv \neq v_j}$}
% \IF{$\mathit{cv \subset v_i}$}
% \STATE $\mathit{cv \leftarrow v_i}$
% \ENDIF
\STATE \textbf{if} $\mathit{cv \subset v_j}$ \textbf{then} $\mathit{cv \leftarrow v_j}$
\STATE $restart$ $phase$ $1$
\ENDIF
% \UNTIL{$\mathit{|\mathtt{READ}| \geq cv.q}$}
}}

 \vspace{2mm}
 \COMMENT{\textbf{Phase 2 (write-back phase)}} \\
 \COMMENT{ABD Phase 2} \\
 \fbox{\parbox{7cm}{
 \IF{$\mathit{\exists \langle\mbox{READ\_REP}, \langle *, ts\rangle_* , *\rangle \in \mathtt{READ} \colon ts \neq ts_h}$}
 \STATE $\mathit{send \langle\mbox{WRITE}, val_h, ts_h \rangle}$ $to$ $each$ $i \in cv$
 \REPEAT
 \STATE {$\textbf{wait}$ $reply$ $\mathit{\langle\mbox{WRITE\_REP}, cv_i\rangle}$ $from$ $i \in cv$}
 \STATE $\mathit{\mathtt{WROTE}_i \leftarrow \langle\mbox{WRITE\_REP}, cv_i\rangle}$
 \UNTIL{$\mathit{|\mathtt{WROTE}| \geq cv.q}$}
 \ENDIF
 }}

\vspace{1mm}

 \COMMENT{FreeStore Phase 2 (updated view verification)} \\
 
 \fbox{\parbox{7cm}{
 \IF{$\mathit{\exists \langle\mbox{WRITE\_REP}, v_j\rangle \in \mathtt{WROTE} \colon cv \neq v_j}$}
 % \IF{$\mathit{cv \subset v_i}$}
 % \STATE $\mathit{cv \leftarrow v_i}$
 % \ENDIF
 \STATE \textbf{if} $\mathit{cv \subset v_j}$ \textbf{then} $\mathit{cv \leftarrow v_j}$
 \STATE $restart$ $phase$ $2$
 \ENDIF
 }}

\RETURN $\mathit{val_h}$

\end{algorithmic}

}
\end{multicols}
\end{algorithm*}

% \vspace{-3mm}  
\begin{algorithm}
% \vspace{-2mm}  
\floatname{algorithm}{Algorithm} \caption{Algorithm executed at server \textit{i}.}
\label{alg-server}
% \begin{multicols}{2}
% \begin{spacing}{0.7}
\scriptsize{
\begin{algorithmic}[1]

\vspace{2mm}
 \COMMENT{ABD server algorithm} 
\newline
 \fbox{\parbox{7.5cm}{
\renewcommand{\algorithmicrequire}{\textbf{   \hspace{4mm }Upon receipt of}}

\REQUIRE $\mathit{\langle\mbox{READ\_TS} \rangle}$ $from$ $client$ $c$
% \IF{$\mathit{verify(view_c) \wedge currV \subset view_c}$}
% \IF{$\mathit{i \not\in view_c}$}
% \STATE $\mathit{halt}$
% \ENDIF
% \STATE $\mathit{\textbf{checkServerView}(cv_c)}$
% \ENDIF
\STATE $\mathit{send \langle\mbox{READ\_TS\_REP}, ts, cv\rangle}$ $to$ $c$

\vspace{1mm}

\REQUIRE $\mathit{\langle\mbox{WRITE},val_w, ts_w \rangle}$ $from$ $client$ $c$
% \STATE $\mathit{\textbf{checkServerView}(cv_c)}$
% \IF{$\mathit{isValid(P_{new}) \wedge P_{new}.hash = h(value) \wedge}$ \\
% \hfill$\mathit{isValid(cv_c) \wedge (cv = cv_c) \wedge}$ \\
%  \hfill $\mathit{(cv.rd\_info = cv_c.rd\_info)}$}
% \IF{$\mathit{isValidPC(P_{new}) \wedge P_{new}.hash = h(value)}$}
\IF{$\mathit{ts_w > ts}$}
\STATE $\mathit{val \leftarrow val_w}$
\STATE $\mathit{ts \leftarrow ts_w}$
\ENDIF
% \STATE $\mathit{\langle a_i,v_i\rangle \leftarrow \mathit{Th\_sign}(SK_i,VK_i,VK,P_{new}.ts)}$
\STATE $\mathit{send \langle\mbox{WRITE\_REP}, cv\rangle}$ $to$ $c$
% \ELSE[$cv_c$ is old or $c$ is faulty] 
% \STATE $\mathit{send \langle\mbox{WRITE\_REP},\perp,\perp, currV\rangle}$ $to$ $c$
% \ENDIF

\vspace{1mm}

\REQUIRE $\mathit{\langle\mbox{READ}\rangle}$ $from$ $client$ $c$
% \IF{$\mathit{view_p \neq \perp}$}
% \STATE $\mathit{\textbf{checkServerView}(cv_c)}$
% \ENDIF
\STATE $\mathit{send \langle\mbox{READ\_REP}, \langle val,ts \rangle, cv\rangle}$ $to$ $c$

 }}
\end{algorithmic}

}
% \end{spacing}

% \end{multicols}
% \vspace{-5mm}  
\end{algorithm}

In this paper we extend the classical ABD algorithm~\cite{Att95} for supporting multiple writers and to work with the \textsc{FreeStore} reconfiguration. 
In the following we highlight the main aspects of these r/w protocols (algorithms~\ref{alg-client} and \ref{alg-server}).
In the algorithms, we highlight the ABD code as well as the code related with \textsc{FreeStore}. 

The protocols to read and write from the dynamic distributed storage work in phases.
Each phase corresponds to an access to a quorum of servers in $cv$.
The \emph{read protocol} works as follows: 

\begin{itemize}

\item (\textsc{1st Phase}) a reader client requests a set of tuples $\langle val,ts \rangle$ from a quorum of servers in $cv$ 
($val$ is the value that the server stores and $ts$ is its associated timestamp) and selects the one with highest timestamp $\langle val_h,ts_h \rangle$;  
the operation ends and returns $val_h$ if all returned pairs are equal, which happens in executions without write contention or failures; 

\item (\textsc{2nd Phase}) otherwise, the reader client performs an additional \textit{write-back phase} in the system and waits for confirmations from a quorum of servers in $cv$ before returning $val_h$.

\end{itemize}

The \emph{write protocol} works in a similar way:

\begin{itemize}
\item (\textsc{1st Phase}) a writer client obtains a set of timestamps from a quorum of servers in $cv$ and chooses the highest, $ts_h$; 
the timestamp to be written $ts$ is defined by incrementing $ts_h$ and concatenating the writer id in its lowest bits; 

\item (\textsc{2nd Phase}) the writer sends a tuple $\langle val,ts \rangle$ to the servers of $cv$, writing $val$ with timestamp $ts$, and waits for confirmations from a quorum.
\end{itemize}

% 
% The proposed decoupling of r/w protocols from the reconfiguration makes it possible to run r/w operations in parallel with reconfigurations and 
% makes it easy to adapt other static fault-tolerant register implementation to dynamic environments.
% This happens because, differently from previous approaches~\cite{Gil10,Agu11,Jen15,Gaf15}, where r/w operations may be executed in multiple views of the system, in \textsc{FreeStore} a client 
% execute these operations only in the most up-to-date view installed in the system. 
% % To enable this, it is required that r/w operations to be blocked during the state transfer between views (Algorithm \ref{alg_recon}).

The proposed decoupling of r/w protocols from the reconfiguration makes it easy to adapt other static fault-tolerant register implementation to dynamic environments 
and makes it possible to run r/w operations in parallel with reconfigurations. This happens because, differently from previous approaches~\cite{Gil10,Agu11,Jen15,Gaf15}, where r/w operations may 
be executed in multiple views of the system, in \textsc{FreeStore} a client 
executes these operations only in the most up-to-date view installed in the system, which is received directly from the servers. 
To enable this, it is required that r/w operations to be blocked during the state transfer between views (Algorithm \ref{alg_recon}).

% \vspace{-2mm}
% \paragraph{Sketch of the correctness proof}
\paragraph*{Correctness (full proof in Section \ref{sec:free})}

The above algorithms implement atomic storage in the absence of reconfigurations~\cite{Att95}.
When integrated with \textsc{FreeStore}, they also have to satisfy the \textit{Storage Safety} and  \textit{Storage Liveness} properties of Definition~\ref{def_freestore}.
We start by discussing the \emph{Storage Liveness} property, which follows directly from the termination of reconfigurations.
%
%\begin{cor} \label{cor_termination}
%Clients' r/w operations always terminate.
%\end{cor}
%
%\vspace{-1mm}
%\begin{proof} (Sketch) 
As discussed before, clients' r/w operations are concluded only if they access a quorum of servers using the same view as the client, otherwise they are restarted.
% From Lemma \ref{lemma_rec_terminacao}, the system reconfiguration always terminate by installing some final view.
By Assumption \ref{finite_rec}, the system reconfiguration always terminate by installing some final view.
Consequently, a client will restart phases of its operations a finite number of times until this final view is installed in a quorum.
%\end{proof}
% 
\emph{Storage Safety} comes directly from three facts: (1) a r/w operation can only be executed in a single \emph{installed} view (i.e., all servers in the quorum of 
the operation have the same installed current view), (2) all installed views form a unique sequence%(Lemma~\ref{lemma_rec_sequencia})
, and (3) any operation executed in a view $v$ will still be ``in effect'' when a more up-to-date view $w$ is installed.
More precisely, assume $\langle val,ts \rangle$ is the last value read or written in $v$, and thus it was stored in $v.q$ servers from $v$.
During a reconfiguration from $v$ to $w$, r/w operations are disabled until all servers of $v$ send $\langle val,ts \rangle$ to the servers of $w$ (line 12), which terminate the reconfiguration only after receiving the state from a quorum of servers of $v$.
Consequently, all servers who reconfigure to $w$ will have $\langle val,ts \rangle$ as its register's state (lines 15-16).
This ensures that any operation executed in $w$ will not miss the operations executed in~$v$.

\section{\textsc{FreeStore} Correctness Proofs}
\label{sec:free_prop}

\subsection{Basic Properties}
\label{sec:basic}

The following lemmata show a sketch of the proofs that Algorithm \ref{alg_recon} with $\mathcal{L}$ installs an unique sequence of views in the systems and that the reconfiguration procedure always terminate.\footnote{We consider $\mathcal{L}$ because it may generate more than just one sequence of views (Weak Accuracy). These proofs are also valid for Algorithm~\ref{alg_recon} with $\mathcal{P}$ since Strong Accuracy is a special case of Weak Accuracy (see Section~\ref{geradores}).}

Recall that a view $v$ is \emph{installed} in the system if some correct server $i \in v$ considers $v$ as its \emph{current view} (var $cv$ in line 18 - Algorithm~\ref{alg_recon}) and enable r/w operations in this view (line 19). 
When this happens, we say that the previous view, which was installed before $v$, was \emph{uninstalled} from the system (Section~\ref{dyn_def}).

\begin{lem} \label{lemma_rec_vivo_vAuxiliar}\vspace{-0mm}
Let $v$ to be a view installed in the system and $\mathcal{G}^{v}$ its associated view generator. 
If $\mathcal{G}^{v}$ generates $seq: v_{1} \rightarrow ... \rightarrow v_{k} \rightarrow w$, then only $w$ will be installed in the system. 
\end{lem}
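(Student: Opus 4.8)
The plan is to trace the effect of the upcall $\mathcal{G}^v_i.\mathit{new\_view}(seq)$ through Algorithm~\ref{alg_recon} and show that the $k$ views $v_1,\dots,v_k$ are used only as state-transfer waypoints (line~25 is never reached with any of them as the current view), whereas reaching $w$ does trigger line~25. Throughout I keep the lemma's convention that $w$ is the \emph{most} up-to-date view of $seq$, and I write $\mathit{least\_updated}(seq)=v_1$.

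First I would observe that, since $seq$ is by construction a $\subset$-chain $v_1 \subset v_2 \subset \dots \subset v_k \subset w$ and, by Non-triviality of $\mathcal{G}^v$, $v \subset v_1$, we indeed have $\mathit{least\_updated}(seq)=v_1$. Hence, when a correct server $i\in v$ runs the $\mathcal{G}^v_i.\mathit{new\_view}(seq)$ handler, line~8 sets its local variable $w$ to $v_1$ and line~9 reliably multicasts $\langle\mbox{INSTALL-SEQ},v_1,seq,v\rangle$ to all members of $v\cup v_1$; by the agreement property of reliable multicast every correct member of $v\cup v_1$ delivers it. The case $k=0$ is immediate (then $seq$ is a single view $w$, line~21 fails at once, and line~25 installs $w$), so I focus on $k\ge 1$.

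Next I would analyze the delivery handler (lines~10--28) for an arbitrary correct server $j$ that delivers $\langle\mbox{INSTALL-SEQ},v_1,seq,v\rangle$. If $cv_j\not\subset v_1$, line~13 makes $j$ skip the update, so $j$ a fortiori does not install $v_1$. Otherwise $cv_j\subset v_1$: if $j\in v$ it stops r/w operations (line~11) and ships its state to $v_1$ (line~12); if $j\in v_1$ it waits for the state of a quorum of $v$ and sets $cv_j\leftarrow v_1$ (lines~13--18). The key point is line~21: since $k\ge 1$ we have $v_2\in seq$ and $v_1=cv_j\subset v_2$, so the test succeeds; $j$ therefore executes lines~22--23 --- it extracts the suffix $seq'=\{v_2,\dots,v_k,w\}$ and invokes $\mathcal{G}^{v_1}_j.\mathit{gen\_view}(seq')$ --- and never reaches line~25. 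Consequently no server installs $v_1$; $v_1$ is purely auxiliary. Applying the same reasoning inductively to $v_2,\dots,v_k$ --- each $v_\ell$ ($\ell\le k$) is the least updated view of the sequence still carried along the chain of $\mbox{INSTALL-SEQ}$ messages, and that sequence always contains the strictly more up-to-date views $v_{\ell+1},\dots,w$, while Non-triviality keeps every freshly generated view a strict superset of the current one (so the chain only moves forward) --- shows that none of $v_1,\dots,v_k$ is ever installed. The recursion bottoms out only when a server's current view becomes $w$: then $seq$ contains no view strictly more up-to-date than $w$, the test at line~21 fails, line~25 executes, and $w$ is installed. Hence $w$ is the unique view of $seq$ installed in the system.

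The step I expect to be the main obstacle is making that induction airtight under concurrent reconfigurations: the call $\mathcal{G}^{v_1}_j.\mathit{gen\_view}(seq')$ re-enters Algorithm~\ref{alg_live}, which may merge $seq'$ with proposals from servers that never processed $seq$, so $\mathcal{G}^{v_1}$ need not regenerate exactly $\{v_2,\dots,w\}$. What rescues the argument is that, by Non-triviality, every view $\mathcal{G}^{v_1}$ can generate strictly contains $v_1$, and any sequence on which a quorum of $v_1$ converges still contains $v_2\supset v_1$ (the proposals are only ever unioned, never shrunk, in Algorithm~\ref{alg_live}); so whatever $\mathcal{G}^{v_1}$ produces, $v_1$ is still detected as non-final at line~21 before it could be installed, and the same reasoning, relativized to the newly generated sequence, disposes of $v_2,\dots,v_k$ in turn. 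This is precisely the ``chain of reads'' through auxiliary views described informally above, and it is where the companion results --- uniqueness of the installed sequence and termination --- hook in.
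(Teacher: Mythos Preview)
Your proposal is correct and follows essentially the same approach as the paper: an induction on $k$ that traces Algorithm~\ref{alg_recon}, showing that for each intermediate $v_\ell$ the test at line~21 succeeds (so line~25 is never reached), whereas for $w$ it fails and $w$ is installed. The paper's sketch sidesteps the concurrency issue you flag in your last paragraph by simply asserting that every $i\in v_1$ proposes the same suffix $seq\setminus\{v_1\}$ to $\mathcal{G}^{v_1}$, whence $\mathcal{G}^{v_1}$ regenerates exactly that suffix; your discussion of why the induction survives even if $\mathcal{G}^{v_1}$ merges in other proposals is additional (and a bit hand-wavy --- line~8 of Algorithm~\ref{alg_live} is not a pure union, so ``proposals are only ever unioned, never shrunk'' is not literally true --- but the paper does not attempt to justify this either, deferring the matter to the companion Lemmata~\ref{lemma_rec_vivo_installW}--\ref{lemma_rec_sequencia}).
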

 
\vspace{-1mm}
\begin{proof}
Given the sequence $seq: v_{1} \rightarrow ... \rightarrow v_{k} \rightarrow w$ generated by $\mathcal{G}^{v}$,
we prove by induction on $k$ that views $v_1$ to $v_k$ will not be installed.
The base case ($k = 1$) is trivial.
Servers in $v_1$ receive the state from $v$ (line 15) and do not install this view since the condition of line 21 is satisfied.
Therefore, each $i \in v_1$ starts its view generator $\mathcal{G}^{v_1}_i$ by proposing the same sequence $seq_{v_1}: seq \setminus \{v_1\}$ (lines~22-23), which is eventually generated by $\mathcal{G}^{v_1}$ since all proposals are equal to $seq_{v_1}$.
In the induction step, we have to show if the claim is valid for $k-1$, it will be valid for $k$.
When processing $seq_{v_{k-1}}: v_k \rightarrow w$, each server in $v_k$ receives the state from $v_{k-1}$ and will not install $v_k$
since the condition of line 21 will be true. Consequently, views $v_1$ to $v_k$ will not be installed. Now, let us show that $w$ is installed.
Since servers in $v_k$ receive the sequence $seq_{v_{k-1}}$, the view generators $\mathcal{G}^{v_k}$ will be started with a
proposal $seq_{v_k}: seq_{v_{k-1}} \setminus \{v_k\} = \{w\}$, which eventually will be generated.
During the processing of $seq_{v_k} = w$ the predicate of line 21 will be false and $w$ will be installed (line 25).
Thus, only $w$ from $seq$ is installed in the system.
% $\strut\hfill \Box$
\end{proof}

% \begin{lem} \label{lemma_rec_vivo_installW}
% When a view $w$ is installed in the system, any previous view $w'$, such that $w' \subset w$, is not active anymore.
% \end{lem}
%  
% \begin{proof}
% Consider that $w$ is installed in the system.
% In this case, the view generators $vg_v$ associated with some view $v$ have defined a sequence $seq$ for its update, such that $w$ is the most up-to-date view.
% Then, for $w$ to be installed, (1) view generators of all views in $seq \setminus w$ have defined sequences for view updates until arriving to a view $w_{aux}$, immediately before $w$ in $seq$ and (2) $w_{aux}$ has defined a sequence only with $w$, in such a way that $w$ is installed (line 30). 
% In this case, all the other views in $seq$ are deactivated (lines 17 and 23).
% Now, consider that a previous sequence $seq'$ generated by $vg_v$ led to the installation of another view $w'$. From the weak accuracy of $vg_v$, $seq' \subset seq$ and, then, $w' \in seq$. Since $w' \in seq$, $w'$ is deactivated on the installation of $w$. 
% From the weak accuracy of $vg_v$ and the Lemma~\ref{lemma_rec_vivo_vAuxiliar}, any subsequent sequence$seq''$ generated by $vg_v$ is such that $seq \subset seq''$, then $w'$ will never be installed again. 
% By induction, any view $w'$, such that $w' \subset w$, is deactivated when $w$ is installed and will not be installed anymore. 
% \end{proof}

\begin{lem} \label{lemma_rec_vivo_installW}
If a view $w$ is installed in the system, any previously installed view $w' \subset w$ was uninstalled and will not be installed anymore.
\end{lem}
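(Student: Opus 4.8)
The plan is to prove the two assertions separately — that $w'$ has been uninstalled, and that $w'$ can never be installed again — relying on Lemma~\ref{lemma_rec_vivo_vAuxiliar} and on one structural invariant of Algorithm~\ref{alg_recon}: a server overwrites its current view only by a strictly more up-to-date one, because every reconfiguration that touches $cv$ is guarded by the test $cv \subset w$ in line~13. First I would fix a correct server $i \in w$ witnessing that $w$ is installed, i.e. one that executed $cv \leftarrow w$ (line~18) and then resumed r/w operations in $w$ (line~25) while processing an $\mbox{INSTALL-SEQ}$ message whose sequence has $w$ as its most up-to-date view. Tracing backwards the chain of generated sequences that led to this one — each link being an invocation of $\mathcal{G}^{ov}.\mathit{gen\_view}$ at line~7 or line~23 — Lemma~\ref{lemma_rec_vivo_vAuxiliar} shows that within each sequence only the last view is ever installed, and that every view appearing anywhere along the chain is a strict subset of $w$. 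I would then argue that any previously installed $w' \subset w$ sits on this chain, so a reconfiguration from $w'$ to some $w'' \supset w'$ was started, and its $\mbox{INSTALL-SEQ}$ message was reliably multicast (lines~8--9) to every correct server of $w'$. Hence each correct server of $w'$ eventually stops answering r/w operations in $w'$ (line~11) and either advances its current view to $w'' \supset w'$ (line~18) or, if it is leaving, disables operations and halts (lines~27--28); so no correct server of $w'$ keeps $w'$ as its current view while serving r/w operations, i.e. $w'$ is uninstalled.

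For the second assertion, suppose for contradiction that $w'$ becomes installed again at some time after $w$ does. Then some correct server $j \in w'$ sets $cv \leftarrow w'$ and resumes r/w operations in $w'$ at that later time. But by the previous paragraph $j$ already holds a current view strictly more up-to-date than $w'$ (or has left/crashed), and line~13 forbids $cv$ from ever reverting to $w'$. Moreover, by Assumption~\ref{finite_rec} only finitely many updates are ever requested, hence only finitely many view sequences — and thus finitely many $\mbox{INSTALL-SEQ}$ messages carrying $w'$ as their most up-to-date view — are ever created, and all of them are delivered and processed before $w$ is installed by the argument of the first paragraph; so no further server can be driven to adopt $w'$ either. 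This contradiction completes the proof.

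The main obstacle is the step asserting that every previously installed $w' \subset w$ lies on the chain of sequences culminating in $w$, rather than on some independent branch. This is exactly where Weak Accuracy (Lemma~\ref{lemma_vivo_weakacc}) does the work: since any two sequences generated by the view generator of a given view are mutually containment-comparable, the reconfiguration paths started from any installed view share a common prefix, so two installed views ordered by $\subset$ necessarily lie on a single linearly ordered path. Turning this observation into a rigorous induction on the order in which views are installed — while carefully accounting for reconfigurations triggered concurrently by different servers and by old-but-still-active generators $\mathcal{G}^{ov}$ with $ov \subset cv$ — is the bulk of the argument; the remaining steps are routine bookkeeping with the line numbers of Algorithm~\ref{alg_recon}.
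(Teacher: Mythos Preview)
Your proposal is broadly correct and identifies the right crux --- Weak Accuracy is indeed what forces any previously installed $w'$ onto the same path as $w$ --- but you take a considerably longer route than the paper. The paper's proof is almost entirely a direct appeal to Weak Accuracy: since $w$ was installed, some $\mathcal{G}^v$ generated a sequence $seq$ ending in $w$; if $w'$ was previously installed from the same $\mathcal{G}^v$ via a sequence $seq'$, Weak Accuracy gives $seq' \subseteq seq$, hence $w' \in seq$, hence the $cv$-update of line~18 during the processing of $seq$ already moved every participating server past $w'$ (uninstalled). For the second half, Weak Accuracy again forces any later sequence $seq''$ from $\mathcal{G}^v$ to satisfy $seq \subseteq seq''$, so $w'$ can never again be the terminal view, and the guard at line~13 blocks reverting. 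By contrast, you argue uninstallation via the reliable multicast of $\mbox{INSTALL-SEQ}$ and the stop at line~11, and you argue non-reinstallation via monotonicity of $cv$ plus Assumption~\ref{finite_rec}. The paper never invokes Assumption~\ref{finite_rec} here --- that is an unnecessary detour on your part --- and it never needs the reliable-multicast delivery argument either, because once $w' \in seq$ the chain-of-reads mechanics (lines~15--18) already force the $cv$ advance. Your approach has the virtue of being more explicit about what happens at each server and about why concurrent generators cannot create an independent branch; the paper's approach buys brevity by leaning entirely on the containment relation among generated sequences.
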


\vspace{-1mm} 
\begin{proof}
Consider that $w$ is installed in the system.
In this case, the view generators $\mathcal{G}^{v}$ associated with some view $v$ have defined a sequence $seq: v_1 \rightarrow ... \rightarrow v_k \rightarrow w$ for installing $w$ (Lemma \ref{lemma_rec_vivo_vAuxiliar}).
Consequently, all other views in $seq$ are uninstalled (line 18). 
Now, consider that a previous sequence $seq': v_1 \rightarrow ... \rightarrow v_y \rightarrow w'$, $y < k$, generated by $\mathcal{G}^{v}$ led to the installation of another view $w'$.
From the Weak Accuracy property of $\mathcal{G}^{v}$, $seq' \subseteq seq$ and $w' \in seq$.
Consequently, the fact that $w' \in seq$ implies that $w'$ is uninstalled on the installation of $w$.
From the Weak Accuracy of $\mathcal{G}^{v}$, any subsequent sequence $seq''$ generated by  $\mathcal{G}^{v}$ wold have $seq \subseteq seq''$.
Since the algorithm does not install outdated views (line 13), $w'$ will never be installed after $w$.
%This shows that any view $w'$ uninstalled when $w$ is installed will not be installed anymore. 
% $\strut\hfill \Box$
\end{proof}

\begin{lem} \label{lemma_rec_vivo_desativaW}
If a view $w$ is installed in the system, no other more up-to-date view $w'$ such that $w \subset w'$ was installed in the system.
\end{lem}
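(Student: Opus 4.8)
The plan is to argue by contradiction, relying essentially on the two preceding lemmata. Suppose $w$ is installed in the system and that, contrary to the claim, some more up-to-date view $w'$ with $w \subset w'$ is also installed in the system. First I would observe that ``installed'' is a persistent predicate in the sense captured by Lemma~\ref{lemma_rec_vivo_installW}: whenever a view $u$ is installed, every previously installed strictly smaller view was uninstalled and can never be installed again. Thus the set of installed views is totally ordered by $\subset$ (Storage Safety relies on exactly this), and the two candidates $w$ and $w'$ are comparable with $w \subset w'$.

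The key step is a case analysis on the temporal order in which $w$ and $w'$ get installed. If $w'$ is installed \emph{before} $w$, then applying Lemma~\ref{lemma_rec_vivo_installW} to the installation of $w'$ (with $w' $ in the role of the larger view and $w$ in the role of $w' \subset w'$) tells us that $w$ was uninstalled at that point and ``will not be installed anymore''; but then $w$ cannot be installed afterwards, contradicting the hypothesis that $w$ is installed. If instead $w$ is installed \emph{before} $w'$, I would trace how $w'$ could come to be installed: by Lemma~\ref{lemma_rec_vivo_vAuxiliar} the installation of $w'$ results from some view generator $\mathcal{G}^{u}$ (associated with a view $u$ that is installed) generating a sequence whose final view is $w'$; since $w$ itself was installed, $w$ appears in the sequence of installed views, and $u \subseteq w$. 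Non-triviality of the generator forces every view in any generated sequence for $u$ to strictly contain $u$, and the reconfiguration never installs outdated views (line~13 of Algorithm~\ref{alg_recon}), so this case is precisely the legitimate way for the system to move from $w$ to $w'$ --- which means $w'$ being installed after $w$ is \emph{not} itself a contradiction. Hence the statement as phrased must be read together with its temporal context: the lemma is really asserting that \emph{at the moment $w$ is installed}, no strictly larger $w'$ has \emph{already} been installed.

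Given that reading, the cleaner route is: assume $w$ is installed and some $w' \supset w$ was installed; by the total order on installed views and Lemma~\ref{lemma_rec_vivo_installW} applied to whichever of $w, w'$ was installed second, the one installed first must have been uninstalled and barred from future installation --- contradicting that \emph{both} are (or were, at the relevant instant) installed in the system. I expect the main obstacle to be exactly this bookkeeping about the temporal quantifier hidden in ``is installed'': one must be careful to invoke Lemma~\ref{lemma_rec_vivo_installW} in the direction that yields the ``will not be installed anymore'' conclusion for the view installed earlier, and to note that the algorithm's line~13 check (never install an outdated view) is what prevents re-installation of the smaller view once the larger one has appeared.
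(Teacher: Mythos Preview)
Your proposal is correct and rests on the same key step as the paper: argue by contradiction and invoke Lemma~\ref{lemma_rec_vivo_installW}. The paper's proof is much terser---it simply assumes a $w'$ with $w \subset w'$ is installed, applies Lemma~\ref{lemma_rec_vivo_installW} (with $w'$ in the role of the larger view) to conclude that $w$ should not be installed, and stops; it reads ``was installed'' in the statement as past tense and so dispenses entirely with the temporal case analysis you carry out. Your Case~2 discussion and the worry about the hidden temporal quantifier are not wrong, but they are unnecessary once the statement is read that way, and the paper does not engage with them.
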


\vspace{-1mm}
\begin{proof}
Assume there is a view $w'$, such that $w \subset w'$, and that this view is installed in the system. 
From Lemma~\ref{lemma_rec_vivo_installW}, $w$ should not be installed, which implies a contradiction. Consequently, $w'$ is not installed 
in the system.
% $\strut\hfill \Box$
\end{proof}

\begin{lem} \label{lemma_rec_sequencia}
The views installed in the system form an unique sequence.
\end{lem}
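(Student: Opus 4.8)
The plan is to show that the installed views are totally ordered by $\subset$, which (together with Assumption~\ref{finite_rec} guaranteeing a final view) gives that they form a unique sequence. First I would take any two views $w_1$ and $w_2$ that are installed in the system and argue that they must be comparable. Suppose, for contradiction, that neither $w_1 \subset w_2$ nor $w_2 \subset w_1$ nor $w_1 = w_2$; without loss of generality assume $w_1$ was installed no later than $w_2$ in real time. Since $w_2$ is installed and $w_1 \neq w_2$, I want to derive a contradiction with Lemmata~\ref{lemma_rec_vivo_installW} and \ref{lemma_rec_vivo_desativaW}.

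The key observation is that every installed view lies on a chain descending from the initial view $v_0$: each view $w$ installed by a server was produced by some view generator $\mathcal{G}^{ov}$ associated with an earlier installed view $ov$ with $ov \subset w$ (the reconfiguration protocol only installs $w$ with $ov \subset w$, lines 13 and 18, and $ov$ itself was either $v_0$ or installed earlier). Tracing this back, there is a sequence of installed views $v_0 = u_0 \subset u_1 \subset \dots \subset u_m = w_1$ and another $v_0 = u'_0 \subset u'_1 \subset \dots \subset u'_n = w_2$. These two chains share the prefix $v_0$; let $u_\ell = u'_\ell$ be the last common view (so $u_\ell$ is installed, and $u_{\ell+1} \neq u'_{\ell+1}$ if both chains extend past $u_\ell$). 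Both $u_{\ell+1}$ and $u'_{\ell+1}$ are generated by the view generator $\mathcal{G}^{u_\ell}$ associated with the same view $u_\ell$, and each is the least-updated view of some generated sequence. By Weak Accuracy of $\mathcal{G}^{u_\ell}$, the two generated sequences are containment-comparable, so one of $u_{\ell+1}, u'_{\ell+1}$ is contained in the other; since both are \emph{installed}, Lemma~\ref{lemma_rec_vivo_desativaW} forbids the strict containment of one installed view in another unless... — here I would instead invoke Lemma~\ref{lemma_rec_vivo_installW}: if $u_{\ell+1} \subsetneq u'_{\ell+1}$ and both are installed, then when $u'_{\ell+1}$ got installed $u_{\ell+1}$ was uninstalled and can never be reinstalled, which is fine, but then the chain from $u_{\ell+1}$ up to $w_1$ forces $w_1$ to be installed \emph{after} $u'_{\ell+1}$, and I push the argument up the chains to conclude that $w_1$ and $w_2$ must in fact be comparable, contradicting the assumption.

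Concretely, I would make this an induction: the claim is that for any two installed views, they are $\subset$-comparable, proved by induction on the length of the shorter of the two descending chains from $v_0$. The base case is when one of them equals $v_0$, which is contained in every view by Non-triviality. For the inductive step, using the last-common-ancestor $u_\ell$ as above, Weak Accuracy of $\mathcal{G}^{u_\ell}$ plus the fact that both successors are installed forces (by Lemma~\ref{lemma_rec_vivo_desativaW}, applied in the contrapositive to rule out incomparable installed views at that level, combined with the induction hypothesis on the shorter remaining chain) that the two full views $w_1$ and $w_2$ are comparable. I expect the main obstacle to be making precise the "descending chain of installed views from $v_0$" — i.e., justifying that each installed view has a well-defined installed predecessor via the $ov \subset w$ relationship in the $\mbox{INSTALL-SEQ}$/$\mathit{new\_view}$ handling of Algorithm~\ref{alg_recon}, and that auxiliary (non-installed) views in a generated sequence do not break the argument (they are handled by Lemma~\ref{lemma_rec_vivo_vAuxiliar}). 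Once the chain structure and the appeal to Weak Accuracy at the branching point are set up cleanly, total order — hence the unique-sequence conclusion, closed off by Assumption~\ref{finite_rec} giving a last installed view $v_{final}$ — follows.
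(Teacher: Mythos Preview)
Your route is substantially different from the paper's, and more elaborate than necessary. The paper gives a short temporal induction that leans entirely on Lemmata~\ref{lemma_rec_vivo_installW} and~\ref{lemma_rec_vivo_desativaW}: when some $w_1$ is installed, Lemma~\ref{lemma_rec_vivo_installW} guarantees every previously installed $w'\subset w_1$ is uninstalled and cannot be re-installed, and Lemma~\ref{lemma_rec_vivo_desativaW} guarantees no $w'\supset w_1$ is installed; then the next reconfiguration ``from $w_1$ to $w_2$'' satisfies $w_1\subset w_2$ by line~13 of Algorithm~\ref{alg_recon}, and the argument repeats at $w_2$. There is no chain construction, no last-common-ancestor reasoning, and no direct appeal to Weak Accuracy in this proof---all of that is already packaged inside Lemma~\ref{lemma_rec_vivo_installW}. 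What your structural approach buys is a more explicit treatment of the comparability of any two installed views, whereas the paper's argument is terser but relies on an implicit ``next installed view'' notion.

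That said, your plan has a concrete error you partially anticipate: the assertion that ``$ov$ itself was either $v_0$ or installed earlier'' is false. When a generated sequence has intermediate (auxiliary) views, Algorithm~\ref{alg_recon} sets $cv\leftarrow w$ at line~18 and immediately starts $\mathcal{G}^{cv}$ at line~23 without ever reaching line~25, so the $ov$ feeding the generator that ultimately produces an installed $w$ may well be auxiliary. Your chain of \emph{installed} views therefore need not have the branching point you want at $u_\ell$. The fix is to run the chain over all views that appear as some server's $cv$ (auxiliary or installed), since Weak Accuracy holds for every $\mathcal{G}^{u_\ell}$ regardless; combined with Lemma~\ref{lemma_rec_vivo_vAuxiliar} (the installed view from a generated sequence is its most-updated element), two $\subseteq$-comparable generated sequences yield $\subset$-comparable installed endpoints. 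Finally, drop the appeal to Assumption~\ref{finite_rec}: this lemma only asserts that the installed views form a sequence, not that the sequence terminates---that is Lemma~\ref{lemma_rec_terminacao}.
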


\vspace{-1mm}
\begin{proof}
Consider that a view $w_1$ is installed in the system.
From Lemma \ref{lemma_rec_vivo_installW}, any previous outdated view is uninstalled from the system.
From Lemma \ref{lemma_rec_vivo_desativaW}, no view more up-to-date than $w_1$ is installed.
Now, consider that a reconfiguration makes the system reconfigure from $w_1$ to another view $w_2$.
It means that $w_1 \subset w_2$ (line 13), that $w_1$ was uninstalled from the system (Lemma \ref{lemma_rec_vivo_installW}) and that no view more up-to-date than $w_2$ was 
installed (Lemma \ref{lemma_rec_vivo_desativaW}).
The same argument can be applied inductively to any number of views, and thus it is possible to see that the views installed in the system form 
an unique sequence $w_1 \rightarrow w_2 \rightarrow ... \rightarrow w_{k-1} \rightarrow w_k$.
% From Lemma~\ref{lemma_rec_vivo_vAuxiliar}, only the most up-to-date view $w$ of a sequence 
% $seq: v_{1} \rightarrow ... \rightarrow v_{k} \rightarrow w$ generated for system update will be installed. 
% The view generators of the auxiliary views in $seq$ do not introduce new views in their generated sequences 
% since they are started at line 23 receiving as input only views in $seq$. Consequently, only view generators associated 
% with views that are installed in the system may add new views in the sequences since they are started at line 7 receiving 
% by input the updates requests received. 
% 
% Furthermore, when a view $w$ is installed, all previous views $w' \subset w$ have already been uninstalled (Lemma~\ref{lemma_rec_vivo_installW}) and no view more up-to-date than $w$ has been installed before (Lemma~\ref{lemma_rec_vivo_desativaW}).
% This implies that the view generators associated with $w$ may only generate sequences with new views, 
% since no installed view is going to not belong to this sequence, once there is no such view (???). 
% Thus, the views installed in the system form an unique sequence.
% $\strut\hfill \Box$
\end{proof}

\begin{lem} \label{lemma_rec_terminacao}
The reconfiguration procedure always terminate.
\end{lem}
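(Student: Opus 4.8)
The plan is to argue that, because the number of updates in any execution is finite (Assumption~\ref{finite_rec}), the chain of view generators invoked by Algorithm~\ref{alg_recon} is finite in length, and each individual view generator invocation terminates (by the Termination property of view generators, Theorems~\ref{thm_perfect} and~\ref{thm_vivo_prop}, together with Lemma~\ref{lemma_vivo_termination}). First I would fix an execution and let $\mathcal{U}$ be the finite set of update requests ever issued; every view $w$ that is ever proposed satisfies $w.entries \subseteq V(0).entries \cup \mathcal{U}$, so there are only finitely many distinct views that can appear anywhere in the run. Since views are compared by set inclusion on their $entries$, any $\subset$-increasing chain of views has length at most $|\mathcal{U}|+1$.

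Next I would trace the structure of the reconfiguration. A view generator $\mathcal{G}^{cv}_i$ is started either at line~7 (timer expiry with pending updates) or at line~23 (re-proposing the more up-to-date suffix of a sequence after installing an auxiliary view). In either case, by Termination of view generators every correct server that calls $\mathit{gen\_view}$ eventually gets a $\mathit{new\_view}$ upcall, and by Lemma~\ref{lemma_num_seq} (for $\mathcal{L}$) or trivially (for $\mathcal{P}$) only finitely many sequences are generated per view. Each generated sequence $seq: v_1 \rightarrow \dots \rightarrow v_k \rightarrow w$ leads, by Lemma~\ref{lemma_rec_vivo_vAuxiliar}, to the installation of exactly one view $w$ after a finite chain of state transfers along the auxiliary views $v_1,\dots,v_k$; each state-transfer step terminates because there is always a quorum of correct servers in each view (Assumptions~\ref{f_lim}--\ref{q_size} and \ref{g_leaves}) that send their $\mbox{STATE-UPDATE}$ messages (line~12), and the waiting servers only need replies from a quorum of the previous view (line~15). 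The Non-triviality property guarantees $cv \subset w$, so every such installation strictly advances the current view along the inclusion order.

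Then I would close the argument: each time a view generator produces a sequence and the algorithm installs its final view $w$, the installed current view strictly grows (with respect to $\subset$); by Lemma~\ref{lemma_rec_sequencia} the installed views form a single $\subset$-increasing sequence, whose length is bounded by $|\mathcal{U}|+1$. Hence only finitely many views can ever be installed, so line~23 (which re-arms the reconfiguration with a strictly more up-to-date suffix) can fire only finitely often, and for each installed view only finitely many sequences are generated (Lemma~\ref{lemma_num_seq}), each processed in a finite number of communication steps. Therefore after finitely many steps the system installs a final view $v_{final}$ for which no pending updates remain unprocessed, the predicate of line~21 is false, and r/w operations are resumed (line~25) without any further view generator being started. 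The reconfiguration procedure thus terminates.

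The main obstacle I expect is making precise the claim that \emph{all} pending updates are eventually drained, rather than the weaker claim that the installed sequence is finite: one must argue that once no new updates are issued (which happens after finite time by Assumption~\ref{finite_rec}), any remaining update sitting in some $\mathtt{RECV}$ set is carried forward across reconfigurations (lines~12 and~17 propagate update requests to the next view) and is eventually placed in the $\mathtt{RECV}$ of a quorum, so it is incorporated into the next generated sequence (as in the Reconfiguration Join/Leave Liveness argument) — after which no correct server has a non-empty, not-yet-processed $\mathtt{RECV}$, the timer at line~7 fires with nothing to do, and no further generator is started. Care is also needed to handle view generators associated with \emph{old} views $ov \subsetneq cv$ that are still active (the footnote on line~8): these too can only generate finitely many sequences over the finitely many views that ever exist, so they contribute only finitely many additional (ultimately redundant, by Weak Accuracy and Lemma~\ref{lemma_rec_vivo_installW}) reconfiguration attempts.
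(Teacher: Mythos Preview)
Your proposal is correct and follows essentially the same approach as the paper: use Assumption~\ref{finite_rec} to bound the number of views that can ever appear, use Lemma~\ref{lemma_num_seq} to bound the number of sequences each generator produces, and conclude that eventually a final view $v_{final}$ is installed at which $\mathtt{RECV}=\emptyset$ everywhere so no further generator is started. Your treatment is considerably more detailed than the paper's (which is a terse five-line sketch); in particular, your explicit bound $|\mathcal{U}|+1$ on the length of the $\subset$-chain of installed views, your argument that each state-transfer step terminates via quorum availability, and your discussion of draining $\mathtt{RECV}$ across reconfigurations and of lingering generators for old views $ov\subsetneq cv$ are all points the paper glosses over, but the underlying strategy is the same.
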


\vspace{-1mm}
\begin{proof}
Consider the system starts from an initial view $v_0$. 
The system reconfiguration starts by initializing view generators $\mathcal{G}^{v_0}$ (line 7), that will define some sequence for updating $v_0$ (lines 8-9).
Lemma \ref{lemma_num_seq} states that the number of sequences generated by $\mathcal{G}^{v_0}$ is finite.
%and, as the number of sequences generated by the view generators associated with the views in these sequences also is finite (Lemma \ref{lemma_num_seq}), 
The reconfiguration started for $v_0$ is going to terminate when the last view $w$ of the last sequence is installed.
In the same way, a reconfiguration started by $\mathcal{G}^{w}$ makes the system reconfigure from 
$w$ to other updated view and finishes.
However, by Assumption \ref{finite_rec}, there will be some view $v_{final}$ such that $\mathcal{G}^{v_{final}}$ will never be started since $\mathtt{RECV} = \emptyset$ in all servers of $v_{final}$ (line 7).
At this point the reconfiguration procedure terminates.
% $\strut\hfill \Box$
\end{proof}

\subsection{\textsc{FreeStore} Properties}
\label{sec:free}

The following lemmata and theorems prove that Algorithm~\ref{alg_recon} and the ABD adaptation for \textsc{FreeStore} presented in Algorithms \ref{alg-client} and \ref{alg-server} 
satisfy the properties of Definition~\ref{def_freestore}.

\subsubsection{Liveness}

Let us now show a sketch of proof that \textsc{FreeStore} satisfies the liveness properties. 
Consider that $V(t)$ is the most up-to-date view installed in the system.

\begin{thm} \label{thm_10}  (\textbf{Storage Liveness}): Every read/write operation executed by a correct client eventually completes.
 \end{thm}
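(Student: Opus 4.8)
The plan is to show that any read or write operation by a correct client terminates by combining two facts: (i) each phase of an r/w operation completes once the client is issuing requests to a quorum of servers that all share the client's current view, and (ii) by Assumption~\ref{finite_rec} the system only reconfigures finitely many times, so eventually a ``final'' installed view stabilizes. First I would recall the structure of the client algorithm (Algorithms~\ref{alg-client} and~\ref{alg-server}): every phase consists of sending requests to the members of $cv$ and waiting for $cv.q$ matching replies; a server that has moved to a more up-to-date view rejects the request and returns its current view, causing the client to update $cv$ and restart the phase. So the only way a phase fails to complete is if the client keeps receiving ``newer view'' responses, i.e.\ keeps restarting.

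Next I would bound the number of restarts. By Lemma~\ref{lemma_rec_terminacao}, the reconfiguration procedure always terminates, and by Lemma~\ref{lemma_rec_sequencia} the installed views form a unique finite sequence $v_0 \rightarrow v_1 \rightarrow \cdots \rightarrow v_{final}$. Each time the client restarts a phase because of a stale-view reply, the client's $cv$ strictly advances along this sequence (the server returns a view $v_j$ with $cv \subset v_j$, and line~9/line~8 of the respective phase sets $cv \leftarrow v_j$). Since the sequence is finite, the client restarts each phase at most $|\{v_0,\dots,v_{final}\}|-1$ times, after which its $cv$ equals $v_{final}$.

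Then I would argue that once $cv = v_{final}$ at the client, the phase completes. Here I need that $v_{final}$ is installed and stays active: by Assumption~\ref{finite_rec} no further reconfiguration is triggered from $v_{final}$ (as argued in Lemma~\ref{lemma_rec_terminacao}, $\mathtt{RECV}=\emptyset$ at all servers of $v_{final}$), so no correct server in $v_{final}$ ever moves to a newer view and thus none rejects the client's request. By Assumption~\ref{f_lim} there are at least $v_{final}.q$ correct servers in $v_{final}$ (quorum size $\lceil(|v_{final}|+1)/2\rceil$ and at most $\lfloor(|v_{final}|-1)/2\rfloor$ faults), and by the gentle-leaves Assumption~\ref{g_leaves} any server that asked to leave but is still needed remains until $v_{final}$ is installed. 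These correct servers eventually receive the request over reliable channels, reply according to the ABD server code (Algorithm~\ref{alg-server}), and the client collects $v_{final}.q$ matching replies, so the repeat-until loop of each phase terminates; a read with write contention additionally runs the write-back phase, which completes by the same quorum argument, and then returns $val_h$. Chaining the (at most two) phases, the whole operation completes.

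The main obstacle I expect is the subtle interaction between r/w operations and the state-transfer window inside reconfigurations: while a view $w$ more up-to-date than $cv$ is being installed, servers in the old view \emph{stop} executing r/w operations (line~11 of Algorithm~\ref{alg_recon}) and only resume after the transfer (line~25). So a client might be ``stuck'' waiting on servers that are temporarily blocked. I would handle this by observing that each such blocking episode corresponds to one step of a reconfiguration, reconfigurations are finite (Assumption~\ref{finite_rec}) and each one terminates (Lemma~\ref{lemma_rec_terminacao}), so every blocked server eventually resumes (or halts, if leaving, in which case the client has already been redirected to the newer view); hence the client is blocked only finitely long before it either gets its quorum of replies in the stable final view or gets redirected forward. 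Care is also needed to note that a client that still holds an already-uninstalled view will be redirected rather than deadlocked, which is exactly what the stale-view rejection mechanism guarantees.
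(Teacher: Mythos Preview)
Your proposal is correct and follows essentially the same approach as the paper's proof: both rely on Assumption~\ref{finite_rec} and Lemma~\ref{lemma_rec_terminacao} to argue that reconfigurations eventually cease at some $v_{final}$, use Assumption~\ref{f_lim} to guarantee a live quorum there, and explicitly address the temporary blocking of r/w operations during state transfer (line~11 of Algorithm~\ref{alg_recon}). Your presentation bounds the number of phase restarts via the finite installed-view sequence (Lemma~\ref{lemma_rec_sequencia}) rather than via the paper's three-case analysis, but this is a cosmetic difference in organization, not in substance.
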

 \begin{proof}
Consider a read/write operation $o$ executed by a correct client $c$. 
For each phase \textit{ph} from $o$, $c$ sends a message to all servers in its current view $cv_c$ and waits for a 
quorum of responses (lines 5,14,22 and 32 -- Algorithm~\ref{alg-client}; lines 1, 5 an 6 -- Algorithm~\ref{alg-server}). In this scenario, we have three cases.

\begin{enumerate}

\item Case each server $i$ of this quorum replies with a view $v_i = cv_c$ (lines 7,15,24 and 33 -- Algorithm~\ref{alg-client}), $c$ ends \textit{ph} in $cv_c$ 
since $cv_c$ has at least a quorum of correct servers (Assumption~\ref{f_lim}). Consequently, $c$ starts the execution of the next phase for $o$, and so on, 
until it finishes the last phase for $o$ in $cv_c$, finally concluding~$o$. 

\item Case some server $i$ of this quorum replies with a view $v_i \neq cv_c$, then $c$ updates $cv_c$ if it 
discovers a most up-to-date view (lines 8,16,25 and 34 -- Algorithm~\ref{alg-client}) and reinitiates \textit{ph}. 
From Assumption~\ref{finite_rec} and Lemma~\ref{lemma_rec_terminacao}, there is a time $t$ 
such that the system stops executing reconfigurations. Consequently, the last reconfiguration executed in the system lead to the installation of some up-to-date 
view $w$ ($V(t) = w$) (Lemma \ref{lemma_rec_terminacao}). Since servers send $w$ to $c$ that updates its $cv_c$ to $w$, we reach the previous Case $1$. 

\item Case $o$ is concurrent with a reconfiguration and servers blocked its execution at line 11 of Algorithm~\ref{alg_recon}, 
then some up-to-date view is going to be installed (Lemma~\ref{lemma_rec_terminacao}) and 
the servers unlock its execution at line 25 of Algorithm~\ref{alg_recon}. Since $cv_c$ is not up-to-date we reach the previous Case $2$.
\end{enumerate}
Therefore, in the three possible cases $o$ eventually completes and the theorem follows.
% $\strut\hfill \Box$
\end{proof}

\begin{lem} \label{lemma_liv_2} 
Consider that a correct server $i$ executes a \textit{join}/\textit{leave} operation at time $t$, by sending an \textit{update} $u$ 
to the servers in $V(t)$. Then, there will be a view $V(t')$, such that $V(t') \neq V(t)$ and $t'>t$, where $u \in V(t')$.

\end{lem}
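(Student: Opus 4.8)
The plan is to track the update $u$ from the moment it is acknowledged by a quorum in $V(t)$ until it necessarily appears in the entries of some later installed view. First I would observe that the \textit{join}/\textit{leave} operation at server $i$ (lines~1--4 of Algorithm~\ref{alg_recon}) returns only after receiving $\mbox{REC-CONFIRM}$ replies from a quorum $Q$ of $V(t).q$ servers in $V(t)$; by line~5, each server in $Q$ has stored $u$ in its local $\mathtt{RECV}$ set. The key is that $u$ is never silently dropped: a server only removes an update from $\mathtt{RECV}$ when it is already reflected in the newly installed view's entries (line~17, ``$\setminus\, w.entries$''), and pending updates are forwarded to the next view during any state transfer (lines~12 and~17). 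So once $u$ is in the $\mathtt{RECV}$ of a quorum of $V(t)$, it remains ``live'' — either already in the current view's entries, or still pending in a quorum — until some installed view accounts for it.

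The second step is to use quorum intersection together with the structure of the view generator. By Assumption~\ref{finite_rec} and Lemma~\ref{lemma_rec_terminacao}, reconfigurations started in $V(t)$ (or its successors) eventually terminate with some installed view $w$ more up-to-date than $V(t)$; such a $w$ exists because, by the Gentle Leaves assumption and the fact that $i$ is correct, if $u$ has not yet been incorporated then $i$ (or the quorum that acknowledged $u$) will keep $u$ pending and a reconfiguration will eventually be triggered by the timer (line~7) at some server holding $u$. To generate any view sequence for updating a view $v'$ (with $V(t) \subseteq v'$ along the installed chain, by Lemma~\ref{lemma_rec_sequencia}), the live view generator $\mathcal{L}$ requires a quorum of servers of $v'$ to converge on a common proposal (lines~12--14 of Algorithm~\ref{alg_live}); I would argue that the set of servers carrying $u$ in their $\mathtt{RECV}$ (initially $Q$, and inductively a quorum of every subsequent installed view, since state and pending updates are transferred to quorums at lines~12,~15--17) intersects this converging quorum in at least one correct server. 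That correct server contributes $u$ to its proposal (line~7 uses $cv \cup \mathtt{RECV}$), so $u$ appears in the generated sequence and hence in the entries of the view that gets installed. Therefore there is an installed view $V(t')$ with $t' > t$, $V(t') \neq V(t)$, and $u \in V(t').\mathit{entries}$.

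The main obstacle is the inductive bookkeeping showing that $u$ is preserved across a whole chain of intermediate/auxiliary views: I need that whenever the system moves from one view to the next (whether an auxiliary view in a sequence or an installed view), a quorum of the new view ends up with $u$ in its $\mathtt{RECV}$ — this relies on every leaving server staying until the transfer completes (Assumption~\ref{g_leaves}), on r/w and reconfiguration messages carrying $cv$ so stale servers are redirected, and on line~17 removing $u$ from $\mathtt{RECV}$ only when $u \in w.\mathit{entries}$, i.e., exactly when the goal has been achieved. Once this invariant is established, the intersection argument and the termination of reconfigurations give the claim. I would also need to handle the corner case where $u$ is a \textit{leave} of $i$ itself: here $i$ stays by Assumption~\ref{g_leaves} until a view without $i$ is installed, and since any such view $v$ has $u = \langle -,i\rangle \in v.\mathit{entries}$ (otherwise $i \in v.\mathit{members}$), the conclusion holds directly.
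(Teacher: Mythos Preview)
Your overall plan---tracking $u$ in the $\mathtt{RECV}$ sets across successive views via the state-transfer lines~12 and~15--17, and appealing to Assumption~\ref{finite_rec} for eventual termination---is sound and is essentially the mechanism the paper relies on. But the central step you propose in the second paragraph does not go through as stated. You argue that the converging quorum of $\mathcal{L}$ (lines~12--14 of Algorithm~\ref{alg_live}) intersects the set of servers carrying $u$ in $\mathtt{RECV}$, and that the server in the intersection ``contributes $u$ to its proposal (line~7 uses $cv \cup \mathtt{RECV}$).'' This inference fails: a server's participation in the convergence of $\mathcal{L}$ is governed by its $\mathtt{SEQ}^v$ variable, which is seeded either by its own call to $\mathit{gen\_view}$ \emph{or} by incoming $\mbox{SEQ-VIEW}$ messages (lines~4--11 of Algorithm~\ref{alg_live}). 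A server with $u \in \mathtt{RECV}$ may receive a $\mbox{SEQ-VIEW}$ (not containing $u$) from another server before its own timer fires; it then adopts and forwards that sequence, and its later call to $\mathit{gen\_view}$ is a no-op (line~1 checks $\mathtt{SEQ}^v = \emptyset$). So having $u$ in $\mathtt{RECV}$ at a server in the converging quorum does \emph{not} force $u$ into the generated sequence.

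The paper sidesteps this with a simpler two-case argument that you should adopt. Either every server in $V(t)$ that starts $\mathcal{G}^{V(t)}$ does so with a proposal containing $u$---then every $\mbox{SEQ-VIEW}$ message ever sent for $V(t)$ contains $u$, hence so does the generated sequence (Case~1)---or some server proposes without $u$ (Case~2). In Case~2 the next installed view $V(t')$ may or may not contain $u$; if not, the key observation is that \emph{every} correct server in $V(t')$ (not merely a quorum) receives $u$ during state transfer, because each waits for $ov.q$ $\mbox{STATE-UPDATE}$ messages and that set intersects your quorum $Q$. Crucially, a server cannot start $\mathcal{G}^{V(t')}$ until it has set $cv \leftarrow V(t')$ at line~18, which happens only \emph{after} lines~15--17, so $u$ is already in its $\mathtt{RECV}$ when line~7 fires. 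Hence in $V(t')$ we are back to Case~1. Your ``main obstacle'' paragraph is actually the heart of the proof, and once you strengthen the invariant from ``a quorum of the new view carries $u$'' to ``every correct server of the new view carries $u$ before it can initiate a generator for that view,'' the quorum-intersection step becomes unnecessary and the argument closes.
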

\begin{proof}
Considering that a correct server $i$ send $u$ to the servers in $V(t)$ at $t$, two cases are possible. 

\begin{enumerate}
\item Case all servers in $V(t)$ start their view generators associated with $V(t)$ proposing a sequence with a view that contains $u$ (line 7 -- Algorithm~\ref{alg_recon}). 
In this case, the fact that all proposals contains $u$ ensures that there will be a view $V(t')$, such that $t' > t$ and $V(t') \neq V(t)$,  where $u \in V(t')$.

\item Case some servers in $V(t)$ start their view generators associated with $V(t)$, before the receipt of $u$, proposing a sequence with a view that does not contain $u$ (line 7 -- Algorithm~\ref{alg_recon}). 
In this case, considering a time $t' > t$ when a reconfiguration has installed a new view $V(t')$, two cases are possible: (a) $u \in V(t')$ since the proposals from the servers that 
received and proposed $u$ was computed in $V(t')$; or (b) $u \not\in V(t')$ and $u$ was sent to all servers in $V(t')$ (lines 12 and 17 -- Algorithm~\ref{alg_recon}) that in the next reconfiguration will start its view generators with a proposal that contains $u$, reducing this  situation to Case $1$.
\end{enumerate}

Therefore, in any case there will be a view $V(t')$, such that $V(t') \neq V(t)$ and $t'>t$, where $u \in V(t')$. Consequently, the lemma follows.
%  $\strut\hfill \Box$
 \end{proof}

\begin{thm} \label{thm_11}  (\textbf{Reconfiguration -- Join Liveness}): 
Eventually, the \emph{enable operations} event occurs at eve\-ry correct server that invoked a $\mathit{join}$ operation.
 \end{thm}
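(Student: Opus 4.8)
The plan is to combine the join-liveness chain: a correct server that invokes \textit{join} sends an update $u = \langle +,i\rangle$ to $V(t)$, and by Lemma~\ref{lemma_liv_2} there is a later view $V(t')$ with $u \in V(t')$. What remains is to show that server $i$ itself eventually reaches line~19 of Algorithm~\ref{alg_recon} and executes \emph{enable operations} for some such view. First I would observe that before issuing \textit{join}, server $i$ has obtained the system's current view (the bootstrapping/view-discovery requirement of Section~\ref{rw_prt}), so its $\langle\mbox{RECONFIG}\rangle$ message reaches $cv.q$ correct servers in $V(t)$, each of which stores $u$ in $\mathtt{RECV}$ and acknowledges (lines~5--6), so the \textbf{wait} at line~2 returns. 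Since a quorum of $V(t)$ holds $u$ in $\mathtt{RECV}$, and any sequence generated by the view generator for $V(t)$ requires a quorum of equal proposals, that quorum intersects the acknowledging quorum in a correct server; thus $u$ is in every generated sequence, giving $i \in w$ for the least-updated view $w$ of the generated sequence (line~8).

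Next I would trace the reliable multicast of $\langle\mbox{INSTALL-SEQ}, w, seq, ov\rangle$ to $\{j : j \in ov \vee j \in w\}$ (line~9): since $i \in w$, server $i$ R-delivers this message. Because $i \notin ov$ (it was not yet a member), $i$ skips lines~10--12 and, as $cv \subset w$ holds (server $i$ still has $cv = v_0$ or some view not containing $u$, and $w$ contains $u$), $i$ enters the block at line~13; since $i \in w$, it takes Case~1 (lines~14--25). Server $i$ then waits for $\langle\mbox{STATE-UPDATE}\rangle$ from $ov.q$ servers in $ov$ (line~15) --- this wait completes by Assumption~\ref{f_lim}, which guarantees a quorum of correct servers in $ov$ that each send their state (line~12) after R-delivering the same $\langle\mbox{INSTALL-SEQ}\rangle$. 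After updating its state, $i$ sets $cv \leftarrow w$ (line~18) and, because $i \notin ov$, executes \emph{enable operations} at line~19. This establishes that for the \emph{first} sequence installing a view containing $u$, the enable event fires.

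The one subtlety to address is termination: the view generator for $ov$ must actually generate such a sequence, and the reliable multicast and the state-transfer waits must not block forever. Termination of the view generator follows from its Termination property (Definition~\ref{def_view_gen}), which applies because some correct server in $ov$ with non-empty $\mathtt{RECV}$ will have its timer expire and call $\mathit{gen\_view}$ (line~7); reliable multicast delivery to correct members follows from its standard agreement/validity properties~\cite{Had94} plus reliable channels; and the state-transfer quorum is available by the fault threshold. Finally, even in the pathological case where server $i$'s join update is not in the very first sequence generated for $V(t)$ (Case~2 of Lemma~\ref{lemma_liv_2}), the update $u$ is forwarded to the next view (lines~12 and~17), so by Assumption~\ref{finite_rec} after finitely many reconfigurations some installed or auxiliary view contains $u$, and the argument above applies at that point. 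The main obstacle is being careful that the chain-of-reads/auxiliary-view mechanism does not cause $i$ to be dropped from an \emph{intermediate} view before reaching an installed one --- but Lemma~\ref{lemma_liv_2} already guarantees $u$ persists into some $V(t') \neq V(t)$, and once $i \in V(t')$ the enable event at line~19 is reached regardless of whether $V(t')$ is auxiliary or installed, since line~19 executes unconditionally on the $i \notin ov$ branch whenever $i$ adopts the new view.
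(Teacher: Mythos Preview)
Your proposal is correct and follows essentially the same approach as the paper: invoke Lemma~\ref{lemma_liv_2} to obtain a later view $V(t')$ containing the join update $\langle +,i\rangle$, and then argue that when this view is adopted at server $i$ the \emph{enable operations} event fires via lines~14 and~19 of Algorithm~\ref{alg_recon}. Your version is considerably more detailed than the paper's three-sentence proof, explicitly tracing the reliable multicast of $\langle\mbox{INSTALL-SEQ}\rangle$ to $i\in w$, the completion of the state-transfer wait, the termination of the view generator, and the auxiliary-versus-installed view distinction---all of which the paper leaves implicit.
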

 \begin{proof}
 Consider that a correct server $i$ executes a \textit{join} operation at time $t$, by sending an \textit{update} $u =  \langle +,i\rangle$ to 
the servers in $V(t)$. From Lemma~\ref{lemma_liv_2}, there will be a view  $V(t')$,  such that $V(t') \neq V(t)$ and $t'>t$, 
where $u \in V(t')$. Consequently, when  $V(t')$ is installed at server $i$, the \emph{enable operations} event occurs at 
this server, since the required condition ($i \in V(t')$) is satisfied for servers that are joining the system (lines 14 and 19 -- Algorithm~\ref{alg_recon}).
% $\strut\hfill \Box$
\end{proof}

\begin{thm} \label{thm_12} (\textbf{Reconfiguration -- Leave Liveness}):
Eventually, the \emph{disable operations} event occurs at eve\-ry correct server that invoked a $\mathit{leave}$ operation.
% If a server $j$ installs a view $v$ such that $i \not\in v \wedge (\exists v' : i \in v' \wedge v' \subset v)$, then server $i$ has invoked the $\mathit{leave}$ operation.
 \end{thm}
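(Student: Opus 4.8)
Theorem 12 states Leave Liveness: eventually the *disable operations* event occurs at every correct server that invoked a *leave* operation. Let me sketch the proof.

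The plan is to mirror the structure of the Join Liveness proof (Theorem~11), relying on Lemma~\ref{lemma_liv_2} to establish that the leaving server's update eventually gets incorporated into an installed view, and then tracing through Algorithm~\ref{alg_recon} to see where *disable operations* fires. The key difference from the join case is that a leaving server $i$ with update $u = \langle -,i\rangle$ must itself participate in the reconfiguration that removes it — this is exactly what the *gentle leaves* assumption (Assumption~\ref{g_leaves}) buys us.

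First I would let $i$ be a correct server that invokes $\mathit{leave}()$ at time $t$, sending $u = \langle -,i\rangle$ to the servers in $V(t)$ (lines 3--4 of Algorithm~\ref{alg_recon}). By Lemma~\ref{lemma_liv_2}, there is a view $V(t')$ with $t' > t$, $V(t') \neq V(t)$, and $u \in V(t').entries$. Since $u = \langle -,i\rangle \in V(t').entries$, by the definition of $v.members$ we have $i \notin V(t')$. Next I would argue that $i$ is a member of whatever view $ov$ immediately precedes the one whose \textsc{install-seq} carries the step where $w$ first excludes $i$: because $i$ joined (or is in the initial view) and has not been removed in any earlier installed view, the *gentle leaves* assumption (Assumption~\ref{g_leaves}) guarantees $i$ stays in the system — and hence in the relevant view $ov$ — until such a more up-to-date view is installed. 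Therefore $i \in ov$ and $i \notin w$ when the \textsc{install-seq} for the step removing $i$ is delivered, so $i$ satisfies the condition $i \in ov$ at line 10 and $cv \subset w$ at line 13, but fails $i \in w$ at line 14, thus entering the else-branch (lines 26--28). There it executes \emph{disable operations} (line 27).

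To complete the liveness argument I would observe that $i$ does not block forever: after reaching line 27 it waits (line 28) for $\langle\mbox{VIEW-UPDATED}, w\rangle$ from $w.q$ servers in $w$ before halting, but *disable operations* has already occurred at line 27 before this wait, so termination of the wait is irrelevant to the theorem's claim. (For completeness one can note $w$ has a quorum of correct servers by Assumption~\ref{f_lim}, and each such server, upon updating its current view to $w$, sends \textsc{view-updated} to servers in $ov \setminus cv \ni i$ at line 20, so $i$'s wait also terminates.) The main obstacle is the bookkeeping in the middle step: carefully identifying the precise \textsc{install-seq} message that triggers line 27 at $i$ — one must argue that $i$ is still present in $ov$ at that moment, which is precisely where Assumption~\ref{g_leaves} is essential, and that the sequence step indeed has $cv \subset w$ with $i \notin w$, which follows from $u \in V(t').entries$ together with Non-triviality of the view generator (ensuring views along the sequence are strictly increasing). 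Hence the theorem follows.
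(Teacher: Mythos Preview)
Your proposal is correct and follows essentially the same approach as the paper: invoke Lemma~\ref{lemma_liv_2} to obtain a view $V(t')$ containing $\langle -,i\rangle$, observe that $i \notin V(t')$, and trace through Algorithm~\ref{alg_recon} to lines~14 and~27 where \emph{disable operations} fires. Your version is in fact more careful than the paper's terse proof, since you explicitly invoke Assumption~\ref{g_leaves} (gentle leaves) to justify that $i$ is still present in $ov$ when the relevant \textsc{INSTALL-SEQ} is delivered---a point the paper leaves implicit.
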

 \begin{proof}
 Consider that a correct server $i$ executes a \textit{leave} operation at time $t$, by sending an \textit{update}  $u =  \langle -,i\rangle$ 
to the servers in $V(t)$. From Lemma~\ref{lemma_liv_2}, there will be a view $V(t')$, such that $V(t') \neq V(t)$ and $t'>t$, 
where $u \in V(t')$. Consequently, when  $V(t')$ is installed at server $i$, the \emph{disable operations} event occurs at this server, 
since the required condition ($i \not\in V(t')$) is satisfied for servers that are leaving the system (lines 14 and 27 -- Algorithm~\ref{alg_recon}). 
% $\strut\hfill \Box$
\end{proof}

\subsubsection{Safety}

Let us now show a sketch of proof that \textsc{FreeStore} satisfies the safety properties. Firstly, we show that \textsc{FreeStore} implements 
the safety properties of an atomic register~\cite{Lam86}, according to the following definition:

\begin{definition}[Atomic register] Consider two read ope\-rations $r_1$ and $r_2$ executed by correct clients. 
Consider that $r_1$ terminates before $r_2$ initia\-tes. If $r_1$ reads a value $\alpha$ from register ${\cal R}$, then either $r_2$ reads $\alpha$ or $r_2$ reads 
a more up-to-date value than $\alpha$.
\label{def_register}

\end{definition}

Let ${\cal R}$ be the register, and $\omega (\beta)$ be the operation to write the value $\beta$ in ${\cal R}$ with an associated timestamp $ts(\beta)$. Consider that $V(t)$ is the most up-to-date view installed in the system.

\begin{lem} \label{lemma_saf_view}
Each phase of a read/write operation always finishes in the most updated installed view.
\end{lem}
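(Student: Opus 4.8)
The plan is to show that whenever a client completes a phase of a read or write operation, the quorum of servers it heard from is using the most up-to-date installed view $V(t')$ at the time the phase completes. First I would recall the structure of each phase in Algorithm~\ref{alg-client}: the client sends its request to all servers in its current view $cv_c$ and waits for $cv_c.q$ replies, each carrying the responding server's current view. If any reply carries a view different from $cv_c$, the client updates $cv_c$ to the more up-to-date view (if it discovers one) and restarts the phase (lines 7--9, 15--17, 24--26, 33--34). Hence a phase is only declared \emph{finished} when all $cv_c.q$ responding servers reported a current view equal to $cv_c$.

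Next I would argue that this common view $cv_c$ is in fact \emph{installed} and, moreover, is the most up-to-date installed view at the moment the phase completes. It is installed because at least one correct server in the responding quorum has $cv = cv_c$ and (for servers) answers r/w operations on that view, which is precisely the definition of an installed view from Section~\ref{dyn_def}. To see that it is the \emph{most} up-to-date installed view: suppose, for contradiction, that a strictly more up-to-date view $w \supset cv_c$ had already been installed by the time the phase completes. By Lemma~\ref{lemma_rec_vivo_installW}, once $w$ is installed, $cv_c$ is uninstalled and will never be installed again; in particular, by the reconfiguration protocol (Algorithm~\ref{alg_recon}, lines 11, 18, 25), every correct server that was in $cv_c$ and moved to $w$ will reject operations issued to $cv_c$ and reply with its newer current view. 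Since by Assumption~\ref{f_lim} every quorum of $cv_c$ contains at least one correct server, that server would have replied with a view $\neq cv_c$, contradicting the assumption that the phase finished with a quorum all reporting $cv_c$. Therefore no such $w$ exists, and $cv_c$ equals the most up-to-date installed view $V(t')$ at the completion time $t'$ of the phase.

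Finally I would tie the argument to the disabling of r/w operations during reconfiguration: while a reconfiguration from $cv_c$ to a more up-to-date view is in progress, servers of $cv_c$ stop processing r/w operations (line 11 of Algorithm~\ref{alg_recon}) until the new view is installed and their $cv$ is updated (line 18); so a phase cannot complete "straddling" the install point --- either it completes before, entirely within $cv_c$ while $cv_c$ is still the most up-to-date installed view, or the blocked servers resume only after updating $cv$, pushing the client into the restart case. Combining these observations yields the claim. The main obstacle I anticipate is carefully handling the interleaving: making precise that the quorum of "confirming" replies and the (real-time) instant of phase completion are consistent with $cv_c$ being the most up-to-date installed view, rather than a view that was most up-to-date when the replies were sent but was superseded before the quorum was assembled --- this is exactly where Lemma~\ref{lemma_rec_vivo_installW} (an uninstalled view is never reinstalled) and the monotone-view update rule at the client (a client only moves $cv_c$ forward) together rule out the bad interleaving.
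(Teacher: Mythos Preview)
Your overall strategy---contradiction via a stale view---matches the paper's, but your middle paragraph has a genuine gap. You argue: (i) every correct server that has moved from $cv_c$ to $w$ will reply with a newer view, and (ii) by Assumption~\ref{f_lim} every quorum of $cv_c$ contains a correct server, therefore (iii) the responding quorum contains a server replying with a view $\neq cv_c$. Step (iii) does not follow: Assumption~\ref{f_lim} only guarantees the quorum contains \emph{some} correct server, not one that has already moved to $w$. The definition of ``$w$ is installed'' requires only that a \emph{single} correct server of $w$ has $cv=w$ and answers r/w; it does not by itself force any particular number of servers of $cv_c$ to have left $cv_c$. So the responding quorum could, a priori, consist entirely of correct servers still sitting at $cv_c$. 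Invoking Lemma~\ref{lemma_rec_vivo_installW} does not close this, since ``uninstalled'' in the paper's sense just means a newer view is installed; it says nothing about how many servers of $cv_c$ have stopped serving $cv_c$.

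The missing ingredient is exactly the quorum-intersection step the paper makes explicit: for $w$ to be installed, some server in $w$ must have executed line~15 of Algorithm~\ref{alg_recon}, hence received $\mbox{STATE-UPDATE}$ from at least $cv_c.q$ servers of $cv_c$ (taking, WLOG via Lemma~\ref{lemma_rec_sequencia}, $cv_c$ to be the immediate predecessor of $w$). Each of those $cv_c.q$ servers executed line~11 and therefore stopped answering r/w in $cv_c$. Now you have two quorums of $cv_c$---the responding quorum (all reporting $cv_c$) and the stopped quorum---and their intersection yields a server that both answered in $cv_c$ and had already stopped answering in $cv_c$, the desired contradiction. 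Your final paragraph gestures at line~11 but never ties it to a quorum of stopped servers; once you insert that step, your argument becomes essentially the paper's.
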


\begin{proof}
For a client $c$ to execute a phase $ph$ of a r/w operation, it is necessary that a quorum of correct servers be in the same view $v$ that $c$ holds 
(lines 7, 15, 24 and 33 -- Algorithm~\ref{alg-client}; lines 1, 5 an 6 -- Algorithm~\ref{alg-server}). 
Thus, we have to show that for any time $t$, $v = V(t)$. 
For the sake of contradiction, consider that at time $t$, $c$ finishes $ph$ by receiving a quorum of replies from servers 
in $v = V(t') : V(t') \neq V(t) \wedge t' < t$. Without loss of generality, consider 
that $V(t')$ is the view immediately before $V(t)$ in the sequence of installed views (Lemma \ref{lemma_rec_sequencia}). 
Consequently, at least $V(t').q$ servers from $V(t')$ do not executed line 11 of Algorithm~\ref{alg_recon} for instalation of $V(t)$ since $V(t') \subset V(t)$ 
and they will stop to execute r/w operation until 
the update of their views to $V(t)$ (lines 11, 18 and 25 -- Algorithm~\ref{alg_recon}). 
However, to install $V(t)$ it is necesssary that at least $V(t').q$ servers from $V(t')$ execute line 12 of Algorithm~\ref{alg_recon} in order to update the state of servers in $V(t)$ 
(line 15 -- Algorithm~\ref{alg_recon}). Cleary, by quorum intersection properties, we reach a contradiction. 
Consequently, we have that $v = V(t)$ and the lemma follows.
% $\strut\hfill \Box$
\end{proof}

\begin{lem} \label{lemma_saf_view_1}
Read/write operations always finish in the most updated installed view.
\end{lem}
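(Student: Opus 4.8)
The plan is to obtain this statement as an almost immediate corollary of Lemma~\ref{lemma_saf_view}. First I would recall the structure of a r/w operation executed by a correct client $c$: a write consists of two phases and a read consists of a first phase followed, only when the values returned are not all equal, by a write-back phase; in both cases $c$ deems a phase concluded only after collecting a quorum of replies from servers reporting exactly the view $cv_c$ that $c$ currently holds, and otherwise updates $cv_c$ (to the more up-to-date view it learned) and restarts that same phase (Algorithm~\ref{alg-client}). An operation $o$ is, by definition, completed precisely when $c$ concludes the \emph{last} phase it executes for $o$.

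The core of the argument is then the following. Let $t$ be the time at which $o$ completes, i.e., the time at which its last phase $ph$ concludes. By Lemma~\ref{lemma_saf_view}, the view $v = cv_c$ that $c$ holds at that moment satisfies $v = V(t)$, so $ph$ -- and hence $o$ -- finishes in the view that is the most up-to-date installed one at time $t$. Moreover, applying Lemma~\ref{lemma_saf_view} separately to each earlier phase shows that every phase of $o$ terminates in the most up-to-date installed view at its own termination time; if a reconfiguration installs a newer view between two consecutive phases of $o$, $c$ detects the discrepancy in the replies (lines 8, 16, 25 or 34 of Algorithm~\ref{alg-client}), adopts the new view and restarts the affected phase, so no phase of $o$ can ever conclude against an out-of-date view. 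Since the reconfiguration procedure eventually terminates (Assumption~\ref{finite_rec} and Lemma~\ref{lemma_rec_terminacao}), this sequence of restarts is finite and $o$ indeed completes in some installed view, which by the above is $V(t)$.

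I expect the only real obstacle to be purely expository: making precise that ``an operation finishes'' is identified with ``its last executed phase finishes'', and that all the heavy lifting -- ruling out that a phase concludes while a reconfiguration from $V(t)$ to a newer view is already underway at a quorum -- has already been discharged by Lemma~\ref{lemma_saf_view} through the intersection between the quorum that answered $c$ and the quorum of servers that stopped processing r/w operations at line~11 of Algorithm~\ref{alg_recon}. For the lemma to be useful in the subsequent atomicity argument I would also remark (although it is not needed for the statement itself) that, by the state transfer of lines 11--16 of Algorithm~\ref{alg_recon} together with the unique-sequence property of installed views (Lemma~\ref{lemma_rec_sequencia}), any value and timestamp observed or written by a phase of $o$ in an installed view is carried over to every later installed view, so that running the remaining phases of $o$ in more up-to-date views loses no information.
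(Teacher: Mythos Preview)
Your proposal is correct and follows exactly the same approach as the paper: the paper's proof is the one-line observation that, by Lemma~\ref{lemma_saf_view}, the last phase of a read/write operation finishes in the most updated installed view, and hence so does the operation. Your additional discussion of earlier phases, restarts, termination, and state transfer is sound but goes beyond what is needed for this corollary.
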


\begin{proof}
By Lemma~\ref{lemma_saf_view}, the last phase of a read/write operation $o$ finishes in the most updated installed view. Consequently, $o$ also finishes in the most updated installed view. 
% $\strut\hfill \Box$
\end{proof}

\begin{lem} \label{lemma_saf_1} 
Consider that $\alpha$ is the value from the last write  $\omega(\alpha)$ completed in $V(t)$. Then, in the next reconfigu\-ration $F$ from $V(t)$ 
to $V(t')$, $t' > t$, one of two cases may happen: 

$(1)$ If there is no concurrent write operation with $F$, then $\alpha$ is propagated to $V(t')$;
 
$(2)$ If $F$ is concurrent with a write operation  $\omega(\beta): ts(\beta) > ts(\alpha)$, then either $\alpha$ or $\beta$ is propagated to $V(t')$.
\end{lem}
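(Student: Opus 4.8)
The plan is to reason directly about the state-transfer mechanism in Algorithm~\ref{alg_recon} (lines 10--18), using the fact that a write operation completes only after writing $\langle val, ts\rangle$ to a quorum of servers in the view where it executes, and that a reconfiguration from $V(t)$ to $V(t')$ completes only after a quorum of $V(t)$ has sent its state (line~12) to the servers of the new view and those servers have collected state from a quorum of $V(t)$ (line~15). First I would set up notation: let $Q_w$ be the quorum of servers in $V(t)$ that stored $\langle \alpha, ts(\alpha)\rangle$ at the end of $\omega(\alpha)$, and let $Q_r$ be the quorum of servers of $V(t)$ whose \mbox{STATE-UPDATE} messages are awaited (line~15) during $F$; by Assumptions~\ref{f_lim} and~\ref{q_size} these two quorums intersect in at least one correct server $k \in V(t)$.

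For case~(1), I would argue that since there is no write concurrent with $F$, by the time any correct server of $V(t)$ executes line~11 (stopping r/w operations) and then line~12, its local $\langle val, ts\rangle$ already reflects $\omega(\alpha)$ or a later write — but ``no concurrent write'' and the fact that $\omega(\alpha)$ is the \emph{last} completed write means its local timestamp is at least $ts(\alpha)$; in particular the correct server $k$ in the intersection sends a pair with timestamp $\geq ts(\alpha)$. More carefully, I should note that the write-back discipline of ABD plus the ordering ``$\omega(\alpha)$ completed before $F$ started at $k$'' forces $k$'s stored timestamp to be exactly $ts(\alpha)$ (no larger write exists). Then the servers of $V(t')$ that complete $F$ pick the highest timestamp among a quorum of received pairs (line~16), which is $\geq ts(\alpha)$; combined with the absence of any write of timestamp $> ts(\alpha)$, the value propagated is $\alpha$. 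For case~(2), the same intersection argument applies, but now $k$ may have already stored $\langle\beta, ts(\beta)\rangle$ because $\omega(\beta)$ is concurrent; so the pair with highest timestamp among those received at $V(t')$ (line~16) has timestamp either $ts(\alpha)$ or $ts(\beta)$, and hence the value propagated is $\alpha$ or $\beta$. I would also remark that $F$ is unique as the ``next'' reconfiguration by Lemma~\ref{lemma_rec_sequencia}, so ``$V(t')$'' is well defined, and that line~11 guarantees no further write can complete in $V(t)$ once $F$ is underway, which is what pins down the set of candidate timestamps to $\{ts(\alpha), ts(\beta)\}$.

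The main obstacle I anticipate is the precise interleaving argument in case~(2): I must rule out a value with timestamp strictly greater than $ts(\beta)$ (or strictly between) being propagated, and rule out the ``lost write'' scenario where $\alpha$ is propagated but a server in $V(t')$ with stale state answered a concurrent read. The key lever is that r/w operations are \emph{disabled} at every server of $V(t)$ that participates in $F$ from line~11 until the new view is installed (line~25), so no write can complete in $V(t)$ after a quorum of $V(t)$ has begun the transfer; hence at most the already-in-progress $\omega(\beta)$ contributes, and its timestamp is the maximum possible. I would phrase this as: any write that completes in $V(t)$ must have written to a quorum before line~11 was reached at a quorum of $V(t)$, and such a quorum intersects $Q_r$, so its value reaches $V(t')$; no strictly later write can complete in $V(t)$ at all. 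Closing this carefully — possibly by a short sub-argument that the candidate set of timestamps at line~16 is exactly $\{ts \leq ts(\beta) : ts$ was written to some server of $Q_r$ before it sent line~12$\}$ and that $ts(\alpha)$ is in it — is the delicate part; everything else is a routine quorum-intersection bookkeeping.
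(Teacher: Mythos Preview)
Your proposal is correct and takes essentially the same approach as the paper: both use the intersection between the write quorum for $\omega(\alpha)$ and the quorum of \mbox{STATE-UPDATE} senders (lines~12 and~15 of Algorithm~\ref{alg_recon}), together with the highest-timestamp selection at line~16, to conclude that the propagated value has timestamp at least $ts(\alpha)$. The paper organizes case~(2) slightly differently---splitting into three sub-cases according to how many servers of $V(t)$ have stored $\beta$ before sending state, and remarking that a partially-applied $\omega(\beta)$ is restarted in $V(t')$ by Lemmata~\ref{lemma_saf_view} and~\ref{lemma_saf_view_1}---but the core quorum-intersection argument is identical and your explicit treatment of the interleaving (via line~11 blocking further writes) is, if anything, more careful than the paper's.
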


\begin{proof}
Consider that the last write $\omega(\alpha)$ completed in $V(t)$ has stored the value $\alpha$ in ${\cal R}$. 

\begin{enumerate}
 \item In case there is no concurrent write operation with $F$: at least $V(t).q$ servers store $\alpha$ and $ts(\alpha)$ is the greatest timestamp 
 in the system (otherwise $\omega(\alpha)$ would not be the last). 
From the quorums intersection properties and since $\mbox{STATE-UPDATE}$ messages are sent (lines 12 and 15 -- Algorithm \ref{alg_recon}) and views installed in the system 
form an unique sequence (Lemma \ref{lemma_rec_sequencia}), the servers in $V(t')$ are going to receive $\langle\alpha$, $ts(\alpha)\rangle$ 
coming from $V(t)$ updating their register values (line 16 -- Algorithm \ref{alg_recon}).

\item In case $F$ is concurrent with a write operation  $\omega(\beta): ts(\beta) > ts(\alpha)$: Three situations need to be considered:
 \begin{enumerate}
 \item No servers in $V(t)$ updated their local state with $\langle\beta,ts(\beta)\rangle$ before the instalation of $V(t')$ 
(lines 2-4 -- Algorithm~\ref{alg-server}) and we reduce this case to the Case~$1$.
\item Some servers in $V(t)$ already store $\langle\beta,ts(\beta)\rangle$ sending this value to servers in $V(t')$, consequently it is possible that  some servers from $V(t')$ receive only $\alpha$ while some others receive $\beta$ (line 16 -- Algorithm \ref{alg_recon}). 
Anyway, $\omega(\beta)$ is going to be reinitialized in $V(t')$ (Lemmata~\ref{lemma_saf_view} and \ref{lemma_saf_view_1}) and all its servers will eventually store  $\langle\beta$, $ts(\beta)\rangle$ in ${\cal R}$.

\item All servers in $V(t)$ already store $\langle\beta,ts(\beta)\rangle$ sending this value to servers in $V(t')$, 
consequently all servers in $V(t')$ will store  $\langle\beta$, $ts(\beta)\rangle$ in ${\cal R}$ (line 16 -- Algorithm \ref{alg_recon}). 
\end{enumerate}

\end{enumerate}

Since there is always a sequence of installed views (Lemma~\ref{lemma_rec_sequencia}), 
the above cases are valid on the update of the next view $V(t')$ whenever a subsequent view $V(t'')$,  $t'' > t'$, is installed and so on. Consequently, the lemma follows.  
% $\strut\hfill \Box$
\end{proof}

\begin{lem} \label{lemma_saf_new3}
Consider that a read operation $r_1$ returns a value $\alpha_1$ at time $t_1^{e}$, which has an associated timestamp $ts_1$. 
A read operation $r_2$ started at time $t_2^{s} > t_1^{e}$ returns a value  $\alpha_2$ associated with a 
timestamp $ts_2$ such that either (1) $ts_2 = ts_1$ and $\alpha_2 = \alpha_1$ or (2) $ts_2 \geq ts_1$ and $\alpha_2$ 
was written after $\alpha_1$.

\end{lem}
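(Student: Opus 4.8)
The goal is to establish the "read-follows-read" atomicity property (Definition~\ref{def_register}): two sequential reads $r_1$ (returning $\langle\alpha_1,ts_1\rangle$) and $r_2$ (starting after $r_1$ completes) satisfy $ts_2 \geq ts_1$, and if $ts_2 = ts_1$ then $\alpha_2 = \alpha_1$. The plan is to combine the ABD write-back mechanism with the view-sequence machinery already built: by Lemmata~\ref{lemma_saf_view} and \ref{lemma_saf_view_1}, each phase (and in particular the last phase) of each read completes in the most up-to-date installed view, and by Lemma~\ref{lemma_rec_sequencia} the installed views form a single chain $w_1 \rightarrow w_2 \rightarrow \cdots$. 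So I can reason about a well-defined ``timeline'' of views.

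First I would argue that when $r_1$ returns $\langle\alpha_1,ts_1\rangle$ at time $t_1^e$, the pair $\langle\alpha_1,ts_1\rangle$ is stored at a quorum of the view $V(t_1^e)$ in which $r_1$'s last phase completed. If $r_1$ took the one-phase fast path, this holds because $r_1$ read the same pair from a quorum; if it executed the write-back phase, this holds because the write-back wrote $\langle\alpha_1,ts_1\rangle$ to a quorum of $V(t_1^e)$ and waited for confirmations (lines 30--35, Algorithm~\ref{alg-client}). Next I would propagate this invariant forward along the view chain: if $\langle\alpha_1,ts_1\rangle$ is stored at a quorum of some installed view $V(t)$, then after the reconfiguration to the successor view $V(t')$ in the chain, every server of $V(t')$ holds a pair with timestamp $\geq ts_1$ as its register state. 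This is exactly the content of Lemma~\ref{lemma_saf_1} together with the state-transfer/quorum-intersection argument used there (lines 12, 15--16 of Algorithm~\ref{alg_recon}): the quorum of $V(t)$ that stored $\langle\alpha_1,ts_1\rangle$ intersects the quorum of $V(t)$-servers whose \mbox{STATE-UPDATE} messages are used to initialize $V(t')$, so the highest timestamp seen is at least $ts_1$, and a timestamp only strictly exceeds $ts_1$ if the associated value was written after $\omega(\alpha_1)$ (or equals $\alpha_1$ when the timestamp is unchanged, since timestamps uniquely identify write values by the $succ(\cdot,c)$ construction). Iterating along the chain (induction over the position of $V(t_1^e)$ in the sequence) gives: in every installed view $V(t)$ with $t \geq t_1^e$, every server holds a register pair whose timestamp is $\geq ts_1$, and if it equals $ts_1$ the value is $\alpha_1$.

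Finally I would close the argument on $r_2$. Since $t_2^s > t_1^e$, $r_2$'s first phase (and its last phase, by Lemmata~\ref{lemma_saf_view}--\ref{lemma_saf_view_1}) completes in some installed view $V(t_2)$ with $t_2 \geq t_1^e$ lying on the same chain. In $r_2$'s first phase it collects $\langle val,ts\rangle$ pairs from a quorum of $V(t_2)$ and picks the maximum timestamp $ts_h = ts_2$; by the invariant just established every such reply has timestamp $\geq ts_1$, hence $ts_2 \geq ts_1$. If $ts_2 = ts_1$, the invariant forces every replying server's value to be $\alpha_1$, so $\alpha_2 = \alpha_1$; if $ts_2 > ts_1$, then the value carrying timestamp $ts_2$ was written by a $\omega(\beta)$ with $ts(\beta) = ts_2 > ts_1 = ts(\alpha_1)$, which (since $\omega(\alpha_1)$ had already completed, making $ts_1$ present in the system before $\omega(\beta)$'s timestamp was chosen) means $\alpha_2$ was written after $\alpha_1$. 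Either way the two alternatives in the statement hold.

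\textbf{Main obstacle.} The delicate point is the forward-propagation step across a reconfiguration that is \emph{concurrent} with $r_1$'s write-back or with an intervening write $\omega(\beta)$: one must be careful that the quorum of $V(t)$ at which $\langle\alpha_1,ts_1\rangle$ lands is really ``seen'' by the reconfiguration to $V(t')$. The clean way is to note that r/w operations are disabled during state transfer (line 11 of Algorithm~\ref{alg_recon}) and that $r_1$ completing its write-back in $V(t)$ means a quorum of $V(t)$ had already recorded $\langle\alpha_1,ts_1\rangle$ before they executed line 12 for installing $V(t')$ — otherwise $r_1$ would have been forced to restart in a newer view, contradicting that it returned in $V(t)$ (this is precisely the quorum-intersection contradiction of Lemma~\ref{lemma_saf_view}). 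Handling the case where $r_1$ read the pair without writing back (possible only when all replies agreed) requires observing that agreement of a full quorum already means a quorum holds $\langle\alpha_1,ts_1\rangle$, so the same state-transfer argument applies. Once this is pinned down, the induction along the unique installed-view sequence is routine.
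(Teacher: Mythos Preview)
Your argument is correct and arrives at the same conclusion, but it is organised differently from the paper's proof. The paper proceeds by a four-way case split on where a single reconfiguration installing $V(t')$ lies relative to $r_1$ and $r_2$ (no reconfiguration; concurrent with $r_1$; strictly between; concurrent with $r_2$), appeals to Lemmata~\ref{lemma_saf_view}, \ref{lemma_saf_view_1} and~\ref{lemma_saf_1} in each branch, and then closes with an informal induction to cover further reconfigurations. You instead fix the invariant ``in every installed view from $V(t_1^e)$ onward, each server that has completed the reconfiguration holds a pair with timestamp $\geq ts_1$ (and value $\alpha_1$ when equality holds)'' and prove it by induction along the chain of installed views, using Lemma~\ref{lemma_saf_1} as the inductive step; the case split disappears because the invariant absorbs all four scenarios uniformly. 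Your approach is cleaner when several reconfigurations occur between $r_1$ and $r_2$, which the paper handles only by the terminal remark ``the above cases are valid whenever a subsequent view $V(t'')$ is installed and so on''; conversely, the paper's decomposition makes the interaction with a single concurrent reconfiguration more explicit. Your identification and resolution of the main obstacle (that $r_1$'s write-back quorum must have recorded $\langle\alpha_1,ts_1\rangle$ \emph{before} those servers execute line~12 of Algorithm~\ref{alg_recon}, because line~11 stops r/w processing first) is exactly the mechanism underlying Lemma~\ref{lemma_saf_view}, so the two proofs rest on the same technical core.
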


\begin{proof}
Consider that $r_1$ starts at time $t_1^{s}$ and that $r_2$ ends at time $t_2^{e}$, then $t_2^{e} > t_2^{s} > t_1^{e} > t_1^{s}$. 
% Since $r_1$ finishes at time $t_1^{f}$, it reads $\alpha_1$ from view $V(t_1^{f})$ (Lemma~\ref{lemma_saf_view}).
 Depending on the occurrence of a reconfiguration $F$ that installs a following view $V(t')$ in the unique sequence of 
 intalled views (Lemma \ref{lemma_rec_sequencia}), we have the following possible, mutually exclusive, scenarios:
\begin{enumerate}

\item\label{case1} No concurrent reconfiguration. In this scenario we have no reconfiguration and thus $V(t_1^{s}) = V(t_1^e) = V(t_2^{s}) = V(t_2^{e})$.
Therefore, the behavior of the algorithm is the same of the basis protocol.
Consequently, by the quorum intersection properties, the \textit{write-back} phase from $r_1$ (lines 27-35 -- Algorithm \ref{alg-client}) ensures that $r_2$ reads $\langle \alpha_2, ts_2\rangle$ from servers in $V(t_2^e)$ such that either (1) $ts_2 = ts_1$ and $\alpha_2 = \alpha_1$ or (2) $ts_2 \geq ts_1$ and $\alpha_2$ was written after $\alpha_1$.

% \item\label{case2}$t' < t_1^s$. In this case $V(t')$ is installed before $r_1$ had started. 
% By Lemma~\ref{lemma_saf_view}, both $r_1$ and $r_2$ finish by reading from servers in 
% $V(t') = V(t_1^{s}) = V(t_1^f) = V(t_2^{s}) = V(t_2^{f})$ and we reach the previous Case~\ref{case1}. 

\item\label{case2} Reconfiguration concurrent with $r_1$ ($t_1^s \leq t' < t_1^e$)\footnote{Notice that it is impossible to have $t_1^e = t'$ or $t_2^e = t'$ since r/w operations are stopped while the reconfiguration installs $V(t')$ (line 11 -- Algorithm~\ref{alg_recon}) and, 
 consequently, we can not have a view installation and a r/w operation finishing at the same time.}.
  Since $t_2^s > t_1^e > t'$, the reconfiguration installs $V(t')$ before both $r_2$ starts and $r_1$ ends. Consequently, 
  from Lemmata~\ref{lemma_saf_view} and \ref{lemma_saf_view_1}, both $r_1$ and $r_2$ finish by reading from servers in 
  $V(t') = V(t_1^e) = V(t_2^s) = V(t_2^e)$. For the same reasons of Case~\ref{case1},
  $r_2$ reads $\langle \alpha_2, ts_2\rangle$ from servers in $V(t_2^e)$
such that either (1) $ts_2 = ts_1$ and $\alpha_2 = \alpha_1$ or (2) $ts_2 \geq ts_1$ and $\alpha_2$ 
was written after $\alpha_1$.
  
  \item\label{case3} Reconfiguration between $r_1$ and $r_2$ ($t_1^e < t' < t_2^s$). In this case, the reconfiguration installs $V(t')$ after $r_1$ 
  had finished and before $r_2$ starts. By Lemma~\ref{lemma_saf_1} and 
  the \textit{write-back} phase from $r_1$ (lines 27-35 -- Algorithm \ref{alg-client}), $\langle \alpha_1, ts_1\rangle$ or a more up-to-date pair 
  $\langle \alpha_x, ts_x\rangle$, with $ts_x > ts_1$, is propagated to $V(t')$. By Lemmata~\ref{lemma_saf_view} and \ref{lemma_saf_view_1}, $r_2$ ends by reading from servers in 
$V(t') = V(t_2^{s}) = V(t_2^{e})$. By the quorum intersection properties, $r_2$ reads $\langle \alpha_2, ts_2\rangle$ from servers in $V(t_2^e)$
such that either (1) $ts_2 = ts_1$ and $\alpha_2 = \alpha_1$ or (2) $ts_2 \geq ts_1$ and $\alpha_2$ 
was written after $\alpha_1$.

\item\label{case4} Reconfiguration concurrent with $r_2$ ($t_2^s \leq t' < t_2^e$). %In this case, the \textit{write-back wb} phase of $r1$ ensures that at least a quorum of $V(t).q$ servers from $V(t)$ are going to store $\alpha$ and its timestamp $ts$ (Lemma~\ref{lemma_saf_2}, Case $1$). Any  other concurrent write operation (Lemma~\ref{lemma_saf_2}, Case $2$)  or subsequent to \textit{wb} will only write an associated value with a timestamp which is greater than $ts$. Then, from the intersection property of quorums,  from Lemma~\ref{lemma_saf_3} (Cases $3$ and $4$) and, knowing that $r2$ starts after \textit{wb}, it is ensured that $r2$ will read $\alpha$ or a more up-to-date value written in $V(t)$ or $V(t')$.
  In this case, the reconfiguration installs $V(t')$ after $r_1$ had finished, after $r_2$ has been started and before $r_2$ has been terminated. 
  By Lemma~\ref{lemma_saf_1} and the \textit{write-back} phase from $r_1$ (lines 27-35 -- Algorithm \ref{alg-client}), $\langle \alpha_1, ts_1\rangle$ or a more up-to-date pair 
  $\langle \alpha_x, ts_x\rangle$, with $ts_x > ts_1$, is propagated to $V(t')$.
By Lemmata~\ref{lemma_saf_view} and \ref{lemma_saf_view_1}, $r_2$ ends by reading from servers in 
$V(t') = V(t_2^{e})$. By the quorum intersection properties, $r_2$ reads $\langle \alpha_2, ts_2\rangle$ from servers in $V(t_2^e)$
such that either (1) $ts_2 = ts_1$ and $\alpha_2 = \alpha_1$ or (2) $ts_2 \geq ts_1$ and $\alpha_2$ 
was written after $\alpha_1$.
  
%  \item\label{case5} $t_2^f < t'$. In this case, $F$ installs $V(t')$ after both $r_1$ and $r_2$ has been completed and we fall in the previous Case~\ref{case1}.

\end{enumerate}

Since there is always an unique sequence of installed views (Lemma~\ref{lemma_rec_sequencia}), the above cases are valid whenever a subsequent view $V(t'')$,  $t'' > t'$, is installed and so on. This suffices to prove the lemma.
% $\strut\hfill \Box$
\end{proof}

\begin{thm} \label{thm_1}  (\textbf{Storage Safety}): The read/write protocols satisfy the safety properties of an atomic read/write register.
\end{thm}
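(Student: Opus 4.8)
The plan is to establish the atomicity condition of Definition~\ref{def_register} — and, more generally, the full set of safety properties of an atomic r/w register — by reducing the dynamic execution to the classical static ABD argument applied view-by-view, glued together by the unique sequence of installed views and by the cross-view state propagation.

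First I would observe that Lemma~\ref{lemma_saf_new3} already is the read--read ordering required by Definition~\ref{def_register}: if $r_1$ completes before $r_2$ starts and $r_1$ returns $\alpha$ with timestamp $ts_1$, then $r_2$ returns a pair with timestamp $ts_2 \geq ts_1$, and when $ts_2 = ts_1$ it returns the same value $\alpha$. So the bulk of the theorem is discharged by that lemma; it remains to argue the auxiliary conditions that make these timestamped values form a legal atomic register history, namely: (i) distinct writes carry distinct, totally ordered timestamps (ties broken by the writer id concatenated in the low bits, \textit{write} procedure of Algorithm~\ref{alg-client}); (ii) a write $\omega(\beta)$ that completes before a read $r$ starts forces $r$ to return a timestamp $\geq ts(\beta)$ (no lost write); and (iii) a read $r$ that completes before a write $\omega(\beta)$ starts returns a timestamp $< ts(\beta)$ (no ``new--old inversion''). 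Placing each read immediately after the write whose timestamp it returns then yields a linearization respecting real-time precedence.

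For each of (i)--(iii) I would run the same three-step scheme used in the proof of Lemma~\ref{lemma_saf_new3}. (a)~By Lemmata~\ref{lemma_saf_view} and~\ref{lemma_saf_view_1}, every phase of every operation terminates in the most up-to-date installed view, and by Lemma~\ref{lemma_rec_sequencia} the installed views form a single chain $w_1 \rightarrow w_2 \rightarrow \ldots$. (b)~Within a single installed view the protocol is ordinary multi-writer ABD, so the quorum intersection property (Assumptions~\ref{f_lim} and~\ref{q_size}) gives the claim whenever the two operations execute in the same view. (c)~When the two operations execute in different views $w_a \subset w_b$ of the chain, Lemma~\ref{lemma_saf_1} shows that the last value written in $w_a$, or a strictly newer one, is carried by the $\mbox{STATE-UPDATE}$ messages into $w_b$; and since r/w operations are disabled throughout the state transfer (lines 11 and 25 of Algorithm~\ref{alg_recon}), no write can complete in $w_a$ after $w_b$ has started receiving state. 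Chaining this along the sequence of installed views propagates the value to the view where the later operation runs, and combined with the quorum intersection inside that view gives (ii), (iii) (and the monotonicity already proved in Lemma~\ref{lemma_saf_new3} gives the read--read case).

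The main obstacle is precisely case~(c): ruling out lost writes and stale reads across a reconfiguration, i.e.\ showing that the ``chain of reads'' faithfully transfers the register state and that the blocking discipline is tight enough that no operation completed in an older view is invisible to an operation in a newer one. This is exactly what Lemma~\ref{lemma_saf_1} (state propagation, including under a concurrent write) together with Lemma~\ref{lemma_saf_view} (every phase finishes in the current installed view) provides, so the theorem's proof is mostly an assembly: the read--read inversion is Lemma~\ref{lemma_saf_new3}, and (i)--(iii) follow by the identical per-view quorum argument composed with Lemma~\ref{lemma_saf_1} along the unique sequence of installed views of Lemma~\ref{lemma_rec_sequencia}.
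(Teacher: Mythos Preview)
Your approach is correct and contains the paper's own proof as a proper subset: the paper's proof is literally the single sentence ``This proof follows directly from Lemma~\ref{lemma_saf_new3}.'' This suffices because the paper's Definition~\ref{def_register} of an atomic register states \emph{only} the read--read ordering condition, which is exactly what Lemma~\ref{lemma_saf_new3} delivers.

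The extra work you do --- arguing (i) timestamp uniqueness, (ii) write--read ordering, and (iii) read--write ordering, and then assembling a linearization --- is what one would need for the standard, stronger notion of atomicity (linearizability in the sense of Herlihy--Wing or Lamport's full atomic register). Your argument for these points is sound and uses the right ingredients (Lemmata~\ref{lemma_rec_sequencia}, \ref{lemma_saf_view}, \ref{lemma_saf_view_1}, \ref{lemma_saf_1} plus quorum intersection within a view), so it buys a genuinely stronger conclusion than the paper claims. Relative to the paper's stated Definition~\ref{def_register}, however, this additional machinery is unnecessary: once you have cited Lemma~\ref{lemma_saf_new3}, you are done.
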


\begin{proof} 
This proof follows directly from Lemma~\ref{lemma_saf_new3}.
\end{proof}

 \begin{thm} \label{thm_2} \textbf{Reconfiguration -- Join Safety}: 
% If a server $j$ installs a view $v$ such that $i \in v$, then server $i$ invoked the $\mathit{join}$ operation.
If a server $j$ installs a view $v$ such that $i \in v$, then server $i$ has invoked the $\mathit{join}$ operation or $i$ is member of the initial view.
 \end{thm}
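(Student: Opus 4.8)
The plan is to trace how any update $\langle +,i \rangle$ can enter a view's $entries$ and show that server $i$ itself must have originated it (modulo the bootstrap case). The key observation is that the only places where reconfiguration tuples are introduced into the system are lines~5--6 of Algorithm~\ref{alg_recon}, where a server stores $\langle *,j \rangle$ in $\mathtt{RECV}$ \emph{only} upon receiving a $\mbox{RECONFIG}$ message carrying that exact tuple, and such a message is sent only by server $j$ itself in line~1 (for joins) via the $\mathit{join}()$ procedure. So first I would establish the invariant: for every server $k$ and every time, if $\langle +,i \rangle \in \mathtt{RECV}_k$, then $i$ invoked $\mathit{join}()$.

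Next I would lift this invariant from $\mathtt{RECV}$ to the $entries$ of generated/installed views. A view $w$ is constructed either (i) at line~7, as $cv \cup \mathtt{RECV}$, so its new entries come from some server's $\mathtt{RECV}$ and, inductively, from the current view $cv$; or (ii) inside the view generator via the merge rules of Algorithm~\ref{alg_live} (lines~8 and~10), which only take unions of entries already present in proposed sequences, i.e.\ already present in some server's $\mathtt{SEQ}^{cv}$, which in turn was seeded at line~2 from a $\mathit{gen\_view}$ argument of the form $\{cv \cup \mathtt{RECV}\}$ (line~7 of Alg.~\ref{alg_recon}) or from $seq'$ at line~23, which is a sub-selection of an already-generated sequence. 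In all cases no new update tuple is ever fabricated: every $\langle +,i \rangle$ appearing in any generated view was already in some server's $\mathtt{RECV}$ (or in $v_0$). Combined with the $\mathtt{RECV}$ invariant and Non-triviality of the view generators, this gives: if $\langle +,i \rangle \in v.entries$ for any installed $v$, then either $\langle +,i \rangle \in v_0.entries$ (the bootstrap view, whose entries are exactly $\{\langle +,j\rangle : j \in V(0)\}$ by the bootstrapping assumption), meaning $i \in V(0)$, or $i$ invoked $\mathit{join}()$. Since $i \in v$ means $\langle +,i\rangle \in v.entries$ (and $\langle -,i\rangle \notin v.entries$), we get exactly the two disjuncts in the statement.

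To make the induction clean I would phrase it as: "for every message $\langle\mbox{INSTALL-SEQ}, w, seq, ov\rangle$ ever R-delivered and every $w' \in seq$, and for every update $\langle +,i\rangle \in w'.entries$, either $i \in V(0)$ or $i$ has invoked $\mathit{join}()$." The base case is the bootstrap view $v_0$. The inductive step handles the two ways a new $\mbox{INSTALL-SEQ}$ is produced (from a $\mathcal{G}^{cv}$ upcall triggered either by line~7 or by line~23), using the $\mathtt{RECV}$ invariant for the line~7 case and the structure of Algorithm~\ref{alg_live}'s proposals for the generator-internal case, noting that $\mathtt{LCSEQ}^v$ and $\mathtt{SEQ}^v$ only ever hold entries received in $\mbox{SEQ-VIEW}$ messages, which trace back to some $\mathit{gen\_view}$ call.

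The main obstacle is the bookkeeping in the inductive step for the view-generator internals: one must argue that $\mathtt{SEQ}^v$ and $\mathtt{LCSEQ}^v$ at every server never contain an update tuple that wasn't present in some earlier proposal, despite the union-with-$\mathtt{LCSEQ}^v$ rewriting in line~8 of Algorithm~\ref{alg_live}. This is genuinely just a union-of-unions argument (no tuple is invented, only combined), but it requires care to state the invariant over the whole reachable state of all servers simultaneously. Everything else (Join Safety for the leave case is symmetric and not needed here; Non-triviality is cited) is routine, and indeed the authors' own summary notes this property is "trivially ensured" — so I would keep the proof short, essentially the two-sentence argument: only $i$ sends $\langle +,i\rangle$ (line~1), and updates are never fabricated downstream.
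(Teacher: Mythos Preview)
Your proposal is correct and follows the same core argument as the paper: trace the update $\langle +,i\rangle$ back through $\mathtt{RECV}$ to the $\mbox{RECONFIG}$ message that only $i$ itself can send (lines~1--2 of Algorithm~\ref{alg_recon}), with the initial view as the bootstrap exception. The paper's own proof is a terse two-sentence version of this that does not explicitly unwind the view-generator merges; your induction over $\mbox{INSTALL-SEQ}$ messages and the $\mathtt{SEQ}^v/\mathtt{LCSEQ}^v$ invariant is more careful on exactly the point the paper glosses over, but it is the same approach rather than a different one.
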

 
 \begin{proof}
Consider that a server $j$ installs a view $v$ such that $i \in v$, then some server $z$ started its view generator associated with a previous view $v' \subset v$ 
proposing a view sequence containing $v$ (Lemma \ref{lemma_rec_vivo_vAuxiliar}) such that the update $\langle +, i \rangle \in v$. In this case, 
$\langle +, i \rangle \in \mathtt{RECV}$ at $z$ (line 7 -- Algorithm \ref{alg_recon}). Consequently, $i$ sent a message to $z$ with the join request by executing the $join$ operation (lines 1-2 -- Algorithm \ref{alg_recon}) or $i$ is member of the initial view.
% $\strut\hfill \Box$
\end{proof}

\begin{thm} \label{thm_3}  \textbf{Reconfiguration -- Leave Safety}: 
% If a server $j$ installs a view $v$ such that $i \not\in v$, then server $i$ invoked the $\mathit{leave}$ operation.
If a server $j$ installs a view $v$ such that $i \not\in v \wedge (\exists v' : i \in v' \wedge v' \subset v)$, then server $i$ has invoked the $\mathit{leave}$ operation.
\end{thm}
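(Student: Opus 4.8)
The plan is to mirror the structure of the proof of Theorem~\ref{thm_2} (Join Safety), using the characterization of installed views from Lemma~\ref{lemma_rec_vivo_vAuxiliar} together with the containment relationship $v' \subset v$ in the hypothesis. First I would suppose that some server $j$ installs a view $v$ with $i \not\in v$ and that there exists a view $v'$ with $i \in v'$ and $v' \subset v$. By Lemma~\ref{lemma_rec_vivo_vAuxiliar} (and the way sequences are generated in line~7 and propagated through lines~22--23), the installation of $v$ implies that some server $z$ started a view generator $\mathcal{G}^{v''}_z$ associated with a view $v''$ that is the predecessor of $v$ in the generated sequence, by proposing a sequence whose least-updated view accounts for the update $\langle -, i \rangle$; that is, $\langle -, i \rangle \in v.entries$.

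The key point is to argue that $\langle -, i \rangle$ must actually be present in $v.entries$, not merely that $i \not\in v.members$. This is where the hypothesis $i \in v' \wedge v' \subset v$ is used: since $v'.entries \subset v.entries$ and $i \in v'.members$, we have $\langle +, i \rangle \in v'.entries \subseteq v.entries$. Combined with $i \not\in v.members$ and the definition $v.members = \{i : \langle +,i\rangle \in v.entries \wedge \langle -,i\rangle \not\in v.entries\}$, it follows that $\langle -, i \rangle \in v.entries$. Then, tracing back through how views in a sequence are built from $\mathtt{RECV}$ sets (line~7, and the update-forwarding in lines~12 and~17 of Algorithm~\ref{alg_recon}), this tuple $\langle -, i \rangle$ was at some point stored in the $\mathtt{RECV}$ set of some server $z$ that contributed it to a proposal.

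Finally I would invoke the fact that a server only ever inserts the tuple $\langle -, i \rangle$ into its $\mathtt{RECV}$ set upon receiving a $\langle\mbox{RECONFIG}, \langle -, i\rangle, \cdot\rangle$ message (lines~5--6 of Algorithm~\ref{alg_recon}), and that this message is sent only by server $i$ itself when it executes the $\mathit{leave}$ operation (lines~3--4). Since $\langle -, i \rangle$ cannot belong to the initial view $v_0 = \{\langle +, j\rangle : j \in V(0)\}$ (which contains only join entries), the tuple must have originated from an explicit $\mathit{leave}$ invocation by $i$. This completes the argument. I expect the main obstacle to be making precise the claim that $\langle -, i \rangle \in v.entries$ propagated back to an actual $\mathtt{RECV}$ insertion at some server: one must be careful that update tuples introduced into views come only via the $\mathtt{RECV}$-based proposal mechanism and the chain-of-reads forwarding (lines~12, 17), and that the view-merging operations in Algorithm~\ref{alg_live} (the unions in lines~8 and~10) never fabricate entries that were not in some server's $\mathtt{RECV}$ — i.e., that $\mathcal{G}$'s Non-triviality and the concrete algorithm together guarantee $v.entries$ is a union of $v''.entries$ and legitimately received updates. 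This bookkeeping is the part requiring the most care, though it is essentially the dual of the Join Safety argument and the footnote to the Leave Safety property already signals the intended scope.
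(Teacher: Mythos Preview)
Your proposal is correct and follows essentially the same approach as the paper's own proof: trace the installation of $v$ back, via Lemma~\ref{lemma_rec_vivo_vAuxiliar}, to a proposal containing $\langle -,i\rangle$ in some server $z$'s $\mathtt{RECV}$, and conclude from lines~3--6 of Algorithm~\ref{alg_recon} that $i$ must have invoked $\mathit{leave}$. Your version is in fact more careful than the paper's, which simply asserts that $\langle -,i\rangle \in v$ without justification; your explicit use of the hypothesis $i \in v' \wedge v' \subset v$ together with the definition of $v.\mathit{members}$ to force $\langle -,i\rangle \in v.\mathit{entries}$, and your remark that $v_0$ contains only $\langle +,\cdot\rangle$ tuples, fill gaps the paper leaves implicit.
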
 

 \begin{proof}
Consider that a server $j$ installs a view $v$ such that $i \not\in v$  and that $\exists v' : i \in v' \wedge v' \subset v$, then some server $z$ started its view generator 
associated with a previous view proposing a view sequence containing a view 
$v$ (Lemma \ref{lemma_rec_vivo_vAuxiliar}) such that the update $\langle -, i \rangle \in v$. 
In this case, $\langle -, i \rangle \in \mathtt{RECV}$ at $z$ (line 7 -- Algorithm \ref{alg_recon}). Consequently, $i$ sent a 
message to $z$ with the leave request by executing the $leave$ operation (lines 3-4 -- Algorithm \ref{alg_recon}).
% $\strut\hfill \Box$
\end{proof}

\section{Discussion}
\label{discussao}

% \vspace{-1mm}
\subsection{DynaStore vs. \textsc{FreeStore}}
\label{comparacao}

% (\textbf{Alysson: VERIFICAR})
This section discusses some differences between DynaStore~\cite{Agu11} and \textsc{FreeStore}.
% Although our focus is on comparing our approach with DynaStore, when adequate, we comment the relationship between our work and the recent proposed SpSn~\cite{Gaf15} and SmartMerge~\cite{Jen15}.
Although our focus is on comparing our approach with DynaStore, we comment the relationship between these works and the recent proposed SpSn~\cite{Gaf15} and SmartMerge~\cite{Jen15}. 

\subsubsection{Convergence Strategy} 
% \paragraph{Convergence Strategy} 

In DynaStore, the reconfiguration process generate a graph of views through which it is possible to identify a sequence of \emph{established} views (see Figure~\ref{fig:comparison}, left). 
A view that is not established works as an \emph{auxiliary view}, that must be accessed during r/w operations. 
For any established view $v$, the maximum number of views that can succeed it is $|v|$, i.e, each member of $v$ can generate a new view, moreover, new views representing 
the combinations of these views could also be generated. 
In contrast, reconfigurations in \textsc{FreeStore} install only a single sequence of views (see Figure~\ref{fig:comparison}, right). 
In this case, different generated sequences of views are organized in an unique sequence of installed views.
For each installed view $v$, our implementation of $\mathcal{L}$ bounds the number of generated view sequences to $|v| - v.q+ 1$. 
SpSn and SmartMerge also install a sequence of ordered configurations (views).
% 
% 
% %%%%%%%%%%%%%%%%%%%%%%%%%%%%%%%%%%%%%%%
% \begin{figure}[!h]
% %\vspace{-2mm}
% \begin{center}
% \subfigure[\footnotesize{Dynastore.}]{\label{fig:dyn}{\includegraphics[width=0.29\columnwidth]{figs/dynastore}}}
% % \vspace{-0.2cm}
% \subfigure[\footnotesize{\textsc{FreeStore}.}]{\label{fig:free}{\includegraphics[width=0.3\columnwidth]{figs/freestore}}}
% \end{center}
% % \vspace{-0.3cm}
% \caption{DynaStore vs. \textsc{FreeStore} convergence approaches.}
% \label{fig:dynvsfree}
% %  \vspace{-4mm}
% \end{figure}
% %%%%%%%%%%%%%%%%%%%%%%%%%%%%%%%%%%%%%%%

\subsubsection{Liveness}

% \paragraph{Liveness}
% % \label{live_diss}
% 
In DynaStore, a process executes a leave and halts the system.
 However, for any time $t$, there is a bound on the number of processes that can leave the system without compromising liveness.
 Let $F(t)$ be the set of processes that crashed until time $t$ and $J(t)$/$L(t)$ the set of pending joins/leaves at $t$, respectively.
 The liveness condition of DynaStore states that, fewer than $|V(t)|/2$ processes out of $V(t) \cup J(t)$ should be in $F(t) \cup L(t)$~\cite{Agu11}. 
 In contrast, in \textsc{FreeStore} a process that executes a leave should wait for the installation of the updated view (without itself).
 If this restriction is not respected, the leaving server is considered faulty.
 This approach specifies a bound on the number of leaves for \textsc{FreeStore}: fewer than $|V(t)|/2$ processes out of $V(t)$ should be in $F(t) \cup L(t)$. % (see Section~\ref{dyn_ass}). 
% % Finally, it is worth to notice that the DynaStore protocol does not specify when it is safe for the processes that implement the weak snapshot objects to leave the system, since these objects may be accessed by late processes to reach the last established view.
SpSn does not specify a liveness condition and SmartMerge uses policies to prevent the generation of unsafe views, which in practice restricts the number of allowed leaves.

% \vspace{-2mm}
% \subsubsection{Normal Case Execution}
\subsubsection{Normal Case Execution}
\label{df_nce}

In DynaStore, the reconfiguration process generate a graph of views through which it is possible to identify a sequence of \emph{established} views. 
 A view that is not established works as an \emph{auxiliary view}, but must be accessed during r/w operations. 
For each view (established or not), DynaStore associates a \textit{weak snapshot object} that is used to store updates.
For any view $v$, its weak snapshot object $wso_v$ is supported by a set $S_v$ of $|v|$ static SWMR registers, i.e., one register for each member of $v$. 
During a r/w operation on $v$, $wso_v$ must be accessed twice to verify if some update was executed on $v$.
Each access to $wso_v$ comprises two reads in each register of $S_v$.
Thus, to execute a r/w on $v$, it is necessary a total of $4|v|$ ($4$ sequential, $|v|$ parallel) quorum accesses, with two communication steps each. 
SpSn and SmartMerge use a similar approach: before execute each r/w a set with $|v|$ SWMR registers must be accessed to check for updates. 
In contrast, the overhead introduced by \textsc{FreeStore} on a r/w is the local verification if the client' view is equal to the current view of the servers.

% \vspace{-2mm}
% \subsubsection{R/W and Reconfiguration Concurrency}
\subsubsection{R/W and Reconfiguration Concurrency}
A r/w operation needs to traverse the graph of views generated by DynaStore to find a view where it is safe to be executed (the most up-to-date established view).
During this processing, each edge of the graph must be accessed in order to verify if there is a more up-to-date view.
For each view $v$, it is necessary to access its weak snapshot object, which requires $2|v|$ ($2$ sequential, $|v|$ parallel) quorum accesses. 
While the system is converging to some established view, the r/w operation does not terminate.
Similarly, in SpSn and SmartMerge, a r/w concurrent with a reconfiguration also needs to access intermediary views to found the most up-to-date installed view. 
On the other hand, in \textsc{FreeStore}, r/w operations are directed to the most up-to-date installed view $v$, without accessing any auxiliary view.
However, after some sequence of views for updating $v$ is obtained, the r/w operations are stopped and can only terminate after the installation of the most up-to-date view of this sequence. 

% (\textbf{Alysson: ESSE COMENTÁRIO DO ULTIMO PARÁGRAFO É BEM IMPORTANTE, SERÁ QUE NÃO DEVIA SER MUITO MAIS COMENTADO?})

% \vspace{-2mm}
% \subsubsection{Performance of Read/Write Operations}
\subsection{Performance of Read/Write Operations}

Table~\ref{table:comparacao} shows the number of communication steps demanded to execute a r/w operation with DynaStore (DS)~\cite{Agu11}, SmartMerge (SM)~\cite{Jen15}, SpSn~\cite{Gaf15} and \textsc{FreeStore} (FS), for a process that handles an updated or outdated view.
% When a process $p$ that handles an up-to-date view $v$ wants to execute a r/w operation in DynaStore it needs 2 accesses to $wso_v$, which requires 8 communication steps to execute 4 reads in each register in $S_v$; and 4 communication steps to read/write the value from/to $v$.
% In \textsc{FreeStore}, this operation would require the same number of communication steps of a static r/w protocol (Section \ref{rw_prt}).
A r/w operation in \textsc{FreeStore} requires at most a third of the number of communication steps required to execute it in DynaStore or SpSn and at most a half of the steps required in SmartMerge.
More important, it matches the performance of static protocols in the absence of reconfigurations. Moreover, some steps of DynaStore, SmartMerge and SpSn are to access a set of static SWMR registers, while 
in \textsc{FreeStore} each step acesses only a single register, demanding few messages.

% \begin{table}[!ht]
% \begin{center}
% \footnotesize{
% \begin{tabular}{|c|cc|cc|}
%   \hline
%     & \multicolumn{2}{c}{\emph{Updated View}} & \multicolumn{2}{|c|}{\emph{Outdated View}}\\
% %  \hline
%       &  DynaStore & \textsc{FreeStore}  & DynaStore & \textsc{FreeStore}\\
%   \hline \hline
%     \emph{Read} & 12 & 2/4 & 19 & 4/6 \\
%        
%   \hline
%     \emph{Write}  & 12 & 4 & 19 & 6 \\
%       
%   \hline
% 
% \end{tabular}
% %  \end{spacing}
%  }
% \end{center}
% % \vspace{-8mm}
% \caption{Number of communication steps of r/w operations.}
% \label{table:comparacao}
% % \vspace{-5mm}
% \end{table}

% Table~\ref{table:comparacao} also shows numberss for a process $p$ that handles an outdated view $v$.
% DynaStore needs 6 communication steps to access each edge of the graph of views.
% To simplify our analysis, we consider that only one edge need to be accessed to find the up-to-date view $w$ (the best case). 
% In this case, $p$ needs 6 communication steps to find $w$, 12 communication steps to execute a r/w operation in $w$ and 1 communication step to notify other processes about the up-to-date view.
% In \textsc{FreeStore}, 2 additional steps are required to get $w$ from the servers in $v$ since one phase of the protocol is restarted.

\vspace{-1mm}
\begin{table}[!ht]
 \centering
 \caption{Communication steps of r/w operations.}
 \vspace{-1mm}
 \label{table:comparacao}

\footnotesize{
\begin{tabular}{|c|cccc|cccc|}
  \hline
    & \multicolumn{4}{c}{\emph{Updated View}} & \multicolumn{4}{|c|}{\emph{Outdated View}}\\
%  \hline
      &  DS & SM & SpSn & FS  & DS & SM & SpSn & FS\\
  \hline \hline
    \emph{Read} & $12$ & 8 & $14$ & $2/4$ & $19$ & 16 & $22$  &  $4/6$ \\
       
  \hline
    \emph{Write}  & $12$ & 8 & $18$  & $4$ & $19$ & 16 & $26$  & $6$ \\
      
  \hline

\end{tabular}
%  \end{spacing}
 }

\end{table}
\vspace{-5mm}

\subsection{Consensus-free vs. Consensus-based Reconfiguration}

% In this section we study the performance of consensus-free and consensus-based reconfiguration protocols for atomic fault-tolerant storage.
Table \ref{table:approaches} shows the number of communication steps required for processing a reconfiguration using consensus-free (DynaStore, SmartMerge, SpSn and \textsc{FreeStore} with $\mathcal{L}$) and consensus-based (RAMBO and \textsc{FreeStore} with $\mathcal{P}$) algorithms. 
We present the number of communication steps for the best case scenario -- when all processes propose the same updates in a reconfiguration -- and for the worst case -- 
when each process proposes different updates.
In order to simplify our analysis, we consider that no reconfiguration is started concurrently with the one we are analyzing.
Similarly, we assume a synchronous execution of the Paxos protocol~\cite{Lam98}, which requires only three communication steps, for consensus-based algorithms.

\vspace{-1mm}
\begin{table}[!ht]
 \centering
 \caption{Communication steps for reconfiguration.}
 \vspace{-1mm}
 \label{table:approaches}
 \footnotesize{
\begin{tabular}{|c|c|c|}
  \hline
    \emph{Consensus-free approaches}  &  \emph{Best Case} & \emph{Worst Case} \\ \hline
    \hline 
    DynaStore~\cite{Agu11} & $23$  & $18|v| + 5$  \\ \hline
    SmartMerge~\cite{Jen15}  & $11$ &  $6|v|+5$ \\ \hline
    SpSn~\cite{Gaf15}  & $14$ &  $8|v|+6$ \\ \hline
    \textsc{FreeStore} with $\mathcal{L}$  & $4$ &  $7|v| - 2v.q - 1$ \\ \hline
%    \textsc{FreeStore} with $\mathcal{L}$ (optimized)  & 3 & $5|v| - v.q - 1$ \\ \hline
    \hline \hline
    \emph{Consensus-based approaches}  &  \emph{Best Case} & \emph{Worst Case} \\ \hline
    \hline
    RAMBO~\cite{Gil10} & $7$ & $7$  \\ \hline
    \textsc{FreeStore} with $\mathcal{P}$  & $5$ & $5$ \\ \hline
\end{tabular}
}
     \vspace{-2mm}
\end{table}

\textsc{FreeStore} reconfiguration is significantly more efficient than previous consensus-based and consensus-free protocols.
\textsc{FreeStore} with $\mathcal{P}$ requires two less communication steps than RAMBO and, more importantly, \textsc{FreeStore} with $\mathcal{L}$ outperforms other consensus-free approaches by almost an order of magnitude.
In particular, \textsc{FreeStore} with $\mathcal{L}$ presents the best performance among all considered reconfiguration protocols in the best case, which is expected to be the norm in practice.
An open question is if this number constitutes a lower bound.

\section{Conclusions}

This paper presented a new approach to reconfigure fault-tolerant storage systems, which clarify the differences between relying or not to consensus for agreement in the next view to be installed. 
The main result is a protocol that is simpler and cheaper (in terms of communication steps for either r/w operations or reconfigurations) than already proposed solutions 
and that fully decouples the execution of r/w and reconfigurations. %, imposing a negligible overhead to the static ABD protocol.
Other interesting result is that in the best case, consensus-free reconfigurations are cheaper than the consensus-based ones.

A final contribution of this work is the introduction of a new abstraction called view generator.
We believe that exploring different instantiations of this abstraction and their properties is an important avenue for future work.

% % \begin{spacing}{0.85}
% \bibliographystyle{IEEEtran}
% \bibliography{referencias}
% % \end{spacing}

\end{document}